\newcommand{\card}[1]{\lvert #1 \rvert}
\newtheorem{observation}{Observation}
\newtheorem{example}{Example} 
\newtheorem{lemma}{Lemma} 
\newtheorem{theorem}{Theorem} 
\newtheorem{proposition}{Proposition} 
\newtheorem{corollary}{Corollary} 
\theoremstyle{definition}
\newtheorem{definition}{Definition}[section]
\newcommand{\BibTeX}{B\kern-.05em{\sc i\kern-.025em b}\kern-.08em\TeX}
\newcommand{\bd}{\partial}
\newcommand{\fami}{\mathbb{F}}
\newcommand{\absl}[1]{\lVert #1 \rVert}
\begin{document}

\begin{frontmatter}


\paperid{5755} 


\title{Optimal Candidate Positioning in Multi-Issue Elections}


\author[A]{\fnms{Colin}~\snm{Cleveland}\orcid{0000-0003-0909-7518}\thanks{Corresponding Author. Email: colin.cleveland@kcl.ac.uk}\footnote{Equal contribution.}}
\author[A]{\fnms{Bart}~\snm{de Keijzer}\orcid{0000-0001-9465-0837}\footnotemark}
\author[A]{\fnms{Maria}~\snm{Polukarov}\orcid{0000-0002-7421-3012}} 

\address[A]{King's College London}



\begin{abstract}
We study strategic \emph{candidate positioning} in multidimensional
spatial‑voting elections.  Voters and candidates are represented as
points in~$\mathbb{R}^{d}$ and each voter supports the candidate that is closest under a distance induced by an $\ell_{p}$‑norm.  We prove that computing an optimal location for a new candidate is \mbox{NP‑hard} already against a \emph{single} opponent, whereas for a constant number of issues the problem is tractable: an $O(n^{d+1})$ hyperplane‑enumeration algorithm and an $O(n\log n)$ radial‑sweep routine for~$d=2$ solve the task exactly.  We further derive the first approximation guarantees for the general multi‑candidate case and show how our geometric approach extends seamlessly to positional‑scoring rules such as $k$‑approval and
Borda.  These results clarify the algorithmic landscape of multidimensional spatial elections and provide practically implementable tools for campaign strategy.
\end{abstract}

\end{frontmatter}


\maketitle              

\section{Introduction}


In an ideal democracy, elections are a dynamic exchange between voters and candidates, not merely a ritual for selecting leaders. Voters convey their policy preferences, and candidates adjust their positions in response. Determining an optimal stance is challenging in a competitive, multi-issue space. We address this using a spatial voting framework to model voters and candidates, aiming to analyse computational complexity and design algorithms for optimal candidate positioning.

We use a natural spatial voting framework of \citet{spatialvoting} that models voters' opinions and candidates' policies within the $d$-dimensional space $\mathbb{R}^d$, where each dimension represents a quantifiable issue. For example, one coordinate might represent income tax rate, with a voter's position reflecting their ideal rate. If there are $d$ relevant issues, each is represented by a coordinate in $\mathbb{R}^d$. Each voter then prefers the candidate whose policy position is closest to their ideal point, according to a given distance measure (such as the Euclidean distance).

Indeed, when voters evaluate candidates, it is natural to assume that they prefer the candidate whose proposed policies are ``closer'' to their own ideology. While the distance measure used does not need to adhere to a specific form, distance measures induced by unweighted and weighted $\ell^p$ norms with $p \geq 1$ are commonly employed in computational social choice research (e.g., \citet{lu_manipulating_2019,estornell_election_2020,wu_manipulating_2022,imber_spatial_2024}), due to both their mathematically appealing properties and their intuitive interpretation. 



In general voting scenarios, a voter's favourite candidate may not always get their vote due to strategic voting (e.g., \citet{myatt_theory_2007}). However, in this work, we assume voters are non-strategic and vote truthfully for their favourite candidate. Instead, we focus on candidates' strategic behaviour to optimize their positions to win. Unlike most studies where candidates' positions are fixed (e.g., \citet{lu_manipulating_2019,estornell_election_2020}), we explore a setting where candidates can change their positions, similar to the perception-changing manipulations by ~\citet{wu_manipulating_2022}.


\subsection*{Our Contributions}

Our contributions are as follows:

\begin{itemize}
    \item \textbf{Single-opponent setting (Section~\ref{sec:onecandidate}).}  
    We provide several $\mathsf{NP}$-completeness results, as well as algorithms that solve the optimal positioning problem in polynomial time when $d$ (the dimension of the issue space) is fixed. Our main algorithm runs in time $O(n^{d+1})$, and we additionally present an $O(n \log n)$ algorithm for the case $d=2$. These algorithms are compatible with $\ell^p$-norm distance measures, with $p > 1, p \in \mathbb{N}$,\footnote{The condition that $p \in \mathbb{N}$ is imposed in order to be compatible with computation over the rational numbers: we assume that the coordinates of the candidates and voters in the input are rational, so that the $\ell^p$ distance between two points is always rational when $p \in \mathbb{N}$. Our results straightforwardly carry over to the case where $p \not\in \mathbb{N}$, although it would require appropriately relaxing the standard model of computation to allow for computation with arbitrary real or algebraic numbers.} and are based on a translation of the problem into enumerating the set of \emph{connected} regions of a central hyperplane arrangement \cite{stanley2007hyperplane} in $\mathbb{R}^d$.

    \item \textbf{Geometric perspective.}  
    We show that solving the candidate positioning problem is equivalent to finding a point in $\mathbb{R}^d$ contained in the interior of a maximum number of given $d$-dimensional balls, specified by their centre and radius. Algorithms for determining whether such a point exists, under the Euclidean norm, have been proposed by \citet{lushchakova_geometric_2020}, \citet{crama_complexity_1995}, and \citet{cramaibaraki}. While correct in their original contexts, these are not directly applicable to our voting setting, which introduces additional challenges. In Section~\ref{sec:manycandidates}, we provide more details and resolve these issues by adapting Lushchakova’s algorithm.


    \item \textbf{Approximation algorithms.}  
    We make two observations:  
    (1) against a single opponent, there is a straightforward $2$-approximation algorithm, which only needs to evaluate two locations in the issue space; and  
    (2) our algorithm for the single-candidate case yields a polynomial-time $m$-approximation algorithm for the multi-candidate case, where $m$ is the number of candidates.


    \item \textbf{Extension to other voting systems.}  
    Throughout the main sections, we assume plurality voting, i.e., each voter casts a single vote for the closest candidate. In Appendix~\ref{apx:generalisations}, we describe a generalisation to positional voting systems (where voters assign scores to multiple candidates), and we show that our model and results extend naturally to this setting.


\end{itemize}



\section{Related work}\label{sec:relwork}
\emph{Spatial voting} 
refers to a body of research focused on electoral models that capture voters' perceptions of candidates on various political issues and their corresponding voting behaviours. Specifically, much of the work in this direction has explored and analysed the relationships between voting behaviour and perceptions of political issues.

In theoretical studies of political systems, it is often assumed that voters will choose candidates based on their proposed policies~\cite{powell_elections_2000}. However, empirical political studies have also examined real-life case studies that consider candidates' personality traits and their impact on election outcomes, such as in~\citet{avery_state-level_2015}. This line of research, particularly highlighted in the paper by \citet{falcao_we_2023}, challenges the idea that candidates should be represented solely by their stances on political issues. Importantly though, note that aspects such as perceived personality traits can be naturally incorporated as additional dimensions in a multi-dimensional issue space, thus justifying the approach of modelling a candidate as a point within such space. Moreover, \citet{bogomolnaia_euclidean_2007} point out, in particular, that when the dimension of the issue space, $d$, is sufficiently large, spatial voting can encompass every complete preference profile.


Spatial voting in political economics has been introduced by  ~\citet{570930af-c28a-3bc3-be3a-5052d0478611}, who formulated a spatial model in an economics context, and which was later translated by \citet{downs} into voting theory, yielding the \emph{Hotelling-Downs} framework. Further analysis of numerous election models based on the Hotelling-Downs model have since been carried out, e.g.,~\citet{feldman2016variations,brusco2012hotelling}. Arrow's work \cite{arrow_social_1951} proved the existence of optimal position choices for candidates embedded in a $1$-dimensional Euclidean space, as implied by the Median Voter Theorem.

The study of the higher-dimensional space $\mathbb{R}^d$, which forms the basis in our present study, dates back to the 1976 work of \citet{spatialvoting}, where the authors focused on characterising certain equilibrium points such as plurality and Condorcet points. In the computer science community, the problem of algorithmically constructing such points has been considered by \citet{wuetal}, \citet{linetal}, and \citet{bergplurality}. 
We also refer to the book of Enelow et al.~\cite{enelow_spatial_1984} for a unified treatment of classical work on spatial voting in strategic settings with more than one dimension. 
 


Specifically,  \citet{lu_manipulating_2019,estornell_election_2020} investigated a type of manipulation called \emph{issue selection}, wherein the manipulator can influence voters' preferences by either selecting only a subset of issues or adjusting the weights assigned to different issues. The concept of \emph{perception changing} was explored by Wu~\cite{wu_manipulating_2022}, who considered moving the candidates' positions within a certain range to potentially alter the election outcome. Another avenue in spatial voting research involves scenarios with \emph{incomplete information}: \citet{imber_spatial_2024} analyse the challenge of determining a potential winner when voters' positions are not fully known.

In the single-opponent setting, several prior works are indirectly related. From the developments in Section~\ref{sec:onecandidate} and the observations of \citet{wuetal}, this case is connected to computing a point of maximum \emph{Tukey depth}, and algorithms such as \citet{chan_mechanism_2021,bremnertukey,liutukey} can find a location that maximises the number of voters preferring the candidate to the opponent (i.e., the size of the candidate’s win set). These approaches, however, are tied to the Euclidean norm and do not apply to general distance measures. Our results also imply a translation to the \emph{densest hemisphere problem} studied by \citet{densesthemisphere}, which likewise assumes Euclidean distance and does not output a candidate location in $\mathbb{R}^d$. Finally, \citet{deliberative} propose an algorithm that, when adapted to our setting, yields the largest win set for a fixed opponent, but again only for the Euclidean norm and without producing the corresponding position. \footnote{The algorithm proposed in \cite{deliberative} is, just like ours, based on hyperplane arrangements. We want to point out that our approach is nonetheless fundamentally different: Their approach appeals to an algorithm following from a proof in \cite{kearnsvazirani}, which uses hyperplane arrangements in general position, rather than central hyperplane arrangements used in our method.} The algorithms in the above works all make use of invariance properties of the Euclidean norm and do not seem to generalise to other distance measures. The algorithms we propose in the present work can be used in conjunction with a wide spectrum of distance measures induced by $\ell^p$-norms, and are able to output a concrete optimal location for the new candidate (where, in addition, we take into account bit complexity; i.e., the coordinates output by the algorithm require a polynomially bounded number of bits). 

In most algorithmic studies, manipulation problems become $\mathsf{NP}$-hard once the number of issues is unbounded. A notable real-world contrast is provided by Lichtman’s ``Keys to the White House’’ model~\cite{lichtman_predicting_2020}, which suggests that only a small set of pivotal issues typically determines election outcomes. This is consistent with the idea that voters have limited attention for political issues, making a low-dimensional issue space a plausible practical assumption. In such cases, one can model the election by setting $d$ to the number of pivotal issues and projecting the data onto those dimensions, discarding the rest.


Beyond the context of spatial voting, our setting is also related to the work of \citet{crama_complexity_1995} where the problem was framed in a commercial product placement context, with the goal of optimising one's market share with respect to a set of competing products. The complexity of finding a maximum-cardinality, that is, a point contained in as many balls as possible, under both $\ell^2$ and $\ell^{\infty}$ norms has been proven to be $\mathsf{NP}$-complete~\cite{crama_complexity_1995}. In the present work, we establish the $\mathsf{NP}$-completeness of a special case of this problem, where there is only one candidate (or product), using a simpler proof. Furthermore, we strengthen it to hold for $\ell^p$-norms with any $p$ satisfying $1 < p < \infty$. Another context where ball-arrangement problems similar to ours show up is in coalition formation in deliberation spaces, studied in \citet{deliberativefirst} and \citet{deliberative}, where in the latter work an $\mathsf{NP}$-completeness result similar to the one in \cite{crama_complexity_1995} is established. Lastly, the related densest hemisphere problem has been shown to be $\mathsf{NP}$-complete in the work of \citet{densesthemisphere}. Our  $\mathsf{NP}$-completeness proofs drastically simplify all the above $\mathsf{NP}$-completeness results, and generalise those results to arbitrary $\ell^p$-norms.



As mentioned, the studies of \citet{crama_complexity_1995,lushchakova_geometric_2020,cramaibaraki} all offer algorithms for problems closely related to ours: In \citet{crama_complexity_1995} and \citet{cramaibaraki}, algorithms for finding a maximum-cardinality non-empty intersecting family of closed balls is proposed, for the case of the $\ell^2$-norm. Subsequently, Lushchakova presents an algorithm for finding a point in the intersection of a given set of balls, along with a proof of correctness and runtime analysis; in Section~\ref{sec:manycandidates}, we further discuss these algorithms in the context of our setting.

Lastly we mention Voronoi games, introduced by Ahn et al. \cite{voronoigames}, as a domain in computational geometry to which our work relates. These are competitive facility location problems in which two players iteratively place their facilities in a given space, with the objective of maximising their market share. Our results for the case of a single competitor, can be viewed as studying the optimal response of the second player in a one-round discrete Voronoi game (see  \citet{oneroundvoronoi}) where each player has a single facility.


\section{Preliminaries}

\paragraph{High‑level intuition.}  Imagine each voter draws a translucent \emph{influence ball} centred at her ideal point whose radius stretches up to the nearest existing candidate.  A newcomer wins exactly those voters whose balls all overlap at her chosen position.  Our optimisation task is therefore to find a point contained in the interior of as many of these balls as possible.  The formal definitions below make this picture precise.

We use $[a]$ to denote the set of natural numbers $\{1,\ldots,a\}$. For a family of sets $S$, we write $\cap S$ and $\cup S$ to refer to the intersection of all sets in $S$ and the union of all sets in $S$, respectively. 

The objects we consider in this study are pairs of point sets, each of a finite size, in $\mathbb{R}^d$: This models an election scenario where the point sets correspond to positions of candidates and voters in an issue space. A voter is assumed to vote for the candidate who is located closest to them with respect to the distance measure induced by a given norm (in particular, one may think of the standard Euclidean distance). We formally define an \emph{election instance} as follows.

\begin{definition}
    An \emph{election instance} is a tuple $(d,n,m,x,t)$, where $d,n,m \in \mathbb{N}$, $x$ is a list of $n$ points in $\mathbb{Q}^d$, and $t$ is a list of $m$ points in $\mathbb{Q}^d$, disjoint from $x$.
    We refer to $d$ as the \emph{dimension of the issue space}, or simply \emph{dimension}. We refer to $n$ and $m$ as the \emph{number of voters} and the \emph{number of candidates} respectively. The elements in the sets $[n]$ and $[m]$ will be referred to as \emph{voters} and \emph{candidates} respectively. For $i \in [n]$ ($j \in [m]$) the \emph{location of voter $i$ (candidate $j$)} is the point $x_i$ ($t_j$).
\end{definition}

As a notational convention, we will use $i$ to denote a voter and $j$ to denote a candidate. 
We will furthermore work with various norms on $\mathbb{R}^d$, which we will denote by either $\rVert\cdot\lVert$ or $p$.
    



We investigate the question of choosing an optimal location in the issue space from the point of view of a candidate who is purely interested in achieving the most favourable outcome of an election and is not constrained in any way in choosing their position in the issue space. This can be stated as an optimisation problem, for which we aim to design algorithms that solve it and study its computational complexity. What the candidate considers as a ``most favourable'' outcome is context-dependent, but one of the most natural objectives that a candidate could consider is to maximise the total number of votes received (where a voter votes for the candidate closest to them), which is the objective we primarily focus on in this study. Another meaningful objective, which we will also address (albeit less extensively) would be positioning oneself so that the number of votes received is larger than that of as many as possible other candidates. The two objectives are not equivalent, as can be seen from the following example. 

\begin{example}
Consider an election instance $(d,n,m,x,t)$ where $d=1$, $n = 9$, $m = 3$. The locations of the three candidates on the line are $t = (2,20,26)$. The voter locations are $x = (1,1,1,3,3,22,22,24,24)$. Note that, before a newly arriving candidate chooses any position, Candidate 1 receives 5 votes, and Candidates 2 and 3 both receive 2 votes. If a new candidate positions themselves at location 23, then they receive 4 votes, thereby maximising their number of received votes. However, in that case, Candidate 1 will still receive more votes than the new candidate. On the other hand, if the new candidate positions themselves at location 1, the number of votes received by them will be 3, but all other candidates will end up with fewer votes than the new candidate.
\end{example}

We can consider these optimisation problems from either the perspective of a newly arriving candidate (who has to choose a position in $\mathbb{R}^d$ facing a given election instance), or an existing candidate in $[m]$ of the given election instance, who is looking to reposition. It is straightforward to see that these perspectives are equivalent, although, under the latter interpretation, our work can be regarded as a study on computing best responses in an election game where candidates are players who aim to position themselves most favourably. 

Let $I = (d,n,m,x,t)$ be an election instance. Under our model, a voter $i \in [n]$ votes for the candidate $j \in [m]$ whose policy $t_j$ is closest to their location $x_i$. In our study, we choose not to focus on situations where ties need to be broken, so we will assume for the sake of simplicity that a voter $i$ always breaks ties in favour of an existing candidate (rather than the new candidate) in case the new candidate is tied with one or more existing candidates with respect to their distance to $i$. Thus, if the newly arriving candidate wants to win any vote, they should not place themselves precisely on another candidate's location, and if it wants to win the vote of a particular voter $i$, they have to choose their policy at a point strictly closer to $x_i$ than any existing candidate's location. The area in which the new candidate has to position themselves to receive $x_i$'s vote can thus be expressed as the open ball $B_i = \{ t' \in \mathbb{R}^d : \lVert x_i - t' \rVert < \lVert x_i - t_k \rVert, k \in [m] \setminus\{j\} \}$. That is, the open ball centred at $x_i$, with radius $\min\{\lVert x_i - t_k\rVert\ :\ k \in [m]\}$, with $\lVert \cdot \rVert$ denoting a given norm. This gives rise to the following key notions.
\begin{definition}\label{def:keynotions}
For an election instance $I = (d,n,m,x,t)$ and norm $p$, the \emph{critical region} of a voter $i \in [n]$, is the open ball $B_i(I, p) = \{ t' \in \mathbb{R}^d : p(x_i - t') < p(x_i - t_k), k \in [m] \setminus j \}$. 
For a set of voters $S \subseteq [n]$ we write $\mathcal A^K(S,I,p)$ to refer to the intersection area $\cap\{B_i(I,p)\ :\ i \in S\}$. The instance $I$ and/or norm $p$ will usually be clear from context, in which case we simplify notation and write $B_i$ or $B_i(p)$ instead of $B_i(I,p)$. Similarly, we may write 
$\mathcal A^K(S)$ rather than 
$\mathcal A^K(S,I,p)$. 
\end{definition}


To build some further intuition about the structure of an election instance, we may think of it as a  specific arrangement of balls in $\mathbb{R}^d$ and it is useful to briefly consider exactly which such arrangements correspond to election instances. Since no candidate lies inside any of the balls $B_1, \ldots, B_n$, and for every ball there is a candidate on the boundary of it, we arrive at the following observation.

\begin{observation}
Let $\mathcal{B}$ be a set of $n$ balls in $\mathbb{R}^d$, under a given norm $p$. There exists an election instance $I = (d,n,m,x,t)$ such that $\{B_i(I,p)\ :\ i \in [n]\} = \mathcal{B}$ if and only if for all $B \in \mathcal{B}$, there exists a point on the boundary of $B$ that is not in $\cup(\mathcal{B}\setminus \{B\})$.
\end{observation} 




\noindent The problems we study can be formalised as the following: \vspace{1mm} \\
\fbox{\parbox{1.0\columnwidth}{
		\textsc{Optimal Candidate Positioning($U$) (OCP($U$))} \\[1ex]
		\begin{tabular}{ l p{9.6cm} }
			\emph{Input:} & An election instance $I = (d,n,m,x,t)$.
                \\ \vspace{1mm}
            \emph{Parameters:} & A function $U$ that maps an election instance and \\ & point in  $\mathbb{R}^d$ to $\mathbb{N}$. \\ \vspace{1mm}
		\emph{Goal:}  & Find a point $t \in \mathbb{R}^d$ that maximises $U$.
		\end{tabular}
}} 
\vspace{1mm} \\
\noindent In particular, we take for $U$ the vote count function $\nu_p$ as our primary focus, but also will more briefly consider the rank function $R_p$, with respect to a norm $\ell^p$ on $\mathbb{R}^d$, defined as follows. Let $V_p(t) = \{i \in [n] : t \in B_i(p)\}$ be the set of voters who would vote (under norm $p$) for a candidate positioned at $t \in \mathbb{R}^d$.
\begin{align*}
& \nu_p(t)= \card{V(t)} \\
& R_p(t) = \card{\{j \in [m] : }\{i \in [n] \setminus V_p(t): t_j \in Cl(B_i(p))\}\card{ < \nu_p(t)\}}.
\end{align*}
We drop the subscripts~$p$ when clear from context, or when $p$ is irrelevant to the discussion. While our main focus is on the vote-count objective $\nu$, several results also apply to the rank objective $R$, which we define in its tie-aware form (ties in vote counts between the new and existing candidates are resolved by a fixed rule). Both the radial-sweep algorithm for $d=2$ and the hyperplane-region enumeration for $d>2$ enumerate all full-dimensional regions in~$\mathbb{R}^d$ corresponding to distinct voter allocations, ensuring correctness under any tie-handling convention. More generally, $U$ can model other voting rules, such as instant-runoff or arbitrary scoring rules where voters may assign scores even to candidates that are not their closest; see Appendix~\ref{apx:generalisations}.

We also study the decision variant of the above problem: we define OCP-D($U$) as the problem where we are given a pair $(I,k)$, with $I$ an election instance and $k \in \mathbb{N}$, and where we must decide if there exists a point $t' \in \mathbb{R}^d$ such that $U(t') \geq k$.

\section{Competing against a single candidate ($m=1$)}
\label{sec:onecandidate}
We first consider the case $m = 1$, modelling a two-party scenario where a candidate has only a single competitor. In this case, the closures of balls $B_1, \ldots, B_n$ all intersect at $t_1$, and we can assume without loss of generality that $t_1 = \mathbf{0} = (0,\ldots, 0)$.

It turns out working with hyperplanes is useful for analysing the case where $m=1$. We use the following notation: 
\begin{definition}
Let $I = (d,n,m,x,t)$ be an election instance with $m=1$, assume that $t_1 = \mathbf{0}$, and consider a voter $i \in [n]$. 
For a norm $p$, we write $h_{i}(p)$ to denote the hyperplane tangent to $B_i(p)$ at $\mathbf{0}$, and $H_{i}(p)$ to denote the open halfspace with boundary $h_{i}(p)$ containing $B_i(p)$. When clear from context, we drop the reference to the norm $p$ and write simply $h_i$ and $H_i$.   
\end{definition}
Note that when we work with the $\ell^2$ norm, then $x_i$ is actually the normal vector of $h_i$, so the set of points that comprises the hyperplane is given by a linear equation: 
$h_i = \{y : \sum_{j\in[d]} (x_i)_jy_j = 0\}$.


The following lemma states a useful property of the relationship between the critical region and its tangent hyperplanes at $\mathbf{0}$, in case the critical region is convex and has a differentiable boundary (which holds in particular when we use the $\ell^p$-norm with $1 < p < \infty$). For the sake of generality, we state it in terms of arbitrary convex sets with a differentiable boundary.
\begin{lemma}
\label{L:COV-PLN}
Let $B_1, \ldots, B_r \subseteq \mathbb{R}^d$ be a collection of $r$ open $d$-dimensional convex sets, for which there is a unique tangent hyperplane at any point on the boundary, and for which the closures intersect at a point $c \in \mathbb{R}^d$. For $i \in [r]$, let $h_{i,c}$ be the hyperplane at $c$ tangent to $B_i$, and let $H_{i,c}$ be the corresponding open half-space (i.e. such that $h_{i,c}$ forms its respective boundary) containing $B_i$. The intersection $\cap\{H_{i,c} : i\in [r]\}$ is non-empty if and only if the intersection $\cap \{B_i : i \in [r]\}$ is non-empty.
\end{lemma}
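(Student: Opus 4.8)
The plan is to establish the two implications separately, with the forward direction (from the half-space intersection to the ball intersection) carrying essentially all the content. Throughout, note that since a tangent hyperplane $h_{i,c}$ at $c$ is assumed to exist, $c$ lies on $\partial B_i$ for every $i$.

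For the easy direction, suppose $\cap\{B_i : i \in [r]\} \neq \emptyset$. By the defining property of a supporting hyperplane, each $B_i$ lies on one closed side of $h_{i,c}$; since $B_i$ is open it cannot meet $h_{i,c}$ at all (a neighbourhood of any such point would spill onto both sides of $h_{i,c}$), so in fact $B_i \subseteq H_{i,c}$. Intersecting over $i$ gives $\emptyset \neq \cap\{B_i\} \subseteq \cap\{H_{i,c}\}$, as required.

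For the hard direction, I would fix a point $z \in \cap\{H_{i,c} : i \in [r]\}$, translate so that $c = \mathbf{0}$, and set $v = z$. The crux is the following local claim: for each $i$, since $v$ lies in the \emph{open} half-space $H_{i,c}$, the ray $\{\tau v : \tau > 0\}$ meets $B_i$, i.e.\ $c + \tau_i v \in B_i$ for some $\tau_i > 0$. Granting this, convexity finishes the argument: since $c \in \mathrm{cl}(B_i)$ and $c + \tau_i v \in B_i$ with $B_i$ open and convex, the entire half-open segment $\{c + s v : 0 < s \le \tau_i\}$ lies in $B_i$ (the standard fact that the segment from an interior point to a closure point, excluding the closure endpoint, stays in the interior). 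Taking $\tau^\ast = \min_{i \in [r]} \tau_i > 0$ then puts $c + \tau^\ast v$ in every $B_i$ simultaneously, so $\cap\{B_i\} \neq \emptyset$. To prove the local claim I would argue by contradiction: if the ray $\{\tau v : \tau > 0\}$ misses the open convex set $B_i$, then these two disjoint nonempty convex sets can be separated by a hyperplane $\{y : \langle w, y\rangle = \beta\}$. Evaluating the separating inequalities at $\mathbf{0} \in \mathrm{cl}(B_i)$ and at $\tau v$ as $\tau \to 0^+$ forces $\beta = 0$, so $\{y : \langle w, y\rangle = 0\}$ is a hyperplane through $c$ supporting $B_i$. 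By the \emph{uniqueness} of the tangent hyperplane at $c$, $w$ must be a nonzero multiple of the normal of $h_{i,c}$; testing the sign against an arbitrary point of $B_i \subseteq H_{i,c}$ shows this multiple is negative, which makes $\langle w, v\rangle < 0$ and contradicts $v$ lying on the $\ge \beta$ side. Hence the ray meets $B_i$.

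The main obstacle is precisely this local claim, and in particular its dependence on the uniqueness of the tangent hyperplane. Without that hypothesis the statement is false: at a vertex of a polytope a direction pointing strictly into some supporting half-space need not be a feasible direction into the body. Uniqueness of the tangent hyperplane at $c$ — which holds for $\ell^p$-balls with $1 < p < \infty$ — is exactly what collapses the set of supporting hyperplanes at $c$ to a single one, and thereby guarantees that every direction pointing strictly into $H_{i,c}$ is realised as an actual entry direction into $B_i$.
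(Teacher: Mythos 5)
Your proof is correct and follows essentially the same route as the paper's: the easy inclusion $B_i \subseteq H_{i,c}$ for one direction, and for the other, showing that the ray (the paper uses the line) from $c$ toward a point of $\cap\{H_{i,c}\}$ must enter each $B_i$ — via the separation theorem plus uniqueness of the tangent hyperplane at $c$ — and then concluding with the segment-to-a-closure-point convexity fact and a minimum over $i$ (the paper takes the intersection of the segments). Your write-up merely makes explicit the separation argument that the paper compresses into the parenthetical appeal to the hyperplane separation theorem, so there is no substantive difference.
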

\begin{proof}
    If $\cap \{B_i : i \in [r]\} \not = \varnothing$, this intersection is contained in all half-spaces \{$H_{i,c} : i\in [r]\}$, because $B_i$ is  contained in $H_{i,c}$ for all $i \in [r]$, according to our definition. The necessity direction holds.

    For sufficiency: If the $r$ half-spaces intersect, we may pick a point $c'$ in their intersection. Let $i \in [r]$. The line $L_{c,c'}$ passing through $c'$ and $c$ is not contained in $h_{i,c}$. Therefore (by uniqueness of $h_{i,c}$), $L_{c,c'}$ does not lie on a hyperplane that is tangent to $B_i$ at $c$. (Here, we use the fact that every line $L$ tangent to $B_i$ at $c$ must be contained in a hyperplane tangent to $B_i$ at $c$. This is a direct consequence of the hyperplane separation theorem.) Thus, $L_{c,c'}$ intersects with $B_i$ at a point $c_i' \not= c$, and $c_i'$ is in $H_{i,c}$ by definition of $H_{i,c}$ as the half-space containing $B_i$. By convexity of $B_i$, the entire line segment from $c_i'$ to $c$ (excluding $c$) is contained in $B_i$. 
    
    Let $L_{\cap}$ be the intersection of the following $r+1$ line segments: the line segment from $c$ to $c'$, and the line segments from $c$ to $c_i'$, for $i \in [r]$. As we established that $L_{\cap} \setminus c$ is contained in in all of $\{B_i : i \in [r]\}$ the claim follows.
\end{proof}

The set of all hyperplanes $\{H_{i,c} : i \in [n]\}$ forms a \emph{central} hyperplane arrangement in $\mathbb{R}^d$, i.e., all hyperplanes contain the point $\mathbf{0}$. This differs from the usual general-position assumption: in our setting, any subset of at least $d$ hyperplanes meets exactly at the origin. An optimal point must lie in the interior of a full-dimensional region; if it lies on a lower-dimensional face, it can be perturbed into an adjacent full-dimensional region without changing the set of voters gained. See Figure~\ref{fig:flows} for a two-dimensional example illustrating the correspondence between the arrangement of critical-region boundaries and the induced central hyperplane arrangement.

\begin{figure}[H]
    \centering
    \begin{subfigure}[H]{0.23\textwidth}
         \centering
         \resizebox{.8\textwidth}{!}{\begin{tikzpicture}

    \draw[fill=none,line width=5](0,5)  circle (5.0) node [black,yshift=-1.5cm]{};
    \draw[fill=none,line width=5,gray](4,3)  circle (5.0) node [gray,yshift=-1.5cm]{};
    \draw[fill=none,line width=5,yellow](6,-2.5)  circle (6.5) node [gray,yshift=-1.5cm]{};
    \draw[fill=none,line width=5,orange](4,-3)  circle (5) node [gray,yshift=-1.5cm]{};
    \draw[fill=none,line width=5,red](5/3,-12/3)  circle (13/3) node [gray,yshift=-1.5cm]{};

   \filldraw[blue] (0,0) circle (10pt) node[anchor=north west,text = blue]{};

\end{tikzpicture}}
         \caption{The boundaries of the critical regions of the five voters. The opponent candidate is visualised as the blue dot.}
         \label{fig:flow_exmp}
     \end{subfigure}
     \hfill
     \vspace{0.5cm} 
     \begin{subfigure}[H]{0.23\textwidth}
         \centering
         \resizebox{.8\textwidth}{!}{\begin{tikzpicture}
    \draw[line width=5] (-5,0) -- (5,0) {};
    \draw[line width=5,gray] (-3,4) -- (3,-4) {};
    \draw[line width=5,yellow] (2.5,6) -- (-2.5,-6) {};
    \draw[line width=5,orange] (3,4) -- (-3,-4) {};
    \draw[line width=5,red] (4,5/3) -- (-4,-5/3) {};

    \node[below, red, scale=5.0] at (4.5,2) {+};
    \node[below, red, scale=5.0] at (-4.5,-0.2) {-};

    \filldraw[blue] (2,-1) circle (3pt) node[anchor=north west,text = blue]{\Huge A};
    \filldraw[blue] (3,.8) circle (3pt) node[anchor=north west,text = blue]{\Huge B};


\end{tikzpicture}}
         \caption{The corresponding central hyperplane arrangement of all lines tangent to the critical regions at the opponent candidate's position.}
         \label{fig:flow_aux}
    \vspace{0.5cm}
     \end{subfigure}
     
    \caption{Visualisation of an example election instance with five voters and one candidate, together with the induced central hyperplane arrangement.\vspace{0.5cm}}
    \label{fig:flows}
\end{figure}
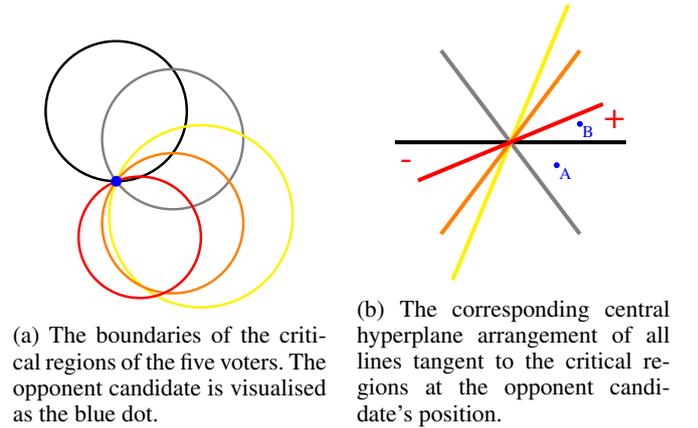


The following lemma characterises the hyperplane tangent to an $\ell^p$-norm ball, and will be useful in the subsequent developments. The proof can be found in Appendix \ref{apx:pballtangent}.

\begin{lemma}\label{lem:pballtangent}
Consider the open $\ell^p$-norm ball $B$ centred at a point $z \in \mathbb{R}^d$ that has $\mathbf{0}$ at its boundary, i.e., $B = \{y\ :\ \lVert y - z \rVert_p = \lVert z \rVert_p\}$. The hyperplane $h$ tangent to $B$ at $\mathbf{0}$ is described by $h = \{y \in \mathbb{R}^d : \sum_{j \in [d]} \text{sign}(z_j)\card{z_j}^{p-1} y_j = 0\}$.
\end{lemma}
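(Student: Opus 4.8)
\textbf{Proof plan for Lemma~\ref{lem:pballtangent}.}

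The plan is to compute the tangent hyperplane directly as the orthogonal complement of the gradient of the defining function of the ball at the point $\mathbf{0}$. Define $f(y) = \lVert y - z \rVert_p^p = \sum_{j \in [d]} \card{y_j - z_j}^p$. The boundary of $B$ is a level set of $f$, namely $f(y) = \lVert z \rVert_p^p$, and $\mathbf{0}$ lies on this level set by hypothesis. For $1 < p < \infty$ the function $t \mapsto \card{t}^p$ is differentiable on all of $\mathbb{R}$ with derivative $p\,\sign(t)\card{t}^{p-1}$ (including at $t = 0$, where the derivative is $0$), so $f$ is differentiable and the tangent hyperplane at $\mathbf{0}$ is $\{y : \nabla f(\mathbf{0}) \cdot y = 0\}$, provided $\nabla f(\mathbf{0}) \neq \mathbf{0}$.

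First I would record that $\partial f / \partial y_j = p\,\sign(y_j - z_j)\card{y_j - z_j}^{p-1}$, so that $\partial f / \partial y_j$ evaluated at $y = \mathbf{0}$ equals $p\,\sign(-z_j)\card{z_j}^{p-1} = -p\,\sign(z_j)\card{z_j}^{p-1}$. Dividing out the nonzero scalar $-p$, the hyperplane $\{y : \nabla f(\mathbf{0}) \cdot y = 0\}$ is exactly $\{y \in \mathbb{R}^d : \sum_{j \in [d]} \sign(z_j)\card{z_j}^{p-1} y_j = 0\}$, which is the claimed description. I also need $\nabla f(\mathbf{0}) \neq \mathbf{0}$: since $\mathbf{0}$ is on the boundary of a genuine $d$-dimensional ball centred at $z$, we have $z \neq \mathbf{0}$, so some coordinate $z_j \neq 0$, hence $\sign(z_j)\card{z_j}^{p-1} \neq 0$ and the gradient is nonzero; the equation therefore genuinely defines a hyperplane.

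The one point that needs a little care — and which I expect to be the main (mild) obstacle — is justifying that the level set of $f$ really does have a well-defined tangent hyperplane at $\mathbf{0}$ equal to the kernel of $\nabla f(\mathbf{0})$, rather than just quoting it. Since $B$ is a convex body (the open unit ball of a norm, translated), it has a supporting hyperplane at every boundary point by the hyperplane separation theorem, and because $f$ is $C^1$ with nonvanishing gradient there, the implicit function theorem shows the boundary is locally a $C^1$ graph whose tangent space is $\ker \nabla f(\mathbf{0})$; the supporting hyperplane must contain this tangent space and, being a hyperplane, must equal it. This pins down uniqueness of the tangent hyperplane (consistent with the hypothesis in Lemma~\ref{L:COV-PLN} for $1 < p < \infty$) and identifies it with the stated equation. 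I would keep this justification brief, since the computational content is just the elementary derivative of $\card{t}^p$.
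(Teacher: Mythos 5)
Your proposal is correct, and it reaches the same key computation as the paper — the coefficients $\sign(z_j)\card{z_j}^{p-1}$ arise as (a scalar multiple of) a gradient at $\mathbf{0}$ — but it justifies the identification of that kernel with the tangent hyperplane by a different mechanism. The paper works with the distance function $f(y)=\lVert y-z\rVert_p$ and uses convex-analysis machinery: it takes the supporting hyperplane of the epigraph of $f$ at $(\mathbf{0},f(\mathbf{0}))$, slices the epigraph with the horizontal subspace $\{y_{d+1}=f(\mathbf{0})\}$ to recover the ball, and observes that the restricted hyperplane remains supporting, hence tangent, with the claimed coefficients. You instead work with the $p$-th power $f(y)=\lVert y-z\rVert_p^p$, which is globally $C^1$ for $1<p<\infty$, compute $\nabla f(\mathbf{0})=-p\,(\sign(z_j)\card{z_j}^{p-1})_j$, and invoke the implicit function theorem on the level set together with the hyperplane separation theorem to conclude that the (unique) supporting hyperplane equals $\ker\nabla f(\mathbf{0})$. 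Your route is more elementary (no epigraph construction, no appeal to Rockafellar), sidesteps normalisation issues by using the $p$-th power, and makes explicit both the nondegeneracy condition $z\neq\mathbf{0}$ and the uniqueness of the tangent hyperplane, which is exactly the hypothesis Lemma~\ref{L:COV-PLN} needs; the paper's epigraph argument, in exchange, stays entirely within convex analysis and avoids the implicit function theorem. The only step you state without proof — that a supporting hyperplane at a point where the boundary is a $C^1$ graph must contain the tangent space — is standard (take a $C^1$ curve in the boundary through $\mathbf{0}$ with prescribed velocity; the supporting affine functional attains an interior maximum at $t=0$, so its derivative along the curve vanishes), and spelling out that one line would make the argument complete.
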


\subsection{Computational Hardness}

\textbf{Road‑map.}  We first reduce the homogeneous bipolar variant of \textsc{Max‑Fls} (\textsc{FLS}$^{hb}$) to the single‑opponent case under the Euclidean norm and then observe that the reduction is norm‑independent, which yields hardness for every $\ell_{p}$ with $1<p<\infty$.

\begin{definition}\label{def:fls}
FLS$^{hb}$, standing for ``Feasible Linear Subsystem (homogeneous bipolar version)'', is the computational problem where, given a matrix $A \in \mathbb{Q}^{n \times m}$, and a number $k \in \mathbb{N}$, such that all coefficiencts of $A$ are $\{-1,+1\}$-valued (i.e., $A$'s coordinates are bipolar), we have to decide whether there exists an index set $R \subseteq [n]$ of rows with $\card{R} \geq k$, such that there is a solution $\mathbf{x} \in \mathbb{R}^d$ to the homogeneous linear system $A_R \mathbf{x} > \mathbf{0}$, where $A_R$ is $A$ restricted to the rows indexed by $R$.
\end{definition}

With Lemma\ref{L:COV-PLN} and \ref{lem:pballtangent}, we show that even when $m=1$, the problem OCP-D($\nu_p$) is $\mathsf{NP}$-hard for all choices of $p \in (1,\infty)$. 

We prove this through a polynomial time reduction from a variant of the Maximum Feasible Linear Subsystem problem, which we denote by FLS$^{hb}$, which is an $\mathsf{NP}$-complete problem \cite{amaldi_complexity_1995}. This problem is first reduced to OCP-D($\nu_2$), for the case where $m = 1$. After that, we reduce OCP-D($\nu_2$) for an arbitrary rational-valued choice of $p$. An interesting aspect to the latter reduction is that, despite some analysis, the reduction itself turns out to be trivial and points out that the problems are equivalent in a certain precise sense, and independent of the value of $p$ used in the norm.

\begin{theorem}\label{thm:npcomplete}
OCP-D($\nu_2$) is $\mathsf{NP}$-complete, when restricted to $m=1$, and when all voter coordinates are $\{-1,+1\}$-valued.
\end{theorem}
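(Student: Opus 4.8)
The plan is to establish membership in $\mathsf{NP}$ and then $\mathsf{NP}$-hardness by reduction from FLS$^{hb}$. For membership, I would argue that a certificate is a specification of the full-dimensional region of the central hyperplane arrangement $\{H_i(p) : i \in [n]\}$ that contains the new candidate's location; since any region is determined by a sign vector over the $n$ hyperplanes, and since by Lemma~\ref{L:COV-PLN} a region is nonempty and contains a point in $\cap_{i\in S} B_i$ exactly when the corresponding halfspace intersection $\cap_{i\in S} H_i$ is nonempty, one can verify in polynomial time (by linear programming over the rationals, using the explicit linear descriptions of the $h_i$ from Lemma~\ref{lem:pballtangent} with $p=2$, where $h_i = \{y : \sum_j (x_i)_j y_j = 0\}$) whether at least $k$ of the halfspaces are simultaneously satisfiable with strict inequalities. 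I should also note that a witness point can be chosen with polynomially bounded bit complexity, which follows from standard LP basic-feasible-solution bounds.

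For hardness, the reduction takes an instance $(A, k)$ of FLS$^{hb}$ with $A \in \mathbb{Q}^{n\times d}$ having $\{-1,+1\}$ entries, and produces the election instance with $m=1$, opponent at $\mathbf{0}$, and one voter per row: voter $i$ is placed at $x_i = (A_{i,1}, \ldots, A_{i,d})$, so that each $x_i \in \{-1,+1\}^d$, matching the theorem's restriction on voter coordinates. The key correspondence is: under the $\ell^2$ norm, the critical region $B_i$ of voter $i$ is the open ball centred at $x_i$ through $\mathbf{0}$, and by the remark following the definition of $h_i$ (or Lemma~\ref{lem:pballtangent} with $p=2$), its tangent hyperplane at $\mathbf{0}$ has normal $x_i$, with $H_i = \{y : \langle x_i, y\rangle > 0\}$, i.e., exactly the row constraint $A_i \mathbf{x} > 0$. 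Invoking Lemma~\ref{L:COV-PLN} (with $c = \mathbf{0}$), for any $S \subseteq [n]$ the balls $\{B_i : i \in S\}$ have nonempty intersection iff the halfspaces $\{H_i : i \in S\}$ do; hence there is a point $t'$ with $\nu_2(t') \geq k$ iff there is an index set $R$ with $|R| \geq k$ such that $A_R \mathbf{x} > \mathbf{0}$ is feasible. One should check the tie-breaking convention causes no issue: since the $B_i$ are open and $H_i$ is an open halfspace, all containments are strict, so the new candidate strictly beats the opponent for exactly the voters in $V_2(t')$, consistent with ties going to the existing candidate. This gives a polynomial-time many-one reduction, completing $\mathsf{NP}$-hardness.

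I expect the main obstacle to be the careful handling of the $\ell^p$-ball geometry even in the $p=2$ case: one must verify that the tangent-hyperplane characterisation genuinely yields $B_i \subseteq H_i$ with $h_i$ the \emph{unique} supporting hyperplane at $\mathbf{0}$ (so that Lemma~\ref{L:COV-PLN}'s hypotheses apply — the $\ell^2$ ball is strictly convex with differentiable boundary, so this is clean), and that the equivalence between ``nonempty ball intersection'' and ``a valid position exists'' survives the open/closed distinctions and the tie-breaking rule built into $\nu_2$. The arithmetic itself is routine; the subtlety is purely in making the geometric dictionary between balls, halfspaces, and linear inequalities airtight, and in confirming that restricting voter coordinates to $\{-1,+1\}$ (rather than arbitrary rationals) loses nothing because FLS$^{hb}$ is already $\mathsf{NP}$-complete under exactly that restriction.
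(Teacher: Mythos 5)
Your proposal is correct and follows essentially the same route as the paper: NP-membership via the halfspace characterisation from Lemma~\ref{L:COV-PLN} (the paper certifies with a rational point, you with a region sign vector plus an LP feasibility check, which amounts to the same thing), and NP-hardness by the identical reduction from FLS$^{hb}$ placing one voter per row of $A$ at $x_i = a_i \in \{-1,+1\}^d$ with the opponent at the origin, using the $\ell^2$ tangent-hyperplane identification $H_i = \{y : \langle x_i, y\rangle > 0\}$ and Lemma~\ref{L:COV-PLN} to equate ball intersections with feasible subsystems. Your extra remarks on tie-breaking, openness, and witness bit-complexity are sound refinements of the same argument rather than a different approach.
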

\begin{proof}
    Containment in $\mathsf{NP}$ follows from Lemma \ref{L:COV-PLN}, as it suffices to simply check for a given point $t \in \mathbb{Q}^d$ whether it lies in at least $k$ of the halfspaces $\{H_i\ :\ i\in[n]\}$.
    
    For hardness, we reduce from FLS$^{hb}$. Let $(A, k)$ be an instance of FLS$^{hb}$ with $A \in \mathbb{Q}^{n' \times m'}$ and $k \in [n']$. We now construct the following instance $(d,n,m,x,t,k)$ from $(A, k)$:
    \begin{enumerate}
    \item Set $d = m'$, $n = n'$, $m = 1$. That is, there is one candidate, the issue space has a dimension equal to the number of columns in $A$, and for each row of $A$ we introduce a voter in the instance we construct.
    \item The location of the single candidate is $t_1 = \mathbf{0}$, (the origin).
    \item For row $i \in [n]$, let the location $x_i$ of the $i$th voter be $a_i \in \mathbb{R}^d$, i.e., the $i$th row of $A$. Note that, as $a_i \in \{-1,+1\}^d$, it holds that $x_i \in \{-1,+1\}^d$.
    \item The $k$ of the constructed OCP-instance is equal to that of the FLS$^{hb}$-instance.
    \end{enumerate}

We claim that in the constructed instance, there is a location $t_2 \in \mathbb{R}^d$ receiving at least $k$ votes if and only if there exists an index set $R \subseteq [n], \card{R} \geq k$ such that there is a solution $\mathbf{y} \in \mathbb{R}^{d}$ to $A_R \mathbf{y} > \mathbf{0}$. 

Let $t_{2}$ be a point that receives at least $k$ votes in the constructed instance. This is equivalent to the existence of a set of voters $R \subseteq [n]$ such that $t_2$ is in the interior of all balls $\{B_i\ :\ i \in R\}$. By Lemma \ref{L:COV-PLN}, this is equivalent to the existence of a point $\mathbf{y} \in \mathbb{R}^d$ in the intersection of the open half spaces $\{H_i : i \in R\}$.

By definition, the coefficients of $h_i$ are given by $x_i = a_i$. Hence, $\mathbf{y}$ satisfies $a_i \cdot \mathbf{y} > 0$ for all $i \in R$, i.e., it is a solution to $A_R \cdot \mathbf{y} > 0$. 
\end{proof}

Next, we show that $\mathsf{NP}$-hardness remains to hold when we work with an $\ell^p$ norm where $p \not= 2$. 

\begin{proposition}\label{prop:hardnesslp}
    OCP-D($\nu_p$) is $\mathsf{NP}$-complete for $1 < p < \infty$, when restricted to $m=1$, and when all voter coordinates are in $\{-1,+1\}$.
\end{proposition}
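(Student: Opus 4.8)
The plan is to show that the reduction from FLS$^{hb}$ given in the proof of Theorem~\ref{thm:npcomplete} works verbatim for every $\ell^p$-norm with $1<p<\infty$, so that essentially no new work is needed beyond re-examining which step of that argument used $p=2$. The only place the Euclidean norm entered was the identification of the coefficient vector of the tangent hyperplane $h_i$ with the voter location $x_i$. For a general $\ell^p$-norm, Lemma~\ref{lem:pballtangent} tells us instead that the tangent hyperplane to $B_i(p)$ at $\mathbf{0}$ is $h_i = \{y : \sum_{j\in[d]} \operatorname{sign}((x_i)_j)\,\lvert (x_i)_j\rvert^{p-1} y_j = 0\}$. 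The key observation is that when $x_i \in \{-1,+1\}^d$ we have $\lvert (x_i)_j\rvert^{p-1} = 1$ for every $j$ and every $p$, so $\operatorname{sign}((x_i)_j)\,\lvert (x_i)_j\rvert^{p-1} = (x_i)_j$; hence $h_i$ has exactly the same coefficient vector $x_i = a_i$ as in the Euclidean case, independent of $p$.

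Concretely, I would proceed as follows. First, observe that the $\ell^p$-ball is open, $d$-dimensional, convex, and (for $1<p<\infty$) has a differentiable boundary with a unique tangent hyperplane at each boundary point, so Lemma~\ref{L:COV-PLN} applies with the $B_i(p)$ in place of the $B_i$: a point $t_2$ lies in the interiors of all balls $\{B_i(p) : i\in R\}$ if and only if the open halfspaces $\{H_i(p) : i\in R\}$ have nonempty intersection. Second, for the membership-in-$\mathsf{NP}$ direction, note that given a candidate point $t\in\mathbb{Q}^d$ one can check in polynomial time whether $t$ lies in at least $k$ of the halfspaces $H_i(p)$ — but since $x_i\in\{-1,+1\}^d$, $H_i(p) = \{y : a_i\cdot y > 0\}$ does not even depend on $p$, so this is immediate. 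Third, run the identical reduction: from an FLS$^{hb}$ instance $(A,k)$ with $A\in\mathbb{Q}^{n'\times m'}$, build the election instance with $d=m'$, $n=n'$, $m=1$, $t_1=\mathbf{0}$, $x_i = a_i$, and threshold $k$. By Lemma~\ref{lem:pballtangent} plus the bipolar observation, $H_i(p) = \{y : a_i\cdot y > 0\}$, so a point receiving at least $k$ votes corresponds, via Lemma~\ref{L:COV-PLN}, exactly to an index set $R$ with $\lvert R\rvert\geq k$ admitting a solution to $A_R\mathbf{y} > \mathbf{0}$. Since FLS$^{hb}$ is $\mathsf{NP}$-complete, hardness follows, and this completes the proof.

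There is essentially no main obstacle here once Lemma~\ref{lem:pballtangent} is in hand; the one point requiring a line of care is verifying that the $\ell^p$-ball for $1<p<\infty$ genuinely has a unique tangent hyperplane at every boundary point (strict convexity and smoothness of the unit sphere fail precisely at $p=1$ and $p=\infty$), which is exactly why the statement excludes those endpoints. I would state this explicitly so that the appeal to Lemma~\ref{L:COV-PLN} is justified. The slightly surprising feature worth flagging in the write-up — as the surrounding text already hints — is that the reduction is literally the same map as for $p=2$: the construction never needed to know $p$, because restricting to $\{-1,+1\}$-valued coordinates collapses the $p$-dependent weights $\lvert(x_i)_j\rvert^{p-1}$ to $1$, so OCP-D($\nu_p$) restricted to bipolar instances is, in a precise sense, the same problem for all $p\in(1,\infty)$.
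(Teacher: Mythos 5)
Your proposal is correct and rests on exactly the same ingredients as the paper's proof: Lemma~\ref{lem:pballtangent} plus the observation that bipolar coordinates collapse the weights $\lvert(x_i)_j\rvert^{p-1}$ to $1$, making the tangent-halfspace system (and hence, via Lemma~\ref{L:COV-PLN}, the achievable vote sets) independent of $p$. The only cosmetic difference is framing: the paper packages this as a reduction from OCP-D($\nu_2$) restricted to bipolar instances (which turns out to be the identity map), whereas you rerun the FLS$^{hb}$ reduction directly for general $p$ — the same argument in substance, and your explicit check that the $\ell^p$-ball for $1<p<\infty$ satisfies the hypotheses of Lemma~\ref{L:COV-PLN} is a welcome touch.
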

\begin{proof}
We provide a polynomial-time reduction from the OCP-D($\nu_2$) problem, under the restriction that all voter coordinates are in $\{-1,+1\}$ and $m=1$ (which is $\mathsf{NP}$-complete by Theorem \ref{thm:npcomplete} above). 

Given an instance $I_2 = (d,n,1,x,\mathbf{0},k)$ of OCP-D($\nu_2$), we construct the instance $I_p = (d,n,1,x',\mathbf{0},k)$ of OCP-D($\nu_p$), where for all $i \in [n]$, $x_i'$ is given by
\begin{equation}\label{eq:votercoords}
    (x_i')_j = \text{sign}((x_i)_j)\card{(x_i)_j}^{1/(p-1)}
\end{equation}
for all $j \in [d]$.

By Lemma \ref{lem:pballtangent}, the hyperplane $h_i$ at $\mathbf{0}$ tangent to the critical region of $x_i'$ (which is the $\ell^p$-norm ball centred at $x_i'$ with radius $\lVert x_i' \rVert_p$), is given by 
$h = \{y \in \mathbb{R}^d : \sum_{j \in [d]} \text{sign}((x_i')_j)\card{(x_i')_j}^{p-1} y_j = 0\} = \{y \in \mathbb{R}^d : \sum_{j \in [d]} p \cdot \text{sign}(x_i)_j\card{(x_i)_j} y_j = 0\} = \{y \in \mathbb{R}^d : \sum_{j \in [d]} (x_i)_jy_j = 0\}$.
Here, the second inequality follows from our definition of $x_i'$ and the last inequality points out that the coefficients of the hyperplanes at $\mathbf{0}$ tangent to the critical regions of the voters in our OCP-D($\nu_p$)-instance are indentical to the coefficients of the hyperplanes at $\mathbf{0}$ tangent to the critical regions in $I_2$.
From Lemma $\ref{L:COV-PLN}$, it then follows that for every set of voters $S \subseteq [n]$, there exists a point $t$ such that $V_2(t) = S$ under the instance $I_2$ if and only if there exists a point $t$ such that $V_p(t) = S$ under the instance $I_p$, which proves $\mathsf{NP}$-hardness.

Crucially, observe that if the voters coordinates in $I_2$ are unrestricted, then the reduction could involve potentially irrational-valued voter coordinates for $I_p$. However, note that we assumed that $(x_i)_j \in \{-1,+1\}$, so that (\ref{eq:votercoords}) actually boils down to $x_i'=x_i$, and hence $I_2$ is identical to $I_p$. 
\end{proof}

We remark that the containment in $\mathsf{NP}$ crucially relies on restricting to a variant of the problem where voter coordinates are in $\{-1,+1\}$. If they are not restricted as such, then the coordinates of the associated hyperplanes are potentially not rational-valued, and hence containment in $\mathsf{NP}$ is not clear. This gives rise to an open problem which we expect to be  challenging to settle: To illustrate this, consider the $\ell^p$ norm for $p = 3/2$: In order to determine whether a rational-valued location is in the halfspace corresponding to the critical region of a given voter, we would need to be able to solve the notorious \textsc{square root sum} problem \cite{orourke}, of which the complexity status is presently unknown.

The FLS$^{hb}$ problem is $\mathsf{NP}$-complete due to the number of dimensions $n'$ being unconstrained. When related to OCP via the reduction of Theorem \ref{thm:npcomplete}, $n'$ corresponds to $d$, the issue space dimension. In the next section, we will therefore focus on designing algorithms for OCP under the assumption that $d$ is a fixed constant. 

\subsection{Algorithms for solving OCP with bounded dimension}


The problems OCP-D($\nu$) and OCP-D($R$) are $\mathsf{NP}$ -complete when the number of issues $d$ is not bounded, but this is not the case when $d$ is fixed. Let $I$ be an election instance for which we want to solve the OCP problem. In this case, we can show for $m=1$ that the number of non-empty regions that we need to consider placing the new candidate in is bounded by a polynomial in $n$: Assume without loss of generality that the single candidate in our instance is located at the origin, i.e., $t_1 = \mathbf{0}$. Lemma \ref{L:COV-PLN} tells us that for any set of voters $S \subseteq [n]$, there exists a point intersecting with all balls $\{B_i\ :\  i \in S\}$ if and only if there exists a point in the intersection of $\{H_{i} : i \in S\}$. 


The number of regions in any hyperplane arrangement can be bounded as follows. 
\begin{lemma}\label{lem:numregions}
Let $\mathcal{H}$ be a central arrangement of $n\geq 2$ hyperplanes in $\mathbb{R}^d$. The number of regions of $\mathcal{H}$ (i.e., the number of connected components of $\mathbb{R}^d \setminus \mathcal{H}$) is at most
\begin{equation*}
    2 \sum_{j = 0}^{\min\{n-1,d-1\}} \binom{n-1}{j} .
\end{equation*}
This upper bound holds with equality when for every subset of $d$ hyperplanes of $\mathcal{H}$, its only intersection point is the origin.
\end{lemma}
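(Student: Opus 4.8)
The plan is to \emph{decone} the arrangement with respect to one of its hyperplanes, thereby reducing the region count to that of an \emph{affine} arrangement of $n-1$ hyperplanes in $\mathbb{R}^{d-1}$, and then invoke the classical bound for affine arrangements. (Assume $d\ge 2$; for $d=1$ every central hyperplane equals $\{\mathbf{0}\}$, so the complement has $2$ regions, matching the formula.) Write $\mathcal{H}=\{H_1,\dots,H_n\}$ with normals $a_1,\dots,a_n$, all hyperplanes through $\mathbf{0}$. Since $y\mapsto -y$ fixes each $H_i$ (centrality) and swaps the two open halfspaces bounded by $H_n$, it induces a bijection between the regions of $\mathcal{H}$ lying in $H_n^{+}=\{y:a_n\cdot y>0\}$ and those lying in $H_n^{-}$; as no region meets $H_n$ itself, $r(\mathcal{H})=2r^{+}$, where $r^{+}$ counts the regions of $\mathcal{H}$ inside $H_n^{+}$.

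Next I would identify $r^{+}$ with the number of regions of an affine arrangement in $\mathbb{R}^{d-1}$. Choose coordinates with $H_n=\{y_d=0\}$, and let $P=\{y_d=1\}\subseteq H_n^{+}$, identified with $\mathbb{R}^{d-1}$. For $i<n$ we have $H_i\ne H_n$, so $H_i$ is not parallel to $P$ and $g_i:=H_i\cap P$ is a proper affine hyperplane of $P$; set $\mathcal{H}'=\{g_1,\dots,g_{n-1}\}$. The central projection $\pi:H_n^{+}\to\mathbb{R}^{d-1}$, $\pi(y)=(y_1/y_d,\dots,y_{d-1}/y_d)$, is continuous and open, its fibres are the open rays $\{t\cdot(z,1):t>0\}$, and $\pi^{-1}(g_i)=H_i\cap H_n^{+}$ because the defining equation of $H_i$ is homogeneous (it holds on all of a ray or nowhere on it). Hence $\pi$ carries $H_n^{+}\setminus\bigcup_{i<n}H_i$ onto $\mathbb{R}^{d-1}\setminus\bigcup_{i<n}g_i$; since each region of $\mathcal{H}$ inside $H_n^{+}$ is a cone invariant under positive scaling, it is the $\pi$-preimage of exactly one region of $\mathcal{H}'$, and conversely every region of $\mathcal{H}'$ is attained (a cone over a connected set is connected, and distinct $\mathcal{H}'$-regions have disjoint preimages). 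Thus $r(\mathcal{H})=2\,r(\mathcal{H}')$, with $\mathcal{H}'$ an arrangement of $n-1$ affine hyperplanes in $\mathbb{R}^{d-1}$ — this is just the standard coning/deconing correspondence, cf.\ \cite{stanley2007hyperplane}.

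Now I would apply the classical fact that an arrangement of $N$ hyperplanes in $\mathbb{R}^{D}$ has at most $\sum_{k=0}^{\min\{N,D\}}\binom{N}{k}$ regions, with equality exactly when it is in general position (every $\le D$ of the hyperplanes meet in a flat of the expected codimension, and every $D+1$ of them have empty intersection). With $N=n-1$, $D=d-1$ this gives $r(\mathcal{H})=2r(\mathcal{H}')\le 2\sum_{k=0}^{\min\{n-1,d-1\}}\binom{n-1}{k}$. For the equality clause, suppose every $d$-element subset of $\mathcal{H}$ meets only at $\mathbf{0}$; equivalently, when $n\ge d$, every $\le d$ of the normals $a_1,\dots,a_n$ are linearly independent. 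I would then verify that $\mathcal{H}'$ is in general position: for $I\subseteq[n-1]$ we have $\bigcap_{i\in I}g_i=P\cap\bigcap_{i\in I}H_i$, and $\bigcap_{i\in I}H_i=(\operatorname{span}\{a_i\}_{i\in I})^{\perp}$ has dimension $d-|I|$; if $|I|=d$ it equals $\{\mathbf{0}\}\not\subseteq P$, so the intersection is empty, while if $|I|\le d-1$ it is not contained in $H_n$ (otherwise $a_n\in\operatorname{span}\{a_i\}_{i\in I}$, contradicting independence of the $\le d$ normals $\{a_i\}_{i\in I}\cup\{a_n\}$), so $P$ cuts it in a flat of dimension $(d-1)-|I|$. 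Hence $\mathcal{H}'$ is in general position and equality holds. As a cross-check, one can instead argue by the deletion--restriction recursion $r_d(n)=r_d(n-1)+r_{d-1}(n-1)$ and match it to the formula via Pascal's identity.

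\textbf{Main obstacle.} The conceptual skeleton is routine; the delicate parts are (i) making the deconing bijection precise — that every region of $\mathcal{H}$ inside $H_n^{+}$ is exactly the $\pi$-preimage of one region of $\mathcal{H}'$, which rests on connectedness of cones over connected sets and on injectivity of the induced map on components — and (ii) the bookkeeping that converts ``every $d$ hyperplanes meet only at $\mathbf{0}$'' into the general-position hypothesis for $\mathcal{H}'$, where one must also be slightly careful in the degenerate regime $n\le d-1$: there the stated condition is vacuous, and attaining equality additionally requires all $n$ normals to be linearly independent.
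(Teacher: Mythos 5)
Your proposal is correct and follows essentially the same route as the paper: both exploit central symmetry to halve the region count and then reduce to an affine arrangement of $n-1$ hyperplanes in a $(d-1)$-dimensional slice, your central projection onto $P=\{y_d=1\}$ being exactly the paper's restriction to the parallel hyperplane $h'$. The only real difference is that you invoke the classical general-position bound for affine arrangements as a black box (after verifying, as the paper does, that the hypothesis forces the sliced arrangement into general position), whereas the paper derives the equality case from Zaslavsky's theorem and proves the upper bound by an explicit add-one-hyperplane induction with Pascal's identity; your caveat about the vacuous-hypothesis regime $n<d$ is a fair minor remark on the statement, not a gap in the argument.
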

We defer the proof to Appendix \ref{sec:numregions}. From the above, we can derive an algorithm that solves the OCP problem as follows. Given an election instance $(d,n,1,x,\mathbf{0})$, we compute each voter's corresponding hyperplane in the form of its vector of coefficients, which by Lemma \ref{lem:pballtangent} are given by $\text{sign}((x_i)_j)\card{(x_i)_j}^{p-1}$ for the $\ell^p$-norm (note that this is a rational number in case $p \in \mathbb{N}$). The algorithm then returns for each region $r$ in the central hyperplane arrangement the corresponding set of voters that would vote for the new candidate if they were placed in that region, close enough to the origin, as per Lemma \ref{L:COV-PLN}. 


Enumerating the entire set of regions can be done as follows. Given the $n$ hyperplanes $\mathcal{H} = \{h_1, \ldots, h_n\}$, we let a region in $\mathbb{R}^d \setminus \cup \mathcal{H}$ correspond to a $(+1,-1)$-vector $v_r \in \mathbb{R}^d$, where the $i$th coordinate indicates on which side of the $i$th hyperplane $r$ is located. 

We first find an initial non-empty region $r$ and its corresponding vector $v_r$. We then use the fact that a neighbouring region $r'$ of $r$'s has a vector $v_{r'}$ that can be obtained from $v_r$ by changing the sign of precisely one of its coordinates (this is due to the fact that neighbouring regions share an $(n-1)$-dimensional hyperplane as a facet in their boundaries). Thus, for each of the $n$ potential vectors $v'$ obtained by changing the sign of one of $v_r$'s coordinates, we may check which of those vectors correspond to a non-empty region of the hyperplane arrangement. Such a check can be carried out through employing any algorithm for deciding non-emptiness of a polytope described by a system of linear inequalities. Given that the dimension $d$ is fixed, we can employ the algorithm in \cite{megiddo}, which solves linear programs with $n$ inequalities and a fixed number of variables in time $O(n)$. 

Using these observations, we can iteratively explore the set of regions in the arrangement, keeping track of a list $D$ of $(+1,-1)$-vectors corresponding to regions that have been visited, along with a list of vectors $v_r$ corresponding to unexplored regions, i.e., which have been discovered as neighbours of visited regions. For each region $r$, the corresponding number of voters is then given by the number of $+1$-entries in $v_r$. There are $O(n^{d-1})$ regions to explore, by Lemma \ref{lem:numregions}, and each region requires solving $n$ linear programs (one non-emptiness check for each of its potential neighbouring regions), followed by checking for each of the discovered neighbouring regions whether it has already been discovered (which can be done in $O(n)$ time by representing $D$ as a prefix tree). Therefore, the time complexity of the resulting procedure is $O(n^{d+1})$. 

\begin{theorem}\label{thm:regionenumeration}
    There is a polynomial-time algorithm for OCP($\nu_p$) when $d$ is fixed, for $p \in \mathbb{N}, p > 1$, when the number of candidates $m$ is restricted to $1$. This algorithm runs in time $\Theta(n^{d + 1})$.
\end{theorem}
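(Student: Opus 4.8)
The plan is to establish the two halves of the $\Theta(n^{d+1})$ bound separately: correctness together with an $O(n^{d+1})$ upper bound on the running time, and then an $\Omega(n^{d+1})$ lower bound on the running time of the described procedure. Most of the work is already laid out in the discussion preceding the statement, so the proof is largely an assembly job, but a few points need to be pinned down carefully.

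\textbf{Correctness.} First I would argue that the region-enumeration algorithm returns an optimal position. By the WLOG reduction we may assume $t_1 = \mathbf 0$, so the $n$ tangent hyperplanes $h_1,\dots,h_n$ (computed via Lemma~\ref{lem:pballtangent}, with rational coefficients since $p\in\mathbb N$) form a central arrangement. For a full-dimensional region $r$ with sign vector $v_r$, Lemma~\ref{L:COV-PLN} (applied to the convex, differentiable-boundary $\ell^p$-balls, $1<p<\infty$) gives that $\cap\{H_i : (v_r)_i = +1\}\neq\varnothing$ iff $\cap\{B_i : (v_r)_i = +1\}\neq\varnothing$; moreover a point of $r$ sufficiently close to $\mathbf 0$ lies in exactly the balls $\{B_i : (v_r)_i=+1\}$, so that $\nu_p$ on that point equals the number of $+1$ entries of $v_r$. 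Since any candidate position can be perturbed off the lower-dimensional faces without losing any voters (the faces have measure zero and the balls are open, so the set of voters gained can only weakly increase), the maximum of $\nu_p$ over $\mathbb R^d$ equals the maximum over full-dimensional regions, which the algorithm computes exactly; it then outputs a rational point in a maximising region (obtained, e.g., as a feasible point of the corresponding linear program, which has polynomial bit-complexity by standard LP theory). I would also note that the initial non-empty region can be found by solving one feasibility LP (the whole space minus the arrangement is non-empty), and that the traversal visits every region because the region adjacency graph of a central arrangement is connected.

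\textbf{Running-time upper bound.} By Lemma~\ref{lem:numregions} the number of full-dimensional regions is $O\!\left(\sum_{j=0}^{d-1}\binom{n-1}{j}\right) = O(n^{d-1})$ for fixed $d$. For each visited region the algorithm performs $n$ polytope-non-emptiness checks (one per candidate neighbour obtained by a single sign flip), each an LP with $n$ constraints in $d = O(1)$ variables, solvable in $O(n)$ time by Megiddo's algorithm~\cite{megiddo}; then $O(n)$ work per discovered neighbour to test membership in the visited-set prefix tree $D$. This gives $O(n^2)$ per region and hence $O(n^{d+1})$ overall, including the $O(n)$ preprocessing to build the hyperplane coefficients.

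\textbf{Running-time lower bound.} For the $\Omega(n^{d+1})$ side I would exhibit an instance family in which the number of full-dimensional regions is $\Theta(n^{d-1})$ --- by the equality clause of Lemma~\ref{lem:numregions}, it suffices that every $d$ of the tangent hyperplanes meet only at the origin, which holds generically and can be arranged with an explicit rational choice of voter coordinates (e.g.\ points whose coefficient vectors are in sufficiently general position, such as a Vandermonde-type construction on a moment-curve). On such an instance the traversal necessarily visits all $\Theta(n^{d-1})$ regions and spends $\Theta(n)$ time per region just enumerating its $n$ candidate neighbours, and the non-emptiness LPs genuinely take $\Theta(n)$ time since they have $\Theta(n)$ constraints; hence $\Theta(n^{d+1})$ as stated. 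The main obstacle is making the lower-bound construction clean: I need rational coordinates that (a) are in $\{-1,+1\}$ or at least small, or else simply rational, and (b) yield the worst-case region count, while also confirming the per-region cost is not accidentally sublinear on that family --- this is where I would spend the most care, though none of it is deep.
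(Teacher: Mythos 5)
Your proposal follows essentially the same route as the paper: tangent-hyperplane coefficients from Lemma~\ref{lem:pballtangent}, the ball/halfspace equivalence of Lemma~\ref{L:COV-PLN}, sign-vector traversal of the central arrangement with Megiddo LP feasibility checks and a prefix tree for visited regions, and the $O(n^{d-1})$ region count of Lemma~\ref{lem:numregions}, giving $O(n^{d+1})$; your sketched $\Omega(n^{d+1})$ family is extra work the paper does not carry out (it asserts the $\Theta$ without a matching lower bound). One caveat: an LP-feasible point of $\cap H_i$ need not lie in the open balls, so producing an actual optimal location requires scaling it toward the origin with a polynomially-bounded factor, which the paper handles separately in Proposition~\ref{prop:scaledown} rather than inside this theorem, and which your ``feasible point of the LP'' remark does not by itself deliver.
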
 

Note that our algorithm described above, in its stated form, only finds a maximum-cardinality set $S$ of voters that can be acquired by a new candidate that positions themselves appropriately. The algorithm can be strengthened so that it returns in polynomial time an actual location in $\mathbb{R}^d$ that results in receiving the votes all voters in $S$: We first solve a linear program that finds a feasible point $y$ in $\cap_{i \in S} H_i$. We can then scale this point by a certain sufficiently small factor $\lambda$, which results in the desired point $y' = \lambda p$ that is sufficiently close to $\mathbf{0}$ so that we are guaranteed that $y' \in \cap_{i \in [n]} B_i$.

The following proposition gives us an appropriate scaling factor, and establishes that the scaling factor required has polynomial size in terms of the number of bits of the input instance, so that $y'$ in turn can be computed in polynomial time. Note that, given Lemma \ref{L:COV-PLN} and its proof, it might not be surprising to the reader that $y$ can be scaled down such that it results in a point in $\cap_{i \in [n]} B_i$. However, it is much less clear that the appropriate scaling factor requires only a polynomial number of bits and can thus be computed efficiently, for any $\ell^p$-norm with $p \in \mathbb{N}$, and this is what is shown in the following proposition. 

\begin{proposition}\label{prop:scaledown}
Let $(d,n,m,x,t)$ be an election instance and let $p \in \mathbb{N}$, $p \geq 2$, let $S \subseteq [n]$ and let $y$ be a point in $\cap\{H_i : i \in S\}$. Define $S^i_0 = \{j \in [d] : (x_i)_j = 0 \vee y_j = 0\}$ and $S^i_{-1} = \{j \in [d] : \text{sign}((x_i)_j = -\text{sign}(y_j) \not= 0\}$.
The point $y' = y \cdot \min\{1, \check{\lambda_i} : i \in S\}$
lies in $\cap\{B_i(\ell^p)\ :\ i \in S\}$, where $\check{\lambda_i} = \min\{1, \epsilon\card{(x_i)_j/y_j} : j \in [d]\setminus S^i_0\}$, $\epsilon = 1/2$ if $S^i_{-1} = \varnothing$ and $\epsilon = 2^{-\lceil \log_2(\sigma^i_d)\rceil - p - 1}$, and where  $\sigma^i_d$ is the denominator of the ratio 
\begin{equation*}
    \frac{\sum_{j \in [d]} y_j\card{(x_i)_j}^{p-1} \text{sign}((x_i)_j)}{\card{\sum_{j \in S^i_{-1}} y_j}(x_i)_j\card{^{p-1} \text{sign}((x_i)_j)}}
\end{equation*} evaluated as a rational number.
\end{proposition}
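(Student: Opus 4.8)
The plan is to verify directly that the scaled point $y' = \lambda y$, with $\lambda = \min\{1,\check{\lambda}_i : i \in S\}$, lies in $B_i(\ell^p)$ for every $i \in S$; i.e.\ that $\lVert y' - x_i\rVert_p < \lVert x_i\rVert_p$ for each such $i$. Since membership in the intersection follows coordinate-by-candidate, I would fix an arbitrary $i \in S$ and work with the scalar quantity $f_i(\mu) = \lVert \mu y - x_i\rVert_p^p - \lVert x_i\rVert_p^p$ as a function of the scaling parameter $\mu \in [0,1]$. We have $f_i(0) = 0$, and because $y \in H_i$ the hyperplane $h_i$ (whose coefficients are $\operatorname{sign}((x_i)_j)\lvert(x_i)_j\rvert^{p-1}$ by Lemma~\ref{lem:pballtangent}) strictly separates $y$ from the far side of $B_i$, so the right-derivative $f_i'(0^+)$ is strictly negative; this is essentially the quantitative content of Lemma~\ref{L:COV-PLN}. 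Hence for all sufficiently small $\mu > 0$ we have $f_i(\mu) < 0$, i.e.\ $\mu y \in B_i$. The task is to pin down an explicit, rationally-representable, polynomially-bounded threshold on $\mu$ below which this holds, which is exactly what $\check{\lambda}_i$ is supposed to be.

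The key steps, in order: (1) expand $f_i(\mu) = \sum_{j\in[d]} \bigl(\lvert \mu y_j - (x_i)_j\rvert^p - \lvert(x_i)_j\rvert^p\bigr)$ and split the index set into the three pieces $S^i_0$ (coordinates where $(x_i)_j = 0$ or $y_j = 0$, which contribute a nonnegative term $\lvert\mu y_j\rvert^p$ of order $\mu^p$, or contribute $0$), the coordinates where $\operatorname{sign}(y_j) = \operatorname{sign}((x_i)_j) \neq 0$ (where shrinking toward $0$ along $y$ moves $\mu y_j$ toward $(x_i)_j$ for small $\mu$, a "good" direction), and $S^i_{-1}$ (where $\operatorname{sign}(y_j) = -\operatorname{sign}((x_i)_j) \neq 0$, the "bad" directions that immediately increase $\lvert\mu y_j - (x_i)_j\rvert$). (2) Using a first-order Taylor expansion of $t \mapsto \lvert t - a\rvert^p$ at $t=0$, write $f_i(\mu) = -\mu \langle \nabla, y\rangle + O(\mu^2)$ where $\nabla_j = p\,\operatorname{sign}((x_i)_j)\lvert(x_i)_j\rvert^{p-1}$ on the coordinates with $(x_i)_j \neq 0$; the linear term equals $-\mu p \sum_j y_j \operatorname{sign}((x_i)_j)\lvert(x_i)_j\rvert^{p-1}$, which is $< 0$ precisely because $y \in H_i$. (3) Bound the remainder: on the "good" and $S^i_0$ coordinates the per-coordinate error is controlled by requiring $\mu \lvert y_j\rvert \le \epsilon \lvert (x_i)_j\rvert$ (this is where the factor $\epsilon\lvert(x_i)_j/y_j\rvert$ in $\check{\lambda}_i$ comes from, with $\epsilon = 1/2$ sufficing when $S^i_{-1} = \varnothing$ since then there is no bad linear contribution to fight against). (4) When $S^i_{-1}\neq\varnothing$, the bad coordinates contribute to the linear term with the wrong sign, so one must certify that the net linear coefficient is still negative and bounded away from $0$ by a quantity whose reciprocal is polynomially bounded; this is the role of $\sigma^i_d$, the denominator of the displayed ratio — it is the ratio of (total linear contribution) to (bad linear contribution), evaluated as a rational, and $2^{-\lceil\log_2 \sigma^i_d\rceil - p - 1}$ gives a power-of-two lower bound on $\mu$ that both makes the good part dominate the quadratic slack and keeps the bit-length under control. (5) Conclude that with $\mu = \check{\lambda}_i$ all per-coordinate estimates hold simultaneously, so $f_i(\check{\lambda}_i) < 0$, and since $f_i$ stays negative on $(0,\check{\lambda}_i]$ (the estimates are monotone in $\mu$), $y' = \lambda y$ with $\lambda \le \check{\lambda}_i$ lies in $B_i$; taking the minimum over $i \in S$ gives membership in the full intersection.

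The main obstacle is step (4): the bookkeeping of the bad-direction contributions. Everything hinges on showing that the linear coefficient $\sum_{j\in[d]} y_j\operatorname{sign}((x_i)_j)\lvert(x_i)_j\rvert^{p-1}$, which we know is positive (so that $-\mu$ times it is negative), is not merely positive but has a denominator — after clearing fractions from the rational inputs — bounded by a polynomial-sized integer, and that the same is true relative to the magnitude of the bad part. This requires carefully tracking how the bit-lengths of the $x_i$ and $y$ coordinates, raised to the $(p-1)$st power and summed, propagate through the ratio; the introduction of $\sigma^i_d$ as the denominator of precisely that ratio is the device that isolates this quantity, and the $-p-1$ in the exponent of $\epsilon$ is the slack needed to absorb the remaining quadratic error terms after the linear term has been shown to dominate. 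I would handle the remainder estimate in step (3) via the elementary inequality $\bigl|\lvert t-a\rvert^p - \lvert a\rvert^p + p\,\operatorname{sign}(a)\lvert a\rvert^{p-1} t\bigr| \le C_p \max(\lvert t\rvert, \lvert a\rvert)^{p-2} t^2$ for $p \ge 2$ (with the convention handling $p=2$ exactly), keeping the constants explicit enough to feed into the power-of-two threshold.
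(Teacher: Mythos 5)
Your overall strategy is the right one and is close in spirit to the paper's: scale along the ray through $y$, use membership in $H_i$ (via Lemma~\ref{lem:pballtangent}) to get a strictly favourable first-order rate, and pin down an explicit rational threshold. The paper organises this differently from you — it never forms a Taylor remainder, but instead studies the exact derivative $f(\lambda)=\sum_j p\,y_j\lvert\lambda y_j-(x_i)_j\rvert^{p-1}\operatorname{sign}(\lambda y_j-(x_i)_j)$, observes that on $(0,\check\lambda_i]$ every sign $\operatorname{sign}(\lambda y_j-(x_i)_j)$ equals $-\operatorname{sign}((x_i)_j)$ for $j\in S_{+1}\cup S_{-1}$, sandwiches $\lvert\lambda y_j-(x_i)_j\rvert$ between $(1-\epsilon)\lvert(x_i)_j\rvert$ and $(1+\epsilon)\lvert(x_i)_j\rvert$, and reduces everything to the single inequality $(1+\epsilon)/(1-\epsilon)<(\sigma^i_n/\sigma^i_d+1)^{1/(p-1)}$, which it then verifies for $\epsilon=2^{-\lceil\log_2(\sigma^i_d)\rceil-p-1}$ using $(1+\epsilon)^{p-1}<1+2^{p-1}\epsilon$. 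Negativity of the derivative on the whole interval then gives membership in $B_i$ directly, with no separate monotonicity argument needed.

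The genuine gap in your proposal is that the step which constitutes the actual content of the proposition — verifying that the \emph{specific} value $\epsilon=2^{-\lceil\log_2(\sigma^i_d)\rceil-p-1}$ makes the favourable term dominate — is only named as ``the main obstacle'' and never carried out. Your remainder inequality $\bigl|\lvert t-a\rvert^p-\lvert a\rvert^p+p\operatorname{sign}(a)\lvert a\rvert^{p-1}t\bigr|\le C_p\max(\lvert t\rvert,\lvert a\rvert)^{p-2}t^2$ is a workable starting point, but with the natural constant $C_p=\Theta(p^2 2^{p})$ the stated $\epsilon$ only survives after a more careful bound (e.g.\ using $\lvert\mu y_j\rvert\le\epsilon\lvert(x_i)_j\rvert$ to reduce the remainder to $\tfrac12 p(p-1)(1+\epsilon)^{p-2}\epsilon\,\mu\sum_j\lvert y_j\rvert\lvert(x_i)_j\rvert^{p-1}$ and comparing with $p\,\sigma^+$ via $\sigma^+/\lvert\sigma^-\rvert\ge 1/\sigma^i_d$); none of this arithmetic is done, so the claim that this $\epsilon$ suffices is asserted rather than proved. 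Two further slips: in step (4) you say one must certify that the net linear coefficient ``is still negative'' — its sign is already guaranteed by $y\in H_i$; what must be quantified is the \emph{ratio} of the total to the bad contribution, which is exactly what $\sigma^i_d$ encodes. And in step (3) you claim the $S^i_0$ contributions are controlled by requiring $\mu\lvert y_j\rvert\le\epsilon\lvert(x_i)_j\rvert$, which is vacuous when $(x_i)_j=0$ (those indices are excluded from $\check\lambda_i$), so your stated control mechanism for those coordinates does not apply as written.
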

The proof can be found in Appendix \ref{apx:scaledown}. 
We reiterate here (from Section \ref{sec:relwork}) that various related algorithms have been proposed in previous works  \cite{chan_mechanism_2021,bremnertukey,liutukey,densesthemisphere,deliberative}, to which our problem setting indirectly translates. Those prior algorithms are all incompatible with non-Euclidean norms, and do not output an actual point for the candidate to position themselves on.

For the two-dimensional case ($d = 2$), it is possible to improve the running time of Theorem \ref{thm:regionenumeration} to $O(n\log n)$ through a simple radial sweep algorithm. We provide details in Appendix \ref{apx:2d_case}.  Furthermore, we remark that a very straightforward $2$-approximation algorithm is achievable for $m=1$ by choosing the better of an arbitrary point and its negation. We provide the details in Appendix \ref{apx:simpleapprox}.

\section{Competing against many candidates ($m > 1$)}\label{sec:manycandidates}
%
In this section, we briefly discuss the general case with more than one candidate. We have observed that when $m=1$, the best location to position oneself at is always a point close to the competitor. This does not hold anymore when $m > 1$: In this situation, the intersection of the maximum number of voters' critical regions may be located away from any of the competitors. 

\begin{example}
As an example of such an election instance, let $d=1$, $n=m=2$, two candidates located at points $0$ and $7$, and the voters are located at points $2$ and $5$. A new candidate should now position themselves in the open interval $(3,4)$, in which case they will win the votes of both voters.
\end{example}
One may think of this phenomenon as trying to attract a large set of voters who are not necessarily too satisfied with their current choice of candidate, as they could be rather distant from them in the issue space. If there are multiple such sets associated to different candidates, choosing a position somewhere in the centre of the respective candidates can result in a substantial number of voters changing their vote to the new candidate.

Solving this problem is equivalent to designing an algorithm that finds the highest-cardinality set of balls that has a non-empty intersection, when provided with an arrangement of balls in $\mathbb{R}^d$. 
Currently, algorithms are known for the case of the Euclidean ($\ell^2$) norm. For $\ell^2$, an algorithm was proposed that runs in time $O(n^{d+1})$, or $d^3$ when $n < d$ \cite{crama_complexity_1995}. 
However, this algorithm is suitable for the case of closed balls only, whereas in our problem the critical regions of the voters form open balls. Also, as pointed out in Section \ref{sec:relwork}, the time complexity of the algorithm of \cite{crama_complexity_1995} is established by a reference to an untraceable manuscript of C. Meyer, and we cannot verify its correctness at present. An alternative algorithm was given in \cite{cramaibaraki}, but this algorithm only computes the maximum cardinality index set of balls with non-empty intersection, and does not compute an actual location. 

Lushchakova \cite{lushchakova_geometric_2020} provides an algorithm for \emph{deciding} whether a given set of Euclidean closed balls in \( \mathbb{R}^d \) has a non-empty intersection. Her algorithm addresses only the \emph{decision} version of the problem, and works with \emph{closed} rather than \emph{open} balls. It also relies on a computational model in which basic arithmetic operations and square roots are assumed to take unit time, which leads to potentially irrational intermediate results and limits practical implementation.

In contrast, our approach tackles the \emph{optimisation} version of the problem for open balls, and is designed under the standard Turing model of computation. In Appendix~\ref{apx:balls3}, we present an extension of our algorithm that scales to multi-candidate configurations while preserving tractability in bit-complexity terms. The resulting method runs in time \( O(n^{2d-4}(nd^2 + d^3 + n^2 \log n) + nm) \), and can be interpreted as a constructive refinement of Lushchakova’s decision procedure.




With respect to approximation algorithms, we make the following observation for general $\ell^p$-norms, $1 < p < \infty$: For $j \in [m]$, let $S_j \subset [n]$ be a maximum set of voters in $\{i\ :\ i \in Cl(B_j)\}$ such that $\cap \{B_i\ :\ i \in S_j\} \not= \varnothing$, and let $j^* = \arg_j \max \{\card{S_j} : j \in [m]\}$. As the point $p$ that maximises $\nu_p(p)$ must satisfy $\nu_p(p) \leq \sum_{i \in [m]}\card{S_j}$,
selecting a point in $\cap\{B_i : i \in S_{j^*}\}$ yields an $m$-approximation to OCP($\nu_p$). Since by Theorem \ref{thm:regionenumeration} computing $\card{S_j}$ can be done in $O(n^{d+1})$ time, we can compute $p$ in $O(mn^{d+1})$ time.
\begin{proposition}
    \label{prop:mapprox}
    There exists a polynomial time $m$-approximation algorithm for OCP($\nu_p$) for $p \in \mathbb{N}, p > 1$. This algorithm runs in time $O(mn^{d+1})$.
\end{proposition}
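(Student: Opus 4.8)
The plan is to prove Proposition~\ref{prop:mapprox} by the standard covering argument already sketched in the surrounding text, made precise. The key observation is that for any point $t' \in \mathbb{R}^d$, the voters in $V_p(t')$ can be partitioned according to which existing candidate is ``blocked'' at $t'$ — more carefully, each voter $i \in V_p(t')$ has $t' \in B_i$, so $t'$ is strictly closer to $x_i$ than every $t_j$; in particular the existing candidate $j(i) \in \arg\min_{j\in[m]} p(x_i - t_j)$ nearest to $i$ satisfies $p(x_i - t_{j(i)}) > p(x_i - t')$, hence $t_{j(i)} \notin B_i$, and trivially $t_{j(i)} \in \mathrm{Cl}(B_i)$ is false — so instead I assign $i$ to the candidate on whose closed ball boundary... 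Let me restate: I assign each voter $i$ to $j(i) := \arg\min_{j} p(x_i - t_j)$; then $i \in \{i' : t_{j(i)} \in \mathrm{Cl}(B_{i'})\}$ because the radius of $B_i$ is exactly $\min_j p(x_i - t_j) = p(x_i - t_{j(i)})$, so $t_{j(i)}$ lies on the boundary of $B_i$. This gives a map $\phi : [n] \to [m]$, and $V_p(t') = \bigsqcup_{j\in[m]} (V_p(t') \cap \phi^{-1}(j))$.

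First I would fix an optimal location $t^\star$, let $\mathrm{OPT} = \nu_p(t^\star)$, and write $V^\star = V_p(t^\star)$. For each $j$, the set $V^\star \cap \phi^{-1}(j)$ is a set of voters all of whose critical regions contain $t^\star$, and each of these voters $i$ satisfies $t_j \in \mathrm{Cl}(B_i)$; hence $V^\star \cap \phi^{-1}(j)$ is a feasible candidate for the maximisation defining $S_j$, so $\card{V^\star \cap \phi^{-1}(j)} \le \card{S_j} \le \card{S_{j^\star}}$. Summing over $j$ gives $\mathrm{OPT} = \card{V^\star} = \sum_{j\in[m]} \card{V^\star \cap \phi^{-1}(j)} \le m\,\card{S_{j^\star}}$. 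Next I would argue that the algorithm's output value is at least $\card{S_{j^\star}}$: picking any $t \in \bigcap\{B_i : i \in S_{j^\star}\}$ (nonempty by definition of $S_{j^\star}$) yields $\nu_p(t) \ge \card{S_{j^\star}} \ge \mathrm{OPT}/m$, which is the claimed $m$-approximation.

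It remains to justify the running time and the computability of each $S_j$. For fixed $j$, translate so that $t_j = \mathbf{0}$; then among the voters $i$ with $t_j \in \mathrm{Cl}(B_i)$ — equivalently those for which $t_j$ is one of the nearest existing candidates to $x_i$, which is a condition checkable in $O(m)$ time per voter, $O(nm)$ total over all $j$ — we want the largest subset whose open critical regions have a common point. Crucially, for exactly these voters the critical ball $B_i$ has $t_j = \mathbf{0}$ on its boundary, so Lemma~\ref{L:COV-PLN} and Lemma~\ref{lem:pballtangent} apply verbatim: finding the maximum such subset is precisely the $m=1$ instance solved by the region-enumeration algorithm of Theorem~\ref{thm:regionenumeration}, running in $O(n^{d+1})$ time; moreover Proposition~\ref{prop:scaledown} lets us recover an explicit point $t \in \bigcap\{B_i : i\in S_{j^\star}\}$ with polynomially bounded bit-complexity. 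Running this once per candidate gives $O(mn^{d+1})$ overall (the $O(nm)$ preprocessing is dominated), proving the stated bound.

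The main obstacle — really the only subtle point — is the correctness of the reduction of each $S_j$ computation to a genuine $m=1$ OCP instance: one must check that restricting attention to the voters with $t_j \in \mathrm{Cl}(B_i)$ and then treating their critical regions as the balls of a single-opponent instance (candidate at $t_j$) does not change which subsets have nonempty open intersection near $t_j$. This holds because Lemma~\ref{L:COV-PLN} only uses that the closures of the relevant balls share the common boundary point $t_j$ and that each ball is convex with differentiable boundary (true for $\ell^p$, $1<p<\infty$), so the intersection pattern of the $B_i$ is faithfully captured by the central hyperplane arrangement of their tangent hyperplanes at $t_j$ — exactly as in Section~\ref{sec:onecandidate}. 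Everything else (the disjoint-union bound, the per-candidate time, the bit-complexity of the output) is routine given the earlier results.
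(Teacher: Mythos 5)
Your proof is correct and follows essentially the same route as the paper: partition the optimal voter set by each voter's nearest existing candidate, bound each class by $\card{S_j}$ (so $\mathrm{OPT}\le m\,\card{S_{j^\star}}$), and compute each $S_j$ by reducing to the single-opponent instance centred at $t_j$ via Theorem~\ref{thm:regionenumeration}, recovering an explicit point with Proposition~\ref{prop:scaledown}. The paper only sketches this argument, and your write-up supplies exactly the details it leaves implicit (the partition map $\phi$, the feasibility of $V^\star\cap\phi^{-1}(j)$ for the maximisation defining $S_j$, and why the restriction to voters with $t_j$ on the boundary of $B_i$ is a genuine $m=1$ instance).
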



Lastly, we note that above algorithms (ours and those in previous literature) all rely on the idea of enumerating exhaustively a set of potential solutions that grows exponentially in $d$.
This is necessary in the worst case, when voters can be configured arbitrarily over $\mathbb{R}^d$. However, empirical research has consistently demonstrated that real-world attitudes tend to exhibit strong correlations.

For example, \citet{Goren2025} and \citet{BaldassarriGelman2008} document substantial correlations between attitudes toward abortion and same-sex marriage in the US ($r \approx 0.66$), while \citet{CurticeEvans2019} report that attitudes towards Brexit and immigration in the UK are nearly interchangeable ($r \approx 0.72$). \citet{BaumannEtAl2021} show that even culturally unrelated topics (e.g., transgender bathrooms and border security) can align under ideological polarisation. Even when correlations are modest, as in \citet{Fesnic2021}, they still affect the geometry of the issue space.

Such findings suggest that real-world preference profiles may occupy a more structured and lower-complexity subset of $\mathbb{R}^d$ than the full combinatorial range, reducing the number of regions to consider from the theoretical maximum of $\binom{n}{d}$. Appendix~\ref{sec:ITR_ALL} presents an algorithm that exploits such regularities—via principal subspace projections, latent ideological dimensions, or clustering along known axes—to avoid full enumeration. Although auxiliary to our main theoretical results, this approach could improve scalability in practice.

A further variant is when the new candidate’s location is restricted to a feasible region. Our multi-candidate algorithm can handle this by adding a constraint ball~$B$ to the ball-intersection instance and requiring the maximum-cardinality intersecting set to include~$B$. Detailed study of such constrained-positioning models is left for future work.

\section{Discussion}

In this paper, our positive results relate to using the $\ell^p$-norm with $1 < p < \infty$ as the distance measure employed by the voters. Some questions about the choice of distance measure remain unexplored, and some gaps remain to be filled. In particular, is it possible to provide a similar set of results for the $\ell^1$ and $\ell^{\infty}$ norms? 
One may also consider the case of having voter-specific distance measures. For this direction, we can provide an initial negative result in Appendix \ref{sec:mixeddistance}.

For the many-candidate case of our problem, important open problems remain: In Section \ref{sec:manycandidates}, we have discussed three previously proposed algorithms that are closely related to our problem restricted to the Euclidean norm, but none of them solve our candidate positioning problem fully in a satisfactory way. 
We proposed a non-trivial adaptation of one of those algorithms, resulting in an algorithm that works correctly for the Euclidean norm and in polynomial time under the standard model of computation.
Our current algorithm runs in time roughly $O(n^{2d})$ and we wonder whether we can improve the factor $2$ in its exponent. We also leave open for future work the question whether a polynomial time algorithm exists for the many-candidate case for other distance measures, in particular general $\ell^p$-norm induced measures.


Many of our algorithms are $\mathsf{XP}$ in~$d$, and it remains open whether $\mathsf{FPT}$ algorithms exist or whether the base/exponent can be improved. It would also be interesting to establish conditional lower bounds under ETH, explore possible $\mathsf{W}[1]$-hardness, and derive inapproximability results for natural variants. Beyond complexity, promising directions include extending our framework to arbitrary positional rules (Appendix~\ref{apx:generalisations}), weighted voting where ballots have different weights, and range voting where scores depend on candidate distance. Game-theoretic variants, in which candidates have ``ideal'' positions and can move at a cost, could be analysed for iterative best-responses and equilibrium behaviour. Additional modelling dimensions include voters’ limited ability to distinguish very close candidates and hybrid settings balancing spatial proximity with independent score attributes (e.g., perceived competence). We also plan empirical experiments, to be reported in future work and in an extended journal version, to assess the practical performance of our algorithms.

\bibliography{references1.bib} 
\newpage
\clearpage
\appendix
    
\section{Proof of Lemma \ref{lem:numregions}.}
\label{sec:numregions}



\begin{proof}[Proof of Lemma \ref{lem:numregions}]
If $n < d$, the claimed bound simplifies to $2^n$, and is easily seen to hold true: Each region $R$ can be identified by a binary $n$-dimensional $(+1,-1)$-vector where the $i$th coordinate indicates on which side of the $i$th hyperplane $R$ lies. There are $2^n$ such vectors.

Assume therefore that $n \geq d$. Let $R$ be the set of regions induced by $\mathcal{H}$ Take any hyperplane $h \in \mathcal{H}$, which divides $\mathbb{R}^d$ in two regions $A$ and $B$. Observe that by symmetry, the number of regions of $R$ that lie in $A$ equals the number of regions of $R$ that lie in $B$. Thus, it suffices to bound the number of regions in $A$. To count those regions, let $h'$ be any $(d-1)$-dimensional hyperplane parallel to $h$ that lies entirely in $A$. The complement of the hyperplanes $\mathcal{H}\setminus \{h\}$ restricted to the $(d-1)$-dimensional plane $h'$ forms a set of $(d-1)$-dimensional regions $R'$, and each region of $R'$ in $A$ must lie in a separate region of $R$. Thus, $\card{R} \leq 2 \card{R'}$. As $h$ is $(d-1)$-dimensional, and $\mathcal{H}\setminus\{h\}$ induces a hyperplane arrangement on the $(d-1)$-dimensional space $h$, the number $\card{R'}$ is no larger than the maximum number of regions of an arrangement of $n-1$ hyperplanes in $\mathbb{R}^{d-1}$.

We know from Zaslavsky's Theorem \cite{zaslavsky1975} that the number of regions of a hyperplane arrangement $\mathcal{H'}$ is $\card{\chi_{\mathcal{H'}}(-1)}$: the absolute value of the characteristic function of $\mathcal{H'}$ evaluated at $-1$, given by
\begin{equation*}
    \chi_\mathcal{H'}(t) = \sum_{S \subseteq \mathcal{H'}\ :\ \cap S \not= \varnothing} (-1)^{\lvert S\rvert}t^{\text{dim}(\cap S)} .
\end{equation*}

In case for each subset of $d$ hyperplanes of $\mathcal{H}$, their only intersect point is at the origin, then in particular no $d$ hyperplanes of $\mathcal{H}\setminus \{h\}$ intersect at a point in $h$. In other words, the arrangement $\mathcal{H}\setminus\{h\}$ restricted to the subspace $h$ is in general position. For hyperplane arrangements of $n-1$ hyperplanes in general position in $\mathbb{R}^{d-1}$, it is straightforward to derive that the characteristic function (for $t = -1$) evaluates to
\begin{equation*}
    \card{\chi_{\mathcal{H}'}(-1)} = \sum_{i = 0}^{d-1}\binom{n-1}{i} ,
\end{equation*}
see, e.g., \cite{stanley2007hyperplane}. This establishes the second part of our claim.

Although it seems intuitively clear that the above should also be an \emph{upper bound} for the number of regions of a hyperplane arrangement in \emph{non-general position} (which is what remains to be shown), this is surprisingly not that straightforward to establish directly from the definition of the characteristic function itself. 

Instead, we reason as follows. Denote by $h_1, \ldots, h_{n-1}$ the hyperplanes in $\mathcal{H}\setminus\{h\}$. For $i \in [d-1]$ and $j \in [n-1]$ let $N(i,j)$ be the maximum number of regions of an arrangement of $j$ hyperplanes in $\mathbb{R}^{i}$. Consider the process of adding the hyperplanes to the arrangement one by one. When adding the last hyperplane $h_{n-1}$ to the arrangement of the first $n-2$ planes, there are a maximum of $N(d-1,n-2)$ regions before adding $h_{n-1}$ to the arrangement, and all regions through which $h$ passes are split into two regions. Since there is a correspondence between each $(d-2)$-dimensional region of $h_{n-1} \setminus \{h_1, \ldots, h_{n-2}\}$ and each $(d-1)$-dimensional region of the arrangement $\{h_1,\ldots, h_{n-2}\}$, the hyperplane $h_{n-1}$ passes through a maximum of $N(d-2,n-2)$ regions, so that adding $h_{n-2}$ to the arrangement results in $N(d-1,n-2) + N(d-2,n-2)$ regions. By induction, we may assume that $N(d-1,n-2) = \sum_{i=0}^{d-1}\binom{n-2}{i}$ and that $N(d-2,n-2) = \sum_{i=0}^{d-2} \binom{n-2}{i}$.

We then obtain the desired result:
\begin{align*}
    & \card{R} = 2\card{R'} \leq 2 ( N(d-1,n-2) + N(d-2,n-2) ) \\
    & \qquad = 2 \left( \binom{n-2}{0} + \sum_{i=1}^{d-1} \left(\binom{n-2}{i} + \binom{n-2}{i-1}\right)\right) \\
    & \qquad = 2\left(\binom{n-1}{0} + \sum_{i=1}^{d-1} \binom{n-1}{i}\right) .
\end{align*}
\end{proof}

\section{The hyperplane tangent to an $\ell^p$-Ball}\label{apx:pballtangent}
\begin{proof}[Proof of Lemma \ref{lem:pballtangent}]
Observe that the coefficients $\text{sign}(z_j)\card{z_j}^{p-1}, j\in [d]$ coincide with the gradient of the function $f : \mathbb{R}^{d} \rightarrow \mathbb{R}$ that expresses the $\ell^p$-norm distance to $x_i'$, at $\mathbf{0}$. The fact that this gradient forms the claimed coefficient vector can be derived through basic convex analysis tools (see \cite{rockafellarconvex}; we assume familiarity with some standard terminology in that field): We construct the supporting hyperplane $\bar{h}$ of the epigraph of the function $f$, at the point $(\mathbf{0}, f(\mathbf{0}))$, which is the hyperplane
\begin{equation*}
    \left\{y \in \mathbb{R}^{d+1}\ :\ \sum_{j \in [d]} \text{sign}(z_j)\card{z_j}^{p-1}y_j - y_{d+1} = -f(\mathbf{0})\right\} .
\end{equation*} 
Thus, for every point $y \in \mathbb{R}^{d+1}$ that is in the epigraph of $f$, we have $\sum_{j \in [d]} \text{sign}(z_j)\card{z_j}^{p-1}y_j - y_{d+1} \leq -f(\mathbf{0})$, and equality holds at the point $(\mathbf{0}, f(\mathbf{0}))$.
Then we compute intersection of the epigraph with the space $P = \{y \in \mathbb{R}^{d+1} : y_{d+1} = f(\mathbf{0})\}$ to obtain the convex set $S$, and observe that the supporting hyperplane $h$ restricted to the linear subspace $P$ is still a supporting hyperplane for $S$ relative to $P$, since both $S$ and $h$ contain $(\mathbf{0}, f(\mathbf{0}))$. Subhyperplane $h$ is formulated as follows. 
\begin{equation*}
    h = \left\{y \in \mathbb{R}^{d} \times \{f(\mathbf{0})\}\ :\ \sum_{j \in [d]} \text{sign}(z_j)\card{z_j}^{p-1}y_j = 0\right\} .
\end{equation*} 
Furthermore, $S$ can be formulated as follows.
\begin{equation*}
    S = \left\{y \in \mathbb{R}^{d} \times \{\mathbf{f(0)}\}\ :\ \sum_{j \in [d]} \card{y_j-z_j}^p \leq \sum_{j \in [d]} \card{z_j}^p\right\}.
\end{equation*}
Discarding the last coordinate altogether from the above two expressions, we obtain the claimed result: The convex set $S$ expresses the $\ell^p$-norm ball $B$, and $h$ expresses a supporting hyperplane tangent to it, with the claimed coefficients.
\end{proof}

\section{Bounding the Scaling Factor}\label{apx:scaledown}
\begin{proof}[Proof of Proposition \ref{prop:scaledown}]
We regard the norm $\ell^p$ as fixed, and write $B_i$ instead of $B_i(\ell^p)$ for notational convenience.

Consider an arbitrary $i \in S$. Observe that the half-line $\{\lambda y : \lambda \in [0,\infty)\}$ intersects the boundary of $B_i$, at two points, one of which is $\mathbf{0}$. All points on this half-line stricly in between those two points lie in $B_i$, and observe that by definition of $B_i$, such points $\lambda y$ satisfy that $\sum_{j \in [d]} \card{\lambda y_j-(x_i)_j}^p < \sum_{j \in [d]}\card{(x_i)_j}^p$.
Thus, the point $\lambda^* y$ that minimises $\sum_{j \in [d]} \card{\lambda^* y_j-(x_i)_j}^p$ must be in $B_i$, and all points $\lambda y$ such that $0 < \lambda < \lambda^* y$ are in $B_i$ as well.

We therefore consider the derivative $f(\lambda)$ of the above expression with respect to $\lambda$, given by $f(\lambda) = \sum_{j \in [d]} p y_j \card{\lambda y_j - (x_i)_j} \text{sign}(\lambda y_j - (x_i)_j)$. Note that $f$ is negative for $\lambda \in (0,\lambda^*]$, positive for $\lambda \in [\lambda^*,\infty)$ and $0$ at $\lambda^*$. We are interested in finding a choice of $\lambda$ such that $f(\lambda) < 0$, as then we are assured that $\lambda y \in B_i$.

We partition the index set $[d]$ into $S_{-1}$, $S_{+1}$ and $S_0$: The indices $j \in S_{-1}$ correspond to those where $\text{sign}((x_i)_j) = -\text{sign}(y_j) \not=0$, the indices $j \in S_{+1}$ satisfy that $\text{sign}((x_i)_j) = \text{sign}(y_j) \not=0$, and for the indices $j \in S_0 = [d]\setminus\{S_{-1},S_{+1}\}$ it holds that either $(x_i)_j = 0$ or $y_j = 0$. We can now write $f$ as
\begin{align*}
& f(\lambda) = p\left( \sum_{j \in S_{-1}} y_j\card{\lambda y_j - (x_i)_j}^{p-1}\text{sign}(\lambda y_j - (x_i)_j) \right. \\
& \qquad \left. + \sum_{j \in S_{+1}} y_j \card{\lambda y_j - (x_i)_j}^{p-1}\text{sign}(\lambda y_j - (x_i)_j) \right).
\end{align*}
We let $\check{\lambda}_i = \min\{\epsilon\card{(x_i)_j/y_j}\ :\ j \in S_{+1} \cup S_{-1}\}$, where $1 > \epsilon > 0$ is a small constant which we specify later, and we consider $f$ for $\lambda$ in the range $(0, \check{\lambda}_i]$. In this range observe that for $j \in S_{+1}$ we have $\text{sign}(\lambda y_j - (x_i)_j) = \text{sign}(\epsilon(x_i)_j - (x_i)_j) = \text{sign}(-\epsilon(x_i)_j) = -\text{sign}((x_i)_j)$. Furthermore, for $j \in S_{-1}$ we have $\text{sign}(\lambda y_j - (x_i)_j) = \text{sign}(\epsilon(-(x_i)_j/y_j) y_j - (x_i)_j) = -\text{sign}((x_i)_j)$.
Thus, for $\lambda \in (0,\check{\lambda}_i]$ we may write $f$ as follows.
\begin{align}
& f(\lambda) = -p \left( \sum_{j \in S_{-1}} y_j\card{\lambda y_j - (x_i)_j}^{p-1} \text{sign}((x_i)_j) \right. \notag \\
& \qquad \qquad + \left. \sum_{j \in S_{+1}} y_j \card{\lambda y_j - (x_i)_j}^{p-1} \text{sign}((x_i)_j)\right) \label{eq:partitioned}.
\end{align}
Observe that the terms in the first summation are negative, whereas the terms in the second summation are negative.
We subsequently derive (for $\lambda \in (0,\check{\lambda}_i]$) useful upper and lower bounds on the expression $\card{\lambda y_j - (x_i)_j}$ that occurs in all terms of the above summations, as follows: For $j \in S_{-1}$, 
observe that $\card{\lambda y_j - (x_i)_j} \leq \card{\check{\lambda}_i y_j - (x_i)_j} \leq \card{\epsilon(-(x_i)_j/y_j) y_j - (x_i)_j} \leq (1+\epsilon)\card{(x_i)_j}$.

For $j \in S_{+1}$, as $\lambda y_j$ is at most $\epsilon (x_i)_j/y_j \cdot y_j = \epsilon (x_i)_j \leq (x_i)_j$, we have $\card{\lambda y_j - (x_i)_j} \geq (1-\epsilon)\card{(x_i)_j}$.

These bounds imply that we can upper bound (\ref{eq:partitioned}) as follows, for $\lambda \in (0,\check{\lambda}_i]$: 
\begin{align}
& f(\lambda) \leq -p \left((1+\epsilon)^{p-1}\sum_{j \in S_{-1}} y_j\card{(x_i)_j}^{p-1} \text{sign}((x_i)_j) \right. \notag\\
& \qquad + \left. (1-\epsilon)^{p-1}\sum_{j \in S_{+1}} y_j\card{(x_i)_j}^{p-1} \text{sign}((x_i)_j) \right) \notag \\
& = -p \Bigg(((1+\epsilon)^{p-1} - (1-\epsilon)^{p+1}) \sum_{j \in S_{-1}} y_j\card{(x_i)_j}^{p-1} \text{sign}((x_i)_j)  \label{eq:negative} \\
& \qquad + \left. (1-\epsilon)^{p-1} \sum_{j \in [d]} y_j\card{(x_i)_j}^{p-1} \text{sign}((x_i)_j) \right). \label{eq:positive}
\end{align}
Let $\sigma^-$ denote the summation in line (\ref{eq:negative}) and let $\sigma^+$ denote the summation in line (\ref{eq:negative}).

Observe that $\sigma^+$ is a strictly positive quantity, as $y_i$ lies in $H_i$, which is described by the coefficients $\card{(x_i)_j}^{p-1} \text{sign}((x_i)_j)$ occurring in the summation. The quantity $\sigma^-$ on the contrary is non-positive by definition of $S_{-1}$, and is strictly negative is $S_{-1} \not=\varnothing$. To show that $f(\lambda)$ is negative for $\lambda \in (0,\check{\lambda}_i]$, we thus need to show that
\begin{equation*}
((1+\epsilon)^{p-1} - (1-\epsilon)^{p+1}) \card{\sigma^-} < (1-\epsilon)^{p-1}\sigma^+
\end{equation*}
which holds trivially in case $S_{-1} = \varnothing$, in which case $\epsilon$ can be chosen as an arbitrary value in $(0,1)$, say $1/2$. If $S_{-1} \not= \varnothing$ the above is equivalent to
\begin{equation*}
\frac{((1+\epsilon)^{p-1} - (1-\epsilon)^{p+1})}{(1-\epsilon)^{p-1}} < \frac{\sigma^+}{\card{\sigma^-}} .
\end{equation*}
Rewriting the left hand side as $((1+\epsilon)/(1 - \epsilon))^{p-1} - 1$ allows us to further rewrite the above into the equivalent expression
\begin{equation*}
\frac{(1+\epsilon)}{(1-\epsilon)} < \left(\frac{\sigma^i_n}{\card{\sigma^i_d}} + 1\right)^{1/(p-1)},
\end{equation*}
where $\sigma^i_n,\sigma^i_d$ are two integers such that $\sigma_n/\sigma_d$ is the result of evaluating the expression corresponding to $\frac{\sigma^+}{\card{\sigma^-}}$.
When setting $\epsilon = 2^{-\lceil \log_2(\sigma^i_d)\rceil - p - 1}$, the latter can be seen to hold, by the following derivation.
\begin{align*}
& \frac{(1+\epsilon)}{(1-\epsilon)} < (1+ \epsilon) = 1 + 2^{-\lceil \log_2(\sigma^i_d)\rceil - p-1} \\
& \qquad = ((1 + 2^{-\lceil \log_2(\sigma^i_d)\rceil-p-1})^{p-1})^{1/(p-1)} \\
& \qquad \leq (1 + 2^{p-1}2^{-\lceil \log_2(\sigma^i_d)\rceil-p-1})^{1/(p-1)} \\
& \qquad < \left(1 + \frac{1}{\sigma^i_d}\right)^{1/(p-1)} \leq \left(\frac{\sigma^i_n}{\sigma^i_d} + 1\right)^{1/(p-1)} ,
\end{align*}
where the second inequality holds due to the fact that $(1 + \epsilon)^{p-1} = \sum_{i = 0}^n\binom{k}{i} 1^{p-1-i}\epsilon^{i} = 1 + \sum_{i = 1}^n\binom{p-1}{i} 1^{p-1-i}\epsilon^{i} < 1 + 2^k\epsilon$.

This proves that for all $\lambda \in (0,\check{\lambda}_i]$ (where $\check{\lambda}_i = \min\{\epsilon\card{(x_i)_j/y_j} : j \in [d]\setminus S_0\}$, $\epsilon$ is set to $2^{-\lceil \log_2(\sigma^i_d)\rceil - p - 1}$ -- or $1/2$ in case $S_{-1} = \varnothing$, and $\sigma^i_d$ is the denominator of the expression $\sigma_+/\card{\sigma_-}$ evaluated to a rational number), the value $\lambda y$ lies in the ball $B_i$. Hence, the point $y \cdot \min \{\check{\lambda}_i\}$ lies in all balls $\{B_i : i \in S\}$, which proves our claim.
\end{proof}

\section{An efficient algorithm for one opponent and two dimensions}\label{apx:2d_case}
For the two-dimensional case ($d = 2$), it is possible to improve the running time of Theorem \ref{thm:regionenumeration} through a simple radial sweep algorithm, as there exists a much simpler approach to explore the regions of the hyperplane arrangement: Consider any two-dimensional election instance with one candidate, $(2,n,1,x,t)$, where $t = (0,0)$ and $x$ is a list of $n$ two-dimensional coordinates $x = (x_1, \ldots, x_n)$, where $x^i = ((x_i)_1, (x_i)_2)$ for $i \in [n]$. Assuming we use a $\ell^p$-norm, we compute each voter $i$'s hyperplane coefficients $(x_i')_1$ and $(x_i')^2$, given by $(x_i')_j = \text{sign}((x_i)_j)\card{(x_i)_j}^{p-1}$ (Lemma \ref{lem:pballtangent}).

The halfspace $H_i$ can be expressed as $\{(z_1,z_2)\ :\ (x_i')_1z_1 + (x_i')_2z_2 > 0\}$. 
Let $c_i = ((x_i')_1/-(x_i')_2)$, so that an equivalent way of expressing the latter halfspace is $H_i = \{(z_1,z_2)\ :\ c_iz_1 > z_2\}$.\footnote{For simplicity, here we assume that $(x_i')_2 \not= 0$ for all $i \in [n]$ to avoid division by $0$.} We now imagine the process of ``walking'' around a circle centred at the origin $(0,0)$ in a clockwise direction, starting at the point $(0,1)$. We may first compute in how many of the open halfspaces $(0,1)$ lies, and each time we cross a boundary of a halfspace, we add $+1$ or $-1$ to that number, depending on whether we enter or exit the halfspace. We keep record of the highest count that we encounter during our walk, along with the point at which that count was achieved.

During such a walk, each halfspace is encountered once during the first half of the trajectory (i.e., from $(0,1)$ to $(0,-1)$, at which point it is either entered or exited. During the second half of the trajectory (i.e., from $(0,-1)$ to $(0,1)$) the halfspaces are all encountered again, and in the same order. If a halfspace was entered during the first half of the walk, then it is exited during the second half of the walk, and vice versa. 
The order in which the corresponding lines are crossed during the first half (and thus also the second half) of the walk is straightforward: If $c_i > c_j$, then the boundary of $H_i$ is encountered before the boundary of $H_j$, and the sign of $(x_i')_2$ determines whether $H_i$ is entered or exited. If in the first half of the walk $(x_i')_2 > 0$, then the hyperplane $H_i$ is exited, and otherwise it is entered. We can translate the process we just described straightforwardly into a concrete algorithm, presented in Algorithm \ref{2d_case} (\textsc{OneOpponent\_2D}), where we also took into account the handling of voters with $(x_i')_2 = 0$.

The correctness and running time of Algorithm \ref{2d_case} follows from the formulation and discussion above. The sorting step at Line \ref{line:reindex} forms the computational bottleneck and runs in time $O(n\log n)$.

\begin{theorem}
    Algorithm 1 solves OCP($\nu_p$) in time $O(n \log n)$, when $m=1$, $d=2$, and $p \in \mathbb{N}, p > 1$.
\end{theorem}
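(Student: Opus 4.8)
The plan is to formalise the radial-sweep idea already sketched in the appendix text and check that Algorithm~\ref{2d_case} correctly identifies a region of the central line arrangement $\mathcal{H}=\{h_1,\dots,h_n\}$ that is contained in the maximum number of halfspaces $H_i$, and then invoke Lemma~\ref{L:COV-PLN} and Proposition~\ref{prop:scaledown} to turn that region into an explicit location. First I would set up the geometry: since $t_1=\mathbf 0$ and we use an $\ell^p$-norm, Lemma~\ref{lem:pballtangent} gives each $h_i$ as a line through the origin with coefficient vector $(x_i')=(\mathrm{sign}((x_i)_j)|(x_i)_j|^{p-1})_j$, and $H_i$ is the open halfspace $\{(z_1,z_2): (x_i')_1 z_1 + (x_i')_2 z_2 > 0\}$. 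By Lemma~\ref{L:COV-PLN}, a point in $\cap_{i\in S} B_i$ exists iff $\cap_{i\in S} H_i \neq \varnothing$; and because the arrangement is central, the feasible voter sets $S$ are exactly those realised on the unit circle. So it suffices to walk once around the circle, maintaining the current count $|\{i: \text{current point}\in H_i\}|$, and record the maximum.

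Next I would argue correctness of the sweep itself. Parametrise the circle by angle; each line $h_i$ meets the circle in two antipodal points, so over a full $2\pi$ walk we cross each boundary exactly twice, and the two crossings of $h_i$ have opposite effect on membership in $H_i$ (enter then exit, or vice versa). Hence it is enough to sort the lines by the angle at which their boundary is crossed and process crossings in that order, flipping the contribution of each $H_i$ by $\pm1$; the sign of $(x_i')_2$ (together with which half of the walk we are in) determines whether a given crossing is an entry or an exit, exactly as described in the appendix, and the ordering is governed by the slopes $c_i=(x_i')_1/(-(x_i')_2)$. The degenerate case $(x_i')_2=0$ (a vertical boundary line) is handled as a special crossing at the top/bottom of the circle; I would note that these finitely many extra cases only add $O(n)$ work. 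The value of the counter on any open arc between consecutive crossings equals $|V_p(t)|$ for any $t$ in the corresponding region scaled toward the origin, so the maximum counter value over the walk equals $\max_t \nu_p(t)$, and the arc attaining it yields the optimal region; picking any interior point of that arc and scaling it down via Proposition~\ref{prop:scaledown} produces a rational optimal location with polynomially many bits.

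For the running time, I would observe that computing all coefficient vectors $(x_i')$ costs $O(nd)=O(n)$ arithmetic operations with polynomially bounded rationals (using $p\in\mathbb N$), initialising the counter at the start point $(0,1)$ costs $O(n)$, and the sweep visits $O(n)$ crossings each processed in $O(1)$. The bottleneck is sorting the lines by crossing angle (equivalently by the $c_i$, with ties and vertical cases broken consistently), which is Line~\ref{line:reindex} and costs $O(n\log n)$. Summing gives $O(n\log n)$, matching the claimed bound, and this dominates the $O(n^{d+1})=O(n^3)$ bound of Theorem~\ref{thm:regionenumeration} for $d=2$.

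The main obstacle I expect is bookkeeping the boundary cases cleanly rather than any deep difficulty: making sure that (i) lines with $(x_i')_2=0$, (ii) coincident lines (several voters inducing the same boundary), and (iii) the exact moments the walk passes $(0,1)$ and $(0,-1)$ are all assigned the correct entry/exit semantics so that the counter is never off by a unit, and that "open region'' semantics (strict inequalities, ties broken for incumbents) are respected so the reported point genuinely lies in the open balls. Once the sweep is shown to enumerate precisely the full-dimensional regions of the central arrangement in cyclic order with correct incremental updates, correctness follows from Lemma~\ref{L:COV-PLN} and the bit-complexity claim from Proposition~\ref{prop:scaledown}, completing the proof. \qed
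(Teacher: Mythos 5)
Your proposal is correct and follows essentially the same route as the paper: it formalises the same radial sweep around the origin (coefficients from Lemma~\ref{lem:pballtangent}, the ball-to-halfspace translation via Lemma~\ref{L:COV-PLN}, sorting the crossing order by the slopes $c_i$ as the $O(n\log n)$ bottleneck, and recovering an explicit location via Proposition~\ref{prop:scaledown}), which is exactly the argument the paper relies on when it states that correctness and running time follow from the preceding discussion. Your additional attention to the degenerate cases ($(x_i')_2=0$, coincident lines, the endpoints of the walk) is sound bookkeeping but does not change the approach.
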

Proposition \ref{prop:scaledown} can again be used, in conjunction with Algorithm \ref{2d_case}, to find an actual location corresponding to the output voter set.

\begin{algorithm}
  \caption{$d = 2, m = 1$ (i.e., 2 dimensions, one competing candidate).}\label{2d_case}
  \begin{algorithmic}[1]
    \Procedure{OneOpponent\_2D}{$n,x$}
    \State Let $(2,n,1,x,(0,0))$ be the input election instance. 
    Compute $x' = ( ((x_i')_1, (x_i')_2) : i \in [n] )$, where $(x_i')_j = \text{sign}((x_i)_j)\card{(x_i)_j}^{p-1}$ and let $X = \{i \in [n] : (x_i')_2 \not=0\}$. 
    \State Let $c_i = (x_i')_1/-(x_i')_2$ for all $i \in X$. Set $c_i = \infty$ for all $i \in [n]\setminus X$ for which $(x_i')_1 > 0$. Set $c_i = -\infty$ for all $i \in [n]\setminus X$ for which $(x_i')_1 < 0$.
    \State \label{line:reindex} Order the voters in descending order of $c_i$ (breaking ties arbitrarily). Denote by $\sigma(i)$ the $i$'th voter in this ordering.
    \State 
    Set $\textit{BestVoterSet}$ and $\textit{CurrentVoterSet}$ to the set of voters corresponding to the half-spaces that contain $(0,0)$. 
    \For{$i = 1,\ldots,n$} \Comment{{\tiny This loop refers to the first half of our walk (from $(\epsilon, 1)$ to $(0,-1)$). During our walk we encounter the voters/hyperplanes in ascending order of $c_i$.}} 
        \If{$(x_{\sigma(i)}')_2 < 0$}
        \Comment{{\tiny \emph{Entering} $\sigma(i)$'s halfspace.}}
            \State {\small $\textit{CurrentVoterSet} := \textit{CurrentVoterSet} \cup {\sigma^{-1}(i)}$.}
        \Else \Comment{{\tiny \emph{Exiting} $\sigma(i)$'s halfspace.}}
            \State {\small $\textit{CurrentVoterSet} := \textit{CurrentVoterSet} \setminus \sigma^{-1}(i)$.}
        \EndIf
        \If{$\card{\textit{CurrentVoterSet}} > \card{\textit{BestVoterSet}}$}
            \State {\small $\textit{BestVoterSet} := \textit{CurrentVoterSet}$}.
        \EndIf
    \EndFor
    \For{$i = n,\ldots,1$} \Comment{{\tiny This loop refers to the second half of our walk (from $(-\epsilon, -1)$ to $(0,1)$). During our walk we encounter the voters/hyperplanes in descending order of $c_i$.}} 
        \If{$(x_{\sigma(i)}')_2 < 0$}
        \Comment{{\tiny \emph{Exiting} $\sigma(i)$'s halfspace.}}
            \State {\small $\textit{CurrentVoterSet} := \textit{CurrentVoterSet} \setminus {\sigma^{-1}(i)}$.}
        \Else \Comment{{\tiny \emph{Entering} $\sigma(i)$'s halfspace.}}
            \State {\small $\textit{CurrentVoterSet} := \textit{CurrentVoterSet} \cup \sigma^{-1}(i)$.}
        \EndIf
        \If{$\card{\textit{CurrentVoterSet}} > \card{\textit{BestVoterSet}}$}
            \State {\small $\textit{BestVoterSet} := \textit{CurrentVoterSet}$.}
        \EndIf
    \EndFor
    \State \Return {\small $\textit{BestVoterSet}$, $y \in \cap\{H_i:i \in \textit{BestVoterSet}\}$.}
    \EndProcedure
  \end{algorithmic}
\end{algorithm}

\section{A Simple Approximation Algorithm}\label{apx:simpleapprox}
Here, we make very simple observation resulting in an approximation algorithm for $m=1$, if we suppose that the critical regions of the voters have a differentiable boundary (in the sense of satisfying the conditions of Lemma \ref{L:COV-PLN}). Let the candidate be located at $\mathbf{0}$ and consider the central hyperplane arrangement $\{h_1,\ldots, h_n\}$. Fix an arbitrary point $t \in \mathbb{R}^d$ that does not lie on any of $\{h_1, \ldots, h_n\}$. Let $S$ be the set of voters such that $t \in H_i$ for all $i \in S$. If $t$ is sufficiently close to $\mathbf{0}$ (which can be achieved by scaling down the initial choice of $t$ by an appropriate amount), by Lemma \ref{L:COV-PLN}, the point $t$ receives the votes by the voters in $S$. However, also observe that the point $-t$, i.e., on the opposite side of the candidate, receives all votes in $[n]\setminus S$. As a result, if there are less than $n/2$ voters who would vote for a candidate placed at $t$, there are more than $n/2$ voters who would vote for a candidate placed at $-t$. This gives us the following result. 
\begin{proposition}
For $m=1$, taking the better of the two locations $\{t,-t\}$,  where $t$ is an arbitrary point sufficiently close to $\mathbf{0}$ (and not on any of the hyperplanes $h_1,\ldots, h_n$) results in at least half the number of votes and thus yields a $2$-approximation to  OCP($\nu_p$). 
\end{proposition}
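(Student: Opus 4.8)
The plan is to exhibit a single pair of candidate locations $\{t, -t\}$ whose vote counts add up to essentially all $n$ voters, and then conclude by pigeonhole. First I would fix any point $t \in \mathbb{R}^d$ that lies on none of the hyperplanes $h_1, \ldots, h_n$ (such a point exists since the arrangement is a measure-zero set). Because each $h_i$ is a hyperplane through the origin with associated open halfspace $H_i$, and because $t \notin h_i$, exactly one of the two statements $t \in H_i$ or $-t \in H_i$ holds for every $i \in [n]$: indeed $t \in H_i$ iff the linear functional defining $h_i$ is positive at $t$, and negating $t$ flips the sign. Hence if $S = \{i : t \in H_i\}$, then $[n]\setminus S = \{i : -t \in H_i\}$, so $|S| + |[n]\setminus S| = n$.

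Next I would invoke Lemma~\ref{L:COV-PLN} to transfer this from halfspaces to critical regions. The lemma tells us that for the set $S$, the intersection $\cap\{H_i : i \in S\}$ being non-empty is equivalent to $\cap\{B_i : i \in S\}$ being non-empty; moreover, its proof shows that a feasible point can be taken on the segment from $\mathbf{0}$ to $t$, i.e. a suitably scaled-down copy $\lambda t$ of $t$ lies in $\cap\{B_i : i \in S\}$. The same applies to $-t$ and the voter set $[n]\setminus S$. Thus there is a small enough scaling factor $\lambda > 0$ such that $\lambda t$ receives the votes of all voters in $S$ and $-\lambda t$ receives the votes of all voters in $[n]\setminus S$; replacing $t$ by $\lambda t$ we may assume $t$ itself has this property. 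Since $|S| + |[n]\setminus S| = n$, at least one of the two quantities $|S|, |[n]\setminus S|$ is $\geq n/2$, so $\max\{\nu_p(t), \nu_p(-t)\} \geq n/2 \geq \mathrm{OPT}/2$, where $\mathrm{OPT} \leq n$ is the optimal vote count. Evaluating $\nu_p$ at the two candidate points and returning the better one therefore yields a $2$-approximation, which is exactly the claim.

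The only genuinely delicate point is ensuring that a \emph{single} scaling factor $\lambda$ works simultaneously for both $t$ (with respect to $S$) and $-t$ (with respect to $[n]\setminus S$), and that the resulting point has polynomially bounded bit complexity if one cares about computability — but for the qualitative statement it suffices to take the minimum of the two scaling factors guaranteed by (the proof of) Lemma~\ref{L:COV-PLN}, or to appeal directly to Proposition~\ref{prop:scaledown}. Everything else is a direct consequence of the linearity of the halfspaces $H_i$ through the origin and the sign-flip symmetry under $t \mapsto -t$, so no further machinery is needed. I expect the write-up to be short; the main thing to be careful about is stating clearly that the differentiability/convexity hypotheses of Lemma~\ref{L:COV-PLN} are met, which holds for the $\ell^p$-norm critical regions when $1 < p < \infty$.
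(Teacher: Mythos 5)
Your proposal is correct and follows essentially the same route as the paper's own argument: pick an arbitrary $t$ off the hyperplanes, use the sign-flip symmetry of the central arrangement so that the voter sets for $t$ and $-t$ partition $[n]$, scale toward $\mathbf{0}$ via Lemma~\ref{L:COV-PLN}, and conclude by pigeonhole. Your extra care about using a single scaling factor for both points (taking the minimum, or Proposition~\ref{prop:scaledown}) is a detail the paper leaves implicit, but it does not change the approach.
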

Observe that from the above simple insights it also follows that, in light of the OCP($R_p$) variant of the problem, it is usually possible for the new candidate to obtain the majority of the votes: If there is a location $t$ that receives less than half of the votes, then the location $-t$ must receive more than half of the votes. Thus, the only election instances where a new candidate cannot obtain a majority are those where every point $t$ is in exactly half of the halfspaces, and that only happens if for every voter $i$ (located at $x_i$), there is another voter at location $-x_i$, which can be checked in linear time. Note that, nonetheless, finding an actual location that results in receiving a majority vote requires in general more effort than simply picking an arbitrary location $t$ and choosing between $t$ and $-t$ (but a radial sweep as done in Algorithm \ref{2d_case} suffices for this). 

\section{An algorithm for competing against many candidates in fixed dimension}\label{apx:balls3}

Crama et al. \cite{crama_complexity_1995} and \cite{cramaibaraki} present algorithms that find an maximum cardinality set of non-intersecting balls in an arrangement of Euclidean balls in $\mathbb{R}^d$, which is what we need for the general case of our problem where there is an arbitrary number of candidates $m$ present in the election instance. The claimed time complexity for the algorithms are $O(n^{d+1})$ and $O(n^{d+2})$, under the assumptions that the balls are represented by their centres and radii. A crucial step in the algorithm of \cite{crama_complexity_1995} requires the computation of the solution to a set of equations, one of which is quadratic and the remainder of which are linear. The authors refer to an untraceable dissertation by C. Meyer for an algorithm and time complexity analysis of solving such a system. It is thus at present unclear to us whether the authors' time complexity claim is valid and whether the authors' claim on it holds under a classical model of computation. The algorithm of \cite{cramaibaraki} relies on different and interesting techniques, and its analysis and specification is complete. However, it has the drawback that it does not find an actual point in $\mathbb{R}^d$, but merely outputs the maximum index set of balls with a non-empty intersection. Another aspect of the algorithms in \cite{crama_complexity_1995} that is not compatible with our setting is that in their algorithm the balls are assumed to be closed, whereas our election setting translates to arrangements of open balls instead.

To partially address this issue, in this appendix, we provide an alternative solution, based on an algorithm by Lushchakova \cite{lushchakova_geometric_2020}. We prove that this adaptation of Lushchakova's work has a polynomial running time when $d$ is fixed, where it's time complexity is $O(n^{2d-4}(nd^2 + d^3 + n^2\log n) + nm)$. 

Concretely, \cite{lushchakova_geometric_2020} provides an algorithm $\textit{BALLS3}$, that decides whether a set of $n$ Euclidean balls in $\mathbb{R}^d$ have a common intersection point. Note that $\textit{BALLS3}$ solves a \emph{decision} problem closely related to our optimisation problem OCP($\nu_2$), but some additional work is needed to make the algorithm solve the corresponding maximisation problem, which is what we need. Also, like \cite{crama_complexity_1995}, Lushchakova's algorithm assumes an arrangement of closed balls as input, so we need to make the appropriate modifications to make the algorithm compatible with open balls instead.

Another important problem that we resolve about Lushchakova's method is that her original algorithm runs on a rather powerful version of a real RAM model, in which both the elementary arithmetic operations and the square root operation can be carried out in unit time. This issue is unaddressed in Lushchakova's paper \cite{lushchakova_geometric_2020}. At the end of this section, we show how to turn the algorithm into one that runs in strictly polynomial time, in the bit complexity sense, by making adaptation so that in particular the need for taking square roots is eliminated.

\subsection{A modified \textit{BALLS3} algorithm}
\label{sec:modball3}

The original $\textit{BALLS3}$ algorithm \cite{lushchakova_geometric_2020} for $d > 2$ takes a recursive approach to solve the problem of deciding whether a given set of balls (each described by a radius and a centre point), have a common intersection point, where an auxiliary algorithm $\textit{BALLS1}$ (also derived independently in \cite{drezner}) is provided to handle a ``base case'', where $d=2$. We adapt these algorithms to solving the OCP($\nu_2$) problem instead. Note that besides the fact that our problem is a maximisation problem rather than a decision problem, the input is also of a different form: The radii of the critical regions (i.e., balls) of the voters are not given as part of the input, so we will pre-compute these as a first step. A further difference is that the original $\textit{BALLS1}$ and $\textit{BALLS3}$ algorithms assume closed balls, and may output points that may lie at the boundaries of some of the balls. In our setting, we must output points that lie in the interior instead. This translates into various subtle but important modifications to $\textit{BALLS1}$, which carry over to $\text{BALLS3}$, as it uses $\textit{BALLS1}$ as a subroutine. Our modification of $\textit{BALLS1}$ is as follows.

\paragraph*{$\textit{ModifiedBALLS1}$:}
\begin{enumerate}
    \item Let $(2,m,n,x,t)$ be the input election instance (i.e., where $d = 2$). Compute the radius $r_i$ of $B_i$ for all $i\in[n]$.
    \item For all pairs $i,i'\in[n]$ of voters, determine whether $B_i$ and $B_{i'}$ intersect. If they do, record the two points where the boundaries intersect by $p_{i,i'}^1$ and $p_{i,i'}^2$.
    \item For each $i \in [n]$, sort the set of all points $P_i = \{p_{i,i'}^1, p_{i,i'}^2\ :\ j \in [n]\setminus\{i\}, B_i \cap B_{i'} \not= \varnothing \}$ on the boundary of $B_i$ in clockwise order (starting from an arbitrary point $p^i_0 \in P_i$). Refer to the resulting ordered points as $(p^i_0, p^i_1, p^i_2, \ldots, p^i_{\card{P_i}})$ and (for notational convenience) define $p^i_{-1}$ and $p^i_{\card{P_i} + 1}$ as $p^i_{0}$.
    \item For each $i \in [n]$, for each $k = 0, \ldots, \card{P_i}$, compute the set $S(i,k) = \{i' \in [n] : \text{mid}(p_k^i,p_{k+1}^i) \in B_{i'}\}$ where $\text{mid}(p_k^i,p_{k+1}^i)$ is an arbitrary point on the open segment $\text{Bnd}(i,k)$ of the boundary that runs clockwise from $p_k^i$ to $p_{k+1}^i$ (observe that, if $\text{Bnd}(i,k)$ is non-empty, this is well-defined as $S(i,k)$ is independent of the choice of point on the boundary segment), and $\text{mid}(p_k^i,p_{k+1}^i) = p_k^i$ if $\text{Bnd}(i,k)$ is empty. Note that after computing $S(i,0)$, the sets $S(i,k)$ can be efficiently computed by letting $i'$ be the voter such that $p_k^{i} \in \{p_{i,i'}^1,p_{i,i'}^2\}$, setting $S(i,k) = S(i,k-1) \cup \{i'\}$ if $i \in S(i,k-1)$ and setting $S(i,k) = S(i,k-1) \setminus \{i'\}$ otherwise.
    \item Let $k,i$ be a pair maximising $\card{S(i,k)}$ among all $\{S(i,k)\ :i \in [n], 0 \leq k \leq \card{P_i}\}$. Output the set $S(i,k)$ along with any point in the open line segment $(p_k^i,p_{k+1}^i)$.
\end{enumerate}

\begin{proposition}
$\textit{ModifiedBALLS1}$ solves OCP($\nu_2$) and runs in time $O(n^2\log n + nm)$.
\end{proposition}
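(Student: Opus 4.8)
The plan is to prove two things about $\textit{ModifiedBALLS1}$: that the point $q$ it returns attains $\max_{t}\nu_2(t)$, and that the five steps together run in time $O(n^2\log n+nm)$. I would open by fixing the reduction: for $d=2$ the critical region $B_i$ is the open Euclidean disc with centre $x_i$ and radius $r_i=\min_{k\in[m]}\lVert x_i-t_k\rVert_2$, so $\nu_2(t)=\card{\{i:t\in B_i\}}$, and OCP($\nu_2$) is exactly the task of finding a point contained in as many of the open discs $B_1,\dots,B_n$ as possible. The heart of the proof is the identity
\[
\max_{t\in\mathbb{R}^2}\nu_2(t)=\max_{i,k}\bigl(\card{S(i,k)}+1\bigr),
\]
valid whenever the optimum is at least $2$ (it is always at least $1$ since $x_1\in B_1$, and the case $\le1$ is immediate), together with the fact that a point winning $S(i,k)\cup\{i\}$ is produced explicitly.

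For the ``$\le$'' inequality, which also certifies the returned point, I would argue that for every $i$ and every arc $\text{Bnd}(i,k)$ the open straight chord joining the endpoints $p^i_k,p^i_{k+1}$ is contained in the open disc $B_i$ (strict convexity of a disc) and in each $B_{i'}$ with $i'\in S(i,k)$: membership in $B_{i'}$ is constant along the open arc, so the whole open arc lies in $B_{i'}$ and hence $p^i_k,p^i_{k+1}\in\overline{B_{i'}}$; by convexity the chord lies in $\overline{B_{i'}}$, and a short case check — using strict convexity in the subcase where both endpoints happen to lie on $\partial B_{i'}$ — pushes the open chord into the open disc. Thus any $q$ on that chord satisfies $\nu_2(q)\ge\card{S(i,k)}+1$, so the output is feasible and the left side dominates.

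For the ``$\ge$'' inequality I would take an optimal location winning a voter set $T$ with $\card{T}\ge2$ and set $C=\bigcap_{i\in T}B_i$, a nonempty bounded open convex set. Its boundary is a convex curve covered by finitely many circular arcs, each drawn from some $\partial B_i$ with $i\in T$, and at least one of these, say a sub-arc of $\partial B_{i_1}$, has positive length. A point $q$ in its relative interior chosen to avoid the finitely many places where other circles cross $\partial B_{i_1}$ and the finitely many points of $P_{i_1}$ then lies on $\partial B_{i_1}$ and, since only the $B_{i_1}$-constraint is tight at such a smooth boundary point of $C$, lies in the \emph{open} disc $B_{i'}$ for every $i'\in T\setminus\{i_1\}$; hence $q$ is in the relative interior of some $\text{Bnd}(i_1,k)$ with $S(i_1,k)\supseteq T\setminus\{i_1\}$, giving $\card{S(i_1,k)}+1\ge\card{T}$. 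Combining the two inequalities and invoking the previous paragraph for the maximising $(i,k)$ finishes correctness; degenerate configurations (coincident or tangent circles, $P_i=\varnothing$, an arc that wraps all the way around) are dispatched as easy special cases.

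For the running time: step~1 computes the $r_i$ in $O(nm)$; step~2 forms the at most two circle–circle intersection points of each of the $O(n^2)$ pairs in $O(1)$ each, so $O(n^2)$; step~3 sorts each $P_i$ (of size $O(n)$) around $\partial B_i$, the bottleneck at $O(n^2\log n)$; in step~4, $S(i,0)$ costs $O(n)$ by testing one arc-interior point against all discs, and each subsequent $S(i,k)$ differs from $S(i,k-1)$ in a single voter (the other voter whose circle passes through $p^i_k$), so cardinalities are maintained in $O(1)$ per arc and $O(n)$ per $i$, with one final pass reconstructing the maximising set; step~5 scans the $O(n^2)$ recorded cardinalities. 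Summing gives $O(n^2\log n+nm)$. I expect the main obstacle to be the open/closed-ball bookkeeping: making the chord point genuinely win index $i$ and all of $S(i,k)$ when the arc endpoints are only known to lie in the \emph{closures} of the other balls, and — in the ``$\ge$'' direction — justifying the ``$+1$'' by showing that $\partial C$ always contributes a positive-length arc from a circle whose index lies in $T$, so that the optimum is never witnessed only by a deep cell no bounding arc of which is traversed outward. As elsewhere in this section, exact handling of the square roots arising in the $r_i$ and in the intersection points is deferred to the bit-complexity discussion at the end.
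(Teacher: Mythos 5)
Your argument is correct (modulo the degeneracies you flag yourself, which the paper also leaves implicit), and it matches the paper on the feasibility half and on the running time: the paper's proof likewise places the output point on the open chord joining two consecutive intersection points on $\partial B_i$ and argues it lies in $B_i$ and in $\cap\{B_{i'} : i' \in S(i,k)\}$ by convexity, and its time accounting ($O(nm)$ for radii, $O(n^2\log n)$ for the $n$ sorts, $O(n^2)$ for the incremental maintenance of the best $S(i,k)$) is the same as yours; your handling of the open-versus-closed subtlety (endpoints only in the closures, strict convexity when both endpoints are on $\partial B_{i'}$) is in fact more careful than the paper's one-line convexity remark. Where you genuinely differ is the optimality direction: the paper simply asserts that $\max_k \card{S(i,k)}$ is the largest set of voters whose regions mutually intersect and meet $B_i$, deferring the underlying analysis to Lushchakova's correctness proof of BALLS1, whereas you prove it from scratch by taking an optimal winning set $T$, looking at the boundary of $C=\cap_{i\in T}B_i$, extracting a positive-length circular arc contributed by some $\partial B_{i_1}$ with $i_1\in T$, and choosing a generic point on it that avoids the finitely many crossing points and points of $P_{i_1}$, which yields $T\setminus\{i_1\}\subseteq S(i_1,k)$. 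This buys a self-contained proof and makes explicit exactly the step the paper leaves to the cited work (that an optimum is always witnessed on a boundary arc of a ball indexed by a winning voter, whenever $\card{T}\ge 2$); the residual point you flag — that $\partial C$ contains a positive-length arc from a single circle — does hold (finitely many closed arcs cover a curve of positive length, so some relatively open connected piece lies on one circle), so it is a detail to write out rather than a gap. One cosmetic remark: your ``$\le$'' and ``$\ge$'' labels are swapped relative to the inequalities you actually prove, though both directions of the identity are covered.
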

\begin{proof}[Proof]
The correctness follows from the analysis in \cite{lushchakova_geometric_2020}, with the following additional remarks taken into account (which account for the modifications made to the original algorithm). The sets $S(i,k)$ represent the voters who would vote for a point on the segment $\text{Bnd}(i.k)$ of the boundary of $B_i$, and this set changes by only a single voter for each consecutive segment $\text{Bnd}(i,k)$ of the boundary of $B_i$. It is clear that the set $S^*(i,k)$ maximising $\card{S_{i,k}}$ over $0 \leq k \leq \card{P_i}$ is the maximum set of voters excluding $i$, such that their critical regions all intersect and furhermore intersect $B_i$. The algorithm outputs the optimal solution because it outputs the set $S^*(i,k) \cup \{i\}$ of maximum cardinality among $\{S^*(i,k) : i \in [n]\}$ along with a point in the intersection of all critical regions of the output set of voters. To see that the point output is indeed in all of those critical regions, we note the following: Since the intersection points $p_0^i,\ldots,p_{\card{P_i}}^i$ are points on the boundary of $B_i$, and our definition of the critical region of a voter does not include the boundary, no point of $\text{Bnd}(i.k)$ is in $B_i$ itself. However, clearly, the open line segment $(p_k^i,p_k^{i+1})$ does include $B_i$, and must lie in $\cap \{B_i\ : S(i,k)\}$ due to the fact that $p_k^i$ and $p_k^{i+1}$ are adjacent intersection points on the boundary of $B_i$, and because $\cap \{B_i\ : S(i,k)\}$ is convex. Furthermore, note that some of the steps in the above description implicitly make use of the radii computed in the first step.

The runtime of the algorithm is $O(n^2 \log n + mn)$, where the $O(nm)$ stems from the computation of the radii at the start. And the $O(n^2 \log n)$ stems from the $n$ sorting operations in the third step of the algorithm. Note that Step 4 can be implemented to run in time $O(n^2)$ because the algorithm does not need to record all of the values of $S(i,k)$: rather, it suffices to record only the $S(i,k)$ with the highest cardinality.
\end{proof}

Next, we state our modified version of $\textit{BALLS3}$. Given an election instance $I = (d,n,m,x,t)$, the algorithm and its correctness is based on the following observation, where we use the notation $\mathcal A^K(S)$ (see Definition \ref{def:keynotions}). Furthermore, we say that a voter in a set $S$ is \emph{boundary-defining for $S$} iff $\mathcal A^K(S \setminus \{i\}) \not= \mathcal A^K(S)$. The algorithm uses as a key idea the \emph{hyperplane that contains the intersection of the boundaries of $B_i$ and $B_{i'}$}, where $i$ and $i'$ are two voters whose critical regions intersect and are not contained in each other. We call this hyperplane $h_{i,i'}$. The intersection of the boundaries of $B_i$ and $B_{i'}$ is in itself a sphere, and thus forms the boundary of a ball of one dimension less, that lies on the hyperplane $h_{i,i'}$.

\begin{lemma}\label{lem:spherehyperplane}
If the maximum-cardinality set of voters $S$ for which $\mathcal A^K(S)$ is non-empty has at least two boundary-defining voters $i$ and $i'$, then $h_{i,i'} \cap \mathcal A^K(S) \not=\varnothing$.
\end{lemma}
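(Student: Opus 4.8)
The statement is essentially a geometric fact about how the intersection region $\mathcal{A}^K(S)$ sits relative to the boundary spheres of two balls that actively shape it. The plan is to argue by a contradiction-free perturbation argument combined with the convexity of $\mathcal{A}^K(S)$. First I would recall that $\mathcal{A}^K(S) = \cap\{B_i : i \in S\}$ is a nonempty open convex set, being an intersection of open balls (in the Euclidean norm). Since $i$ is boundary-defining for $S$, we have $\mathcal{A}^K(S \setminus \{i\}) \supsetneq \mathcal{A}^K(S)$, which means the constraint imposed by $B_i$ is \emph{active}: there is a point of $\mathcal{A}^K(S \setminus \{i\})$ on the boundary sphere $\partial B_i$, and more to the point, the closure $\mathrm{Cl}(\mathcal{A}^K(S))$ touches $\partial B_i$. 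The same holds for $i'$ and $\partial B_{i'}$.

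The key step is to show that these two boundary spheres must be touched \emph{at a common point}, or at least that the touching sets overlap in a way that forces that common point onto $h_{i,i'}$. Concretely, I would consider $\mathrm{Cl}(\mathcal{A}^K(S))$, a compact convex set. Because $B_i$ is a boundary-defining (active) constraint, the set $\mathrm{Cl}(\mathcal{A}^K(S)) \cap \partial B_i$ is nonempty; pick a point $q_i$ in it, and similarly $q_{i'} \in \mathrm{Cl}(\mathcal{A}^K(S)) \cap \partial B_{i'}$. If $q_i = q_{i'}$ we are essentially done, since a point lying on both $\partial B_i$ and $\partial B_{i'}$ lies on their radical hyperplane $h_{i,i'}$ by definition of that hyperplane (the radical/power hyperplane is exactly the locus of points with equal power to the two spheres, and points on both spheres have power zero to each). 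The harder sub-case is $q_i \neq q_{i'}$: here I would use convexity of $\mathrm{Cl}(\mathcal{A}^K(S))$ to slide along the segment $[q_i, q_{i'}]$ (or rather argue that the ``active face'' structure forces a point simultaneously on both boundaries). The cleanest route: observe that if $\mathcal{A}^K(S)$ stayed entirely on one side of $h_{i,i'}$, say strictly inside $B_i$'s side, then the constraint $B_i$ would be redundant there — but that contradicts $i$ being boundary-defining. More carefully, the function $\phi(y) = \|y - c_i\|^2 - \|y - c_{i'}\|^2$ (affine in $y$, with zero set $h_{i,i'}$) restricted to $\mathrm{Cl}(\mathcal{A}^K(S))$: at $q_i$ we have $\|q_i - c_i\| = r_i$, and since $q_i \in \mathrm{Cl}(B_{i'})$, $\|q_i - c_{i'}\| \leq r_{i'}$; if $B_i \subseteq B_{i'}$ were false and the regions genuinely interact, this gives a sign of $\phi$ at $q_i$, and an opposite sign at $q_{i'}$, so by the intermediate value theorem along the segment $[q_i, q_{i'}] \subseteq \mathrm{Cl}(\mathcal{A}^K(S))$ there is a point where $\phi = 0$, i.e., a point of $\mathrm{Cl}(\mathcal{A}^K(S))$ on $h_{i,i'}$. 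Finally I would promote this to an \emph{open} intersection: since $\mathcal{A}^K(S)$ is open and its closure meets $h_{i,i'}$, and $h_{i,i'}$ is an affine hyperplane that is not a supporting hyperplane of $\mathcal{A}^K(S)$ (otherwise both $B_i, B_{i'}$ would be inactive or one contained in the other), the hyperplane actually cuts through the interior, so $h_{i,i'} \cap \mathcal{A}^K(S) \neq \varnothing$.

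\textbf{Main obstacle.} The delicate point is ruling out the degenerate configurations where $h_{i,i'}$ is merely tangent to (supports) $\mathcal{A}^K(S)$ rather than slicing its interior, and pinning down exactly why ``boundary-defining'' for \emph{both} $i$ and $i'$ rules this out. One must use that both constraints are simultaneously active on the \emph{same} region $S$, not just that each is active in isolation: the argument needs the two active boundary spheres to constrain overlapping portions of $\partial(\mathcal{A}^K(S))$, which is where the hypothesis that $i,i'$ are both boundary-defining for the \emph{same maximum-cardinality} set $S$ does real work. I would handle this by a careful case analysis on the containment relation between $B_i$ and $B_{i'}$ (if one contains the other, one of them cannot be boundary-defining, a contradiction), after which $h_{i,i'}$ genuinely separates their centers and the radical-hyperplane/IVT argument goes through cleanly.
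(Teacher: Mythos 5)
Your overall strategy (closure points on the two spheres, the affine power-difference function whose zero set is $h_{i,i'}$, a sign/intermediate-value argument, then convexity) is in the same spirit as the paper's proof, which also argues that the region must reach both the part of $\partial B_i$ and the part of $\partial B_{i'}$ bounding $B_i\cap B_{i'}$, notes these lie on opposite sides of $h_{i,i'}$, and concludes by convexity. However, there is a genuine gap exactly at the step you flag as the ``main obstacle'', and your proposed repair does not close it. With $q_i$ chosen merely in $\mathrm{Cl}(\mathcal A^K(S))\cap\partial B_i$ you only get the non-strict inequality $\lVert q_i-c_{i'}\rVert\le r_{i'}$, so the IVT yields only a point of $h_{i,i'}\cap \mathrm{Cl}(\mathcal A^K(S))$, which is strictly weaker than the lemma (the critical regions are \emph{open} balls). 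To upgrade this you assert that $h_{i,i'}$ cannot support $\mathcal A^K(S)$ ``since otherwise $B_i,B_{i'}$ would be inactive or one contained in the other'', but that is precisely the nontrivial claim, and the case analysis on containment of $B_i$ and $B_{i'}$ does not rule out the tangency scenario in which $\mathcal A^K(S)$ lies strictly on one side of $h_{i,i'}$ while its closure touches the hyperplane: excluding $B_i\subseteq B_{i'}$ and $B_{i'}\subseteq B_i$ says nothing about where the region sits relative to the radical hyperplane.

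The missing idea is to exploit boundary-definingness to choose the witness points \emph{strictly} inside all the other balls. Since $i$ is boundary-defining, there is $z\in\bigcap_{j\in S\setminus\{i\}}B_j$ with $z\notin B_i$; taking any $w\in\mathcal A^K(S)$ and moving along the segment $[w,z]$ (which lies in the open convex set $\bigcap_{j\neq i}B_j$) until it crosses $\partial B_i$ produces a point $q_i\in\partial B_i\cap\bigcap_{j\neq i}B_j\subseteq \mathrm{Cl}(\mathcal A^K(S))$, which in particular lies strictly inside $B_{i'}$, so the power difference is \emph{strictly} positive at $q_i$; symmetrically it is strictly negative at $q_{i'}$. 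Hence there are points of the open region $\mathcal A^K(S)$ (arbitrarily close to $q_i$ and $q_{i'}$) strictly on both sides of $h_{i,i'}$, and the segment between two such points lies in $\mathcal A^K(S)$ and meets $h_{i,i'}$, which is the lemma. This strictness argument is the substance that the paper's proof encodes by saying the boundary of $\mathcal A^K(S)$ must meet the portions $P_i$ and $P_{i'}$ of $\partial(B_i\cap B_{i'})$ contributed by the two spheres, which lie on opposite sides of $h_{i,i'}$. A minor additional slip: as written, $\phi(y)=\lVert y-c_i\rVert^2-\lVert y-c_{i'}\rVert^2$ has zero set $h_{i,i'}$ only when $r_i=r_{i'}$; you need the full power difference $(\lVert y-c_i\rVert^2-r_i^2)-(\lVert y-c_{i'}\rVert^2-r_{i'}^2)$, which is still affine and makes your sign computation correct.
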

\begin{proof}
    Let $i,i'\in S$ be two voters that are boundary-defining for $S$. 
    $B_{i} \cap B_{i'}$ has a non-empty intersection with the halfspace on either side of $h_{i,i'}$. Moreover, the part $P_{i}$ of the boundary of $B_{i} \cap B_{i'}$ that is part of $B_i$'s boundary lies on a different side of $h_{i,i'}$ than the part $P_i'$ of the boundary of $B_{i} \cap B_{i'}$ that $B_{i'}$ contributes to. Because $i$ and $i'$ are boundary-defining for $\mathcal A^K(S)$, the boundary of $\mathcal A^K(S)$ has a non-empty intersection with $P_{i}$ and $P_{i'}$, and hence has a non-empty intersection with the halfspaces on either side of $h_{i,i'}$. Because $\mathcal A^K(S)$ is convex, it must thus also intersect with $h_{i,i'}$. 
\end{proof}

\begin{corollary}
\label{cor:hyperplanes}
     If the maximum-cardinality set of voters $S$ for which $\mathcal A^K(S) \not=\varnothing$ has at least two boundary-defining voters, then there is a point $t'$ maximising $\card{\{i \in [n] : t \in B_i\}}$ that lies in $\cup \{h_{i,i'} : i,i' \in [n], B_i \cap B_{i'} \not= \varnothing\}$.
\end{corollary}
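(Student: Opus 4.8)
\textbf{Proof plan for Corollary~\ref{cor:hyperplanes}.}
The plan is to derive the statement almost immediately from Lemma~\ref{lem:spherehyperplane}, together with a short perturbation/case-split argument that handles the situation where the maximiser we want to exhibit does not obviously lie on any $h_{i,i'}$. Fix a maximum-cardinality voter set $S$ with $\mathcal A^K(S) \neq \varnothing$, and let $k = \card{S} = \max_{t}\card{\{i \in [n] : t \in B_i\}}$ (the equality of these two quantities is exactly the translation established earlier: a point lies in $B_i$ for all $i \in T$ iff $\mathcal A^K(T) \neq \varnothing$). By hypothesis $S$ has at least two boundary-defining voters $i, i'$. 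Lemma~\ref{lem:spherehyperplane} gives a point $t' \in h_{i,i'} \cap \mathcal A^K(S)$. Since $t' \in \mathcal A^K(S)$ we have $t' \in B_\ell$ for all $\ell \in S$, so $\card{\{\ell \in [n] : t' \in B_\ell\}} \geq k$, and by maximality of $k$ this is an equality; hence $t'$ is a maximiser. And $t' \in h_{i,i'}$ with $i,i'$ a pair of voters whose critical regions intersect (indeed $i, i' \in S$ and $\mathcal A^K(S) \neq \varnothing$ forces $B_i \cap B_{i'} \neq \varnothing$), so $t'$ lies in $\cup\{h_{i,i'} : i,i' \in [n],\ B_i \cap B_{i'} \neq \varnothing\}$, as claimed.

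The one subtlety is that Lemma~\ref{lem:spherehyperplane} as stated presupposes that the \emph{given} maximum-cardinality set $S$ has two boundary-defining voters, which is precisely the hypothesis of the corollary; so no additional work is needed to invoke it. I would, however, spell out why $i, i'$ being boundary-defining for $S$ is needed at all, and in particular confirm that the hyperplane $h_{i,i'}$ is well-defined under the corollary's hypothesis: $i$ and $i'$ boundary-defining for $S$ means in particular that neither $B_i$ nor $B_{i'}$ contains $\mathcal A^K(S)$ in a way that makes the other's boundary irrelevant, and combined with $B_i \cap B_{i'} \supseteq \mathcal A^K(S) \neq \varnothing$ we get that $B_i$ and $B_{i'}$ intersect without one containing the other — exactly the condition under which $h_{i,i'}$ (the hyperplane carrying the sphere $\bd B_i \cap \bd B_{i'}$) is defined. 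If one wanted to be careful, one could note that were $B_i \subseteq B_{i'}$ then removing $i'$ from $S$ would not change $\mathcal A^K(S)$, contradicting that $i'$ is boundary-defining; symmetrically for $B_{i'} \subseteq B_i$.

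\textbf{Main obstacle.} There is essentially no hard step here: the corollary is a direct packaging of Lemma~\ref{lem:spherehyperplane} plus the ball-intersection/vote-count dictionary. The only thing requiring a sentence of care is the implicit claim that the point $t'$ produced by the lemma is genuinely a \emph{maximiser} and not merely a point in some $k$-fold intersection — but this follows at once because $t' \in \mathcal A^K(S)$ and $S$ was chosen to have maximum cardinality, so $\card{\{\ell : t' \in B_\ell\}} = k$ is forced. (I would also remark in passing that the complementary case — when \emph{every} maximum-cardinality $S$ has at most one boundary-defining voter — is exactly the regime the base-case routine $\textit{ModifiedBALLS1}$ and the recursion are designed to catch separately, so the corollary is only invoked when its hypothesis holds; this motivates why the statement is phrased conditionally rather than unconditionally.)
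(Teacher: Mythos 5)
Your proposal is correct and follows exactly the route the paper intends: the corollary is an immediate consequence of Lemma~\ref{lem:spherehyperplane}, obtained by taking the point $t' \in h_{i,i'} \cap \mathcal A^K(S)$ the lemma provides and noting that any point of $\mathcal A^K(S)$ attains the maximum vote count while $B_i \cap B_{i'} \supseteq \mathcal A^K(S) \neq \varnothing$ places $t'$ in the stated union. Your added remarks (well-definedness of $h_{i,i'}$ via the boundary-defining hypothesis, and the equivalence of the two maximisation quantities) are sound and only make explicit what the paper leaves implicit.
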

The below algorithm $\textit{ModifiedBALLS3}$ takes as input either an election instance, or a representation of such an instance represented as a collection of $d$-dimensional balls (given by their centre points and radii). The algorithm uses the above corollary to find the best points on the hyperplanes $h_{i,i'}$ and considers these hyperplanes one by one. When restricted to such a hyperplane, the instance can be reduced to an instance of one dimension less, which leads to a recursive approach.


\paragraph*{$\textit{ModifiedBALLS3}$}
\begin{enumerate}
    \item \label{step:1} Let $I$ be the input instance, either given as an election instance or as a tuple $(d \in \mathbb{N}, n \in \mathbb{N}, (x_i,r_i)_{i \in [n]})$ where $(x_i,r_i)$ describes a ball $B_i$, given by its centre point $x_i \in \mathbb{R}^d$ and radius $r_i \in \mathbb{R}$. If $I$ is given in the form of an election instance, we first pre-compute the radius $r_i$ of $B_i$ for all $i\in[n]$.
    \item \label{step:2} Compute the set $P = \{(i,i')\ :\ i \not= i, B_i \cap B_{i'} \not= \varnothing\}$ of voter-pairs whose critical regions intersect. 
    \item \label{step:3} For every $(i,i') \in P$:
    \begin{enumerate} 
        \item \label{step:3a} compute the hyperplane $h_{i,i'}$, which contains the intersection of $B_i$'s and $B_{i'}$'s spheres.
        \item \label{step:3b} For the simplicity of the exposition, rotate the instance through a linear transformation $T$, such that the normal vector of $h_{i,i'}$ becomes $(0,0,...,0,1)$. Denote the resulting instance by $I'$ and the rotated hyperplane by $h_{i,i'}'$. 
        \item \label{step:3c} Project $I$ on $h_{i,i'}'$ as follows, resulting in a new instance $I'$ of dimension $d-1$, with balls $\{(x_i',r_i')\ :\ B_i \cap h_{i,i'}' \not= \varnothing\}$, given by their centre points and radii. For all $i'' \in [n] \setminus\{i'\}$ for which $B_i \cap h_{i,i'}' \not= \varnothing$, we set $x_{i''}' = ((x_{i''})_1, \ldots, (x_{i''})_{d-1})$ (i.e., we drop the last coordinate of $x_{i''}$). We set the radius $r_{i''}$ to be the radius of the $(d-1)$-dimensional ball $B_{i''} \cap h_{i,i'}$, which is given by $r_{i''} = \sqrt{r_{i''}^2 - \text{dist}(x_{i''},h_{i,i'})^2}$, where $\text{dist}(x_{i''},h_{i,i'})$ is the distance from $x_{i''}$ to $h_{i,i'}$.
        \item \label{step:3d} If $d-1 = 2$, call $\textit{ModifiedBALLS1}(I')$. Otherwise, call $\textit{ModifiedBALLS3}(I')$. Let the result of the call (rotated back to its original coordinates via $T^{-1}$) be the point $t_{i,i'}$ and voter set $S_{i,i'}$.
    \end{enumerate}    
    \item \label{step:4} Let $(i_1,i_2) \in \arg_{(i,i')}\max \{\card{S_{i,i'}}\ :\ (i,i') \in P\}$. 
    \item \label{step:5} Furthermore, for each $i \in [n]$, let $t_i$ be an arbitrary point in $B_i$ and let $S_i = \{i' \in [n]\ :\ B_{i'} \supseteq B_i\}$. Let $i_{\max} = \arg_i \max \{\card{S_i}\ :\ i \in [n]\}$
    \item \label{step:6} If $\card{S_{i_{\max}}} > \card{S_{i_1,i_2}}$, return $S_{i_{\max}}$ together with the point $t_{i_{\max}}$. Otherwise return $S_{i_1,i_2}$ together with the point $t_{i_1,i_2}$. 
\end{enumerate}

\begin{proposition}
$\textit{ModifiedBALLS3}$ solves OCP($\nu_2$) and runs in time $O(n^{2d-4}(nd^2 + d^3 + n^2\log n) + nm)$.
\end{proposition}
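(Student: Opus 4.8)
The plan is to establish correctness first and the running-time bound second, mirroring the structure of the $\textit{ModifiedBALLS1}$ proof but with an additional inductive argument on the dimension $d$.

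\textbf{Correctness.} The argument proceeds by induction on $d$, with the base case $d=2$ handled by the already-proven correctness of $\textit{ModifiedBALLS1}$. For the inductive step, let $S^\star$ be a maximum-cardinality set of voters with $\mathcal A^K(S^\star)\neq\varnothing$, and let $t^\star$ be an optimal candidate location, so $\nu_2(t^\star)=\card{S^\star}$. I would split into two cases according to the number of \emph{boundary-defining} voters for $S^\star$. If $S^\star$ has at most one boundary-defining voter $i$, then $\mathcal A^K(S^\star)=\mathcal A^K(\{i\})=B_i$ (the interior of that single ball), which means every $B_{i'}$ with $i'\in S^\star$ contains $B_i$; this case is exactly what Step~\ref{step:5} detects, since it computes for each $i$ the set $S_i=\{i'\ :\ B_{i'}\supseteq B_i\}$ and the point $t_i\in B_i$. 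If instead $S^\star$ has at least two boundary-defining voters, Corollary~\ref{cor:hyperplanes} guarantees some optimal point lies on one of the hyperplanes $h_{i,i'}$ with $(i,i')\in P$. Fixing that pair, I would argue that the restriction of the instance to $h_{i,i'}$ — after the rotation $T$ (which preserves Euclidean balls and intersection cardinalities) and the projection in Step~\ref{step:3c} — yields an election/ball instance $I'$ in dimension $d-1$ whose balls $B_{i''}'=B_{i''}\cap h_{i,i'}'$ have centre $(x_{i''})_{1..d-1}$ and radius $\sqrt{r_{i''}^2-\mathrm{dist}(x_{i''},h_{i,i'})^2}$, and such that a point $t'\in h_{i,i'}'$ lies in $B_{i''}$ iff its projection lies in $B_{i''}'$. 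Hence $\nu_2$ restricted to $h_{i,i'}$ equals the objective of $I'$, so the recursive (or $\textit{ModifiedBALLS1}$) call returns the best attainable value on that hyperplane. Taking the maximum over all $(i,i')\in P$ in Step~\ref{step:4} and comparing with the containment case in Step~\ref{step:6} therefore recovers $\card{S^\star}$, and the returned point is feasible because it is produced by an inductively-correct subroutine operating on open balls.

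\textbf{Running time.} I would set up a recurrence $T(d,n)$ for the time on $n$ balls in dimension $d$. Step~\ref{step:1} costs $O(nm)$ (radii precomputation, only at the top level) or $O(nd)$ otherwise; Step~\ref{step:2} costs $O(n^2 d)$; Step~\ref{step:5}–\ref{step:6} cost $O(n^2 d)$. The loop of Step~\ref{step:3} runs over $\card{P}=O(n^2)$ pairs, and for each pair the rotation and projection (Step~\ref{step:3b}–\ref{step:3c}) cost $O(nd^2+d^3)$ (one linear transformation applied to $n$ points, plus computing $T$), followed by one recursive call in dimension $d-1$ on at most $n$ balls. This gives $T(d,n)=O(n^2)\bigl(O(nd^2+d^3)+T(d-1,n)\bigr)+O(n^2 d)$, with base case $T(2,n)=O(n^2\log n+nm)$ from the $\textit{ModifiedBALLS1}$ proposition. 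Unrolling the recursion $d-2$ times multiplies by $\bigl(O(n^2)\bigr)^{d-2}=O(n^{2d-4})$, and the per-level work $O(nd^2+d^3)$ plus the base-case term $O(n^2\log n)$ gets carried inside, yielding $T(d,n)=O\bigl(n^{2d-4}(nd^2+d^3+n^2\log n)+nm\bigr)$, with the $nm$ term incurred once at the top. I would then note that the $\card{S_{i,i'}}$ need not all be stored simultaneously, so space stays polynomial.

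\textbf{Main obstacle.} The delicate point is the \emph{open}-ball bookkeeping through the recursion: when we project onto $h_{i,i'}$, a point strictly inside the lower-dimensional ball $B_{i''}'$ must correspond to a point strictly inside the original $B_{i''}$, and conversely the boundary case (a voter $i''$ whose sphere is tangent to $h_{i,i'}$, giving a degenerate zero-radius projected ball) must be handled so as not to spuriously gain or lose that voter. Likewise one must verify that the two boundary-defining voters $i,i'$ themselves are correctly accounted for: on $h_{i,i'}$ a point can lie on the boundary sphere $B_i\cap B_{i'}$ yet still be interior to the full-dimensional $B_i$ and $B_{i'}$, so the projected instance should \emph{not} include $B_i,B_{i'}$ as constraints but must re-add $i,i'$ to the returned voter set — this is the analogue of the $p^i_0,\dots,p^i_{\card{P_i}}$ subtlety already resolved in $\textit{ModifiedBALLS1}$, and getting the indices consistent across the rotation $T$ and its inverse is where the proof needs the most care. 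A secondary subtlety is ensuring the rotation $T$ can be computed with polynomially-bounded bit complexity; this is deferred to the end-of-section discussion on eliminating square roots, and I would flag it there rather than inside this proof.
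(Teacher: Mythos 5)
Your proposal matches the paper's own argument: correctness via the case split on the number of boundary-defining voters (the containment case caught by Step~\ref{step:5}, the two-or-more case via Corollary~\ref{cor:hyperplanes} and the recursive call on the projected $(d-1)$-dimensional instance, with $\textit{ModifiedBALLS1}$ as base case), and the runtime via the same recurrence $T(n,d)=O(nm+n^{3}d^{2}+n^{2}d^{3})+n^{2}T(n-1,d-1)$ unrolled to $O(n^{2d-4}(nd^{2}+d^{3}+n^{2}\log n)+nm)$. If anything, you are more explicit than the paper (which defers the projection and open/closed-ball bookkeeping to Lushchakova's analysis), and the subtleties you flag — the shared cross-section $B_i\cap h_{i,i'}=B_{i'}\cap h_{i,i'}$ and re-crediting the excluded voter, and the bit-complexity of $T$ — are exactly the points the paper handles only informally or in its end-of-section remarks.
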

\begin{proof}[Proof]
Correctness follows from the discussion above and from Corollary \ref{cor:hyperplanes}: $\textit{ModifiedBALLS3}$, for every $(i,i') \in P$, recursively calls itself on an instance of a lower dimension that results in an optimal location when the choice of location is restricted to the hyperplane $h_{i,i'}$. Among all the locations returned (one per $(i,i') \in P$), the algorithm selects the best one in Step \ref{step:4}. This will be the point that is output in Step \ref{step:6}, except when the premise of Corollary \ref{cor:hyperplanes} is not satisfied, i.e., in the case that for the maximum-cardinality set $S \subseteq [n]$ there are less than two boundary defining voters. The latter can only be the case when the boundary of $\mathcal A^K(S)$ coincides with the boundary of one of the balls in $S$, and in that case Step \ref{step:5} of the algorithm makes sure that an appropriate point is output.

The details on how to implement the computation of the hyperplanes $h_{i,i'}$ and transformation $T$ and the runtime analysis are identical to those of the original $\textit{BALLS3}$ algorithm, and we refer to  \cite{lushchakova_geometric_2020} for the details. In short, step \ref{step:1} is responsible for the $O(nm)$ term in the runtime bound. Step \ref{step:2} takes $O(n^2d)$ time. The loop of Step \ref{step:3} runs at most $O(n^2)$ times. Within one iteration of the loop, computation of $h_{i,i'}$ is done in $O(d)$ time. Construction of $T$ and $T^{-1}$ is done in $O(d^3)$ time, and applying $T$ to all balls in Step \ref{step:3b} is done in $O(nd^2)$ time. Computing the projected centre points and radii in Step \ref{step:3c} takes $O(n)$ time. Steps \ref{step:4} and \ref{step:5} take $O(n^2)$ and $O(n)$ time respectively. Altogether, we arrive at a computation time for the entire execution of the algorithm of $T(n,d) = O(nm + n^3d^2 + n^2d^3) + n^2T(n-1,d-1)$, matching the runtime of the original \textit{BALLS3} algorithm asymptotically. This expression can be bounded by $O(n^{2d-4}(nd^2 + d^3 + n^2\log n)$.
\end{proof}

The matrix $T$ in $\textit{BALLS3}$ represents an orthoghonal transformation that can be formed by taking $d$ linearly independent points on the hyperplane $h_{i,i'}$ and subsequently applying Gram-Schmidt orthonormalisation. This latter procedure will yield irrational entries (square roots) in $T$. In order to ensure that all computations can be carried out on a Turing Machine or RAM, we note that the rotation operation can be entirely omitted from the algorithm: We may let all points reside in $\mathbb{R}^d$ throughout the entire computation, and just keep track of the affine subspace of $\mathbb{R}^d$ in which all the balls are located throughout each level of the recursion. At recursion depth $k \in \{0,\ldots,d-2\}$, this affine subspace is $(n-k)$-dimensional, and given by the intersection of $k$ hyperplanes $h_{i_1,i'_1},\ldots, h_{i_k,i_k'}$ for a collection $C$ of pairs $(i_1,i_1'),\ldots,(i_k,i_k') \in P$ (which each correspond to the pair $(i,i') \in P$ over which the iteration of Step 3 runs. Thus, the affine subspace of $\mathbb{R}^d$ in which the balls of the instance are located can be represented by the hyperplanes $h_{i_1,i'_1},\ldots, h_{i_k,i_k'}$ (given in terms of their normal coefficients). Step 3b of $\textit{BALLS3}$ can then be removed, and in Step 3d the multiplication with $T^{-1}$ can be omitted. Instead, in Step 3c, $x_i$ now needs to be projected on the (non-transformed) hyperplane $h_{i,i'}$. Denoting the coefficients of $h_{i,i'}$ by $(a_1,\ldots, a_d,b)$, this is done by setting $x_i' = x + \lambda a$, where $\lambda = (b-\sum_{j \in [d]} a_j(x_i)_j)/\lVert a \rVert_2^2$.
One further detail in the above formulation of $\textit{BALLS3}$ that we need to modify is the way the radii of the balls are represented: By keeping track of the \emph{square} of the radius of each ball, the square root computation in Step 3c can then be avoided, so that all values we work with remain rational numbers.

Altogether, this removes the need to compute any square roots and keeps the arithmetic confined to the elementary operations of addition, subtraction, division, and multiplication. The bit complexity of the algorithm remains limited due to the fact that the recursion depth is bounded by $d$: Using the fact that each elementary arithmetic operation on two rational numbers results in a rational where of a constant factor times (roughly double) the bit length of the largest of its two operands, the coefficients used to represent the centres and the radii at depth $d$ require a number of bits of at most $(Kd)^d$ times the largest coefficient in the input coordinates in both their numerator and denominator, where $K$ is a constant. This is due to the $d$-fold successive projection of the centres of the balls on lower-dimensional hyperplanes, and the successive squaring of the squares in the expressions of the radii. These modifications result in a strictly polynomial time algorithm on the standard models of computation.

\section{An observation on using mixed distance measures}\label{sec:mixeddistance}
We've shown that when $m=1$, and all voters use a convex and differentiable norm as their distance measure (in particular the $\ell^p$ norm for $1 < p < \infty$, the number of intersection regions $\card{\{S \subseteq [n]\ :\ \mathcal A^K(S)\not=\varnothing\}}$ to consider is bounded by $O(n^d)$ (Lemma \ref{lem:numregions}). This shows that the number of regions does not grow exponentially in $n$ for certain natural restrictions of the problem. A natural question to consider is whether this still holds when the distance measure varies per voter. The answer to this is generally negative: Figure \ref{fig:convex-expo}, shows an example of $n$ convex sets, which we can imagine to be the critical regions $B_1, \ldots, B_n$ of a set of $n$ voters in an election instance.\footnote{This example is actually a direct solution to Exercise 1.22 of the textbook \cite{graham_concrete_1994}.}
For this instance, the set $Y$ contains every subset of $[n]$.
The construction is as follows. We start with $2^n$ equidistant points on a circle $C$. Number the points from $1$ to $2^n$ and consider the binary representations of these numbers. We define the region $B_i, i \in [n]$ by defining its boundary, which is given by the union of
\begin{itemize}
\item the line segments from point $j$ to $j+1~~(\text{mod } 2^n)$ for all $j$ where the $i$th bit is $1$. 
\item the (clockwise) segment of $C$ from point $j$ to $j+1 ~~(\text{mod } 2^n)$ for all $j$ where the $i$th bit of $j$ is $0$.
\end{itemize}
Note that this construction yields a set of regions with a non-differentiable boundary. However, it is straightforward to modify the construction by smoothening out the corner points, so that the regions all have a differentiable boundary.

\begin{figure}[H]
    \centering
    \includegraphics[scale = 0.25]{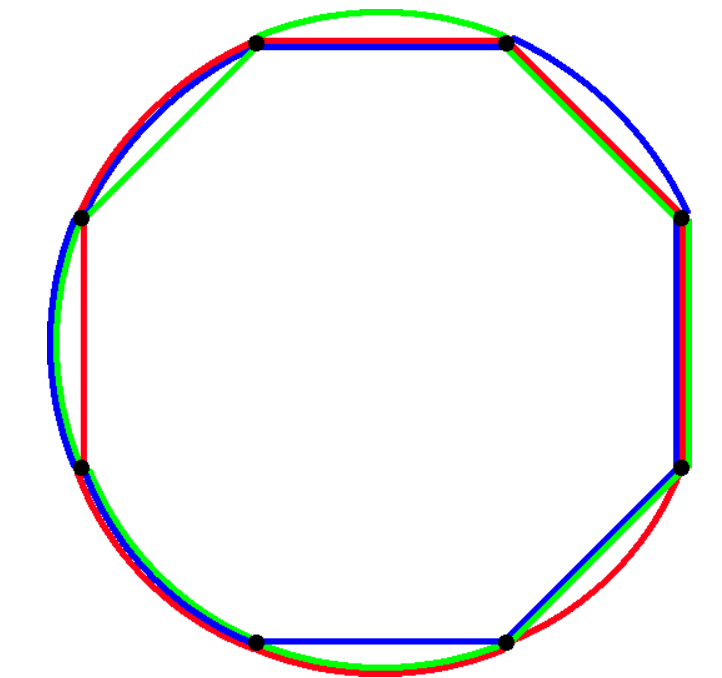}
    \caption{Construction a set of $n$ convex regions $\mathcal{B} = B_1,\ldots,B_n$ such that $\card{\{\cap S : S \subset \mathcal{B}\}} = 2^n$.}
    \label{fig:convex-expo}
\end{figure}



\section{Other Election Models}\label{apx:generalisations}
An underlying assumption in our spatial voting model has been that a plurality voting system is used, i.e., where each voter only votes for its most preferred (i.e., closest, according to an $\ell^p$ norm distance measure) candidate. We can straightforwardly  generalise our model to be compatible with arbitrary positional voting systems as well. In a positional voting system, each voter $i \in [n]$ assigns a certain number of $q_{ij}$ points to their $j$'th most preferred candidate, such that $q_{ij} \geq q_{i(j+1)}$. In our spatial model, a voter's $j$'th most preferred candidate corresponds to their $j$'th closest candidate with respect to the given distance measure. 

Important special cases of positional voting systems are
\begin{itemize}
    \item \emph{$k$-approval}, where $q_{ij} = 1$ for $j \in [k]$ and $q_{ij} = 0$ for $j \in \{k+1,\ldots,m\}$. Observe that $k=1$ corresponds to plurality voting and $k= m-1$ corresponds to a veto voting system.
    \item \emph{Borda}, where $q_{ij} = m-j$.
\end{itemize} 

We can now define a \emph{generalised election instance} as a tuple $(I,q)$, where $I$ is an election instance and $q \in \mathbb{N}^{n \times m}$ is a given matrix defining the scoring system. 

The problem of placing a new candidate in such an election on a position that receives the highest total score from all voters combined is still inherently of the same geometric nature as the one studied in the previous sections: We consider an arrangement of $\ell^p$-norm balls in $\mathbb{R}^d$ positioned as follows. We let $t_i^{(j)}$ denote the location of the $j$'th-closest candidate to $i$. For each $i \in [n]$ and $j \in [m]$, there is a number of $\hat{q}_{ij} = q_{ij} - q_{i(j+1)}$ balls $B_{i,j,1}(p),\ldots B_{i,j,\hat{q}_{ij}}(p)$ with radius $\lVert t_i^{(j)} - x_i \rVert_p$ at position $x_i$ (where we define $q_{i(m+1)}=0$). Under this definition of the ball arrangement, it can be readily verified that when the candidate chooses position $p \in \mathbb{R}^d$, the total score received is equal to the number of balls in which $p$ is contained.

Thus, the natural generalisation of the OCP($U$) problem where for $U$ we consider the function $SC_p(t) = \card{\{(i,j,k) \in [n] \times [m] \times [\hat{q}_{ij}]\ :\ t \in B_{i,j,k}(p)\}}$ is very similar to the original problem we studied above, apart from the fact that the generalised family of instances correspond to arrangements of balls that are potentially nested, whereas previously such nesting was not possible. The generalisation of $R_p$ can similarly be considered in this setting, although in this general setting, $R_p$ is not as straightforward to formulate as $SC_p$.

Altogether, this means that our results from Section \ref{sec:manycandidates} carry over, since the algorithms discussed there do not rely on the fact that the induced arrangement of balls are not contained in each other. Furthermore, of course, our results for the single-candidate case carry over trivially to this positional voting setting.

\newcommand{\RR}{\mathbb{R}}

\section{Iterate through All Areas}\label{sec:ITR_ALL}

Another approach to solving the OCP problem is to iterate through the set of non-empty regions in \(\mathbb{R}^d\) corresponding to a specific set of voters. This leads to the notion of a \emph{meaningful area}.

As every point \(p\) in the \(d\)-dimensional space either belongs or does not belong to each ball \(B_i\), we can collect the indices of the balls containing \(p\) into a set \(S \subseteq [n]\). That is, \(p \in B_i\) for all \(i \in S\), and \(p \notin B_j\) for all \(j \in [n]\setminus S\). We call the region satisfying these membership requirements the \emph{meaningful area} for \(S\):

\begin{definition}[Meaningful Area]
  Let \(S \subseteq [n]\). Then
  \[
     \mathcal A^K(S) \;=\; \Bigl\{p \in \mathbb{R}^d :\,
         p \in \bigcap_{i \in S} B_i \quad\text{and}\quad
         p \notin \bigcup_{j \in [n]\setminus S} B_j
     \Bigr\}.
  \]
  In words, \(\mathcal A^K(S)\) is the set of points that are contained in exactly those balls indexed by \(S\).
\end{definition}

In an election setting, each \(B_i\) represents a candidate's preference set. If a candidate places their policy \(t\) in \(\mathcal A^K(S)\), they receive \(\absl{S}\) votes (i.e.\ the number of voters corresponding to \(S\)). Thus, one might consider iterating through all non-empty \(\mathcal A^K(S)\) to find the best policy location. Suppose the space is divided into regions \(\mathcal A^K(_1), \mathcal A^K(_2), \dots\) by balls \(B_1\) to \(B_k\). If we add a new ball \(B_{k+1}\), we check the intersections of each existing \(\mathcal A^K(_i)\) with \(B_{k+1}\).

\subsection{Intermediate Areas and the Area-Updating Method}

To describe these regions during the iteration process, define the \emph{intermediate area} \(\mathcal A^K(^k)\) as
\[
   \mathcal A^K(S^k) \;=\; \Bigl\{p \in \mathbb{R}^d :\, p \in \bigcap_{i \in S} B_i \;\;\text{and}\;\; p \notin \bigcup_{j \in [k]\setminus S} B_j \Bigr\}.
\]
For simplicity, let \(\mathcal A^K(S^n) = \mathcal  A^K(S)\). Intuitively, \(\mathcal A^K(S^k)\) is the region determined by the first \(k\) balls. When adding ball \(B_{k+1}\), each \(\mathcal A^K(S^k_i)\) is either
\begin{enumerate}
    \item fully inside \(B_{k+1}\),
    \item fully outside \(B_{k+1}\), or 
    \item partially intersecting \(B_{k+1}\).
\end{enumerate}

In case (3), we split \(\mathcal A^K(S^k_i)\) into two parts (one inside and one outside), creating two new sets \(S^k_i\) and \(S^k_i \cup \{k+1\}\). We also check whether the single-ball region \(B_{k+1}\) itself (excluding all other balls) is non-empty.

Informally, the intermediate area is the region in which a new candidate could be located such that their presence could affect the preferences of the voters in \( S \), assuming they are only influenced by candidates in the set \( K \). It captures the geometric influence zone of \( K \) over a subset of voters.

The \emph{area-updating method (AUM)} proceeds by applying this iterative splitting.

\subsubsection{Canonical Representation}

To reduce the complexity of AUM, we represent the region \( \mathcal A^K(S) \) using its \emph{canonical} representation, denoted \( A^K(\bd S) \). Here, \(\bd S\) consists of two sets, \(\bd S^+\) and \(\bd S^-\), such that  
\[
\mathcal A^K(S) = \bigcap_{B \in \bd S^+} B \setminus \bigcup_{B \in \bd S^-} B.
\]

Furthermore, \(\bd S^+\) and \(\bd S^-\) can be further divided into hard boundary sets \(\bd S^+_h\) and \(\bd S^-_h\), and soft boundary sets \(\bd S^+ \setminus \bd S^+_h\) and \(\bd S^- \setminus \bd S^-_h\). The hard boundary set consists of balls \(B\) whose sphere \(\varphi\) satisfies the condition:
\begin{align*}
\exists x \in \varphi, \quad &\text{such that every open set } U \text{ containing } x \\
&\text{intersects both } \mathcal{A}^K(S) \text{ and } A^c(S),
\end{align*}
where \(A^c(S)\) is the complement of \(\mathcal A^K(S)\).

Moreover, removing any element \(B\) from \(\bd S\) results in \(A^K(\bd S \setminus \{B\}) \neq \mathcal A^K(S)\), ensuring the minimality of the representation. We can construct \(\mathcal A^K(\bd S)\) from \(\mathcal A^K(S)\) and vice versa. If \(\lvert \bd S\rvert \le k\) for all regions, we can use the canonical representation to describe the area efficiently, reducing computational complexity.

The purpose of the canonical representation is to normalise the candidate set such that the analysis becomes independent of translation, rotation, or scaling. This simplifies the structure of the problem without affecting the outcome of interest (e.g., winner determination or distance calculations).

\subsection{Algorithm for Area Construction}

Algorithm~\ref{iter_all} sketches how to construct all non-empty regions incrementally.

This algorithm identifies the subset of voters for whom a new candidate might feasibly win. Unlike Crama’s method, which examines all \( \binom{n}{d} \)  subsets without filtering, our approach prunes infeasible configurations by leveraging structural properties of the canonical representation of candidates. Crama’s method considers every possible subset of voters indiscriminately, whereas our method takes advantage of the fact that only voters whose preference regions intersect with the convex hull of the candidate set \(K\) can be influenced by the introduction of a new candidate. 

The key idea is to avoid brute-force enumeration of all \( \binom{n}{d} \) subsets by applying a geometric pruning step: we observe that only voters whose preference regions intersect the convex hull of the candidate set \( K \) can be influenced by the introduction of a new candidate. This insight enables early termination in many cases and underpins the practical efficiency of our approach, despite the matching worst-case complexity of \( O(n^{d+1}) \).

\begin{algorithm}[h]
\caption{Area Construction Algorithm}
\label{iter_all}
\begin{algorithmic}[1]
\Procedure{Area\_Construction}{$[B]_n$}
  \Comment{\textit{\small{$[B]$ is a set of $n$ $\ell_2$ balls in $\mathbb{R}^d$.}}}

\State $S\_SET \gets \{\}$
  \Comment{\textit{\small{Stores all canonical sets $(\bd S)$ whose $A^K(\bd S)$ is non-empty.}}}

  \For{$i = 1,\dots,n$}
    \For{$(\bd S) \in S\_SET$}
      \State $Result \gets \textsc{Test\_Intersection\_Relation}((\bd S),\; B_i)$
      \If{$Result = \text{Inside or Outside}$}
         \State \textbf{continue} 
         \Comment{\textit{\small{No change if $A^K(\bd S)$ is fully inside/outside $B_i$.}}}
      \Else
         \Comment{\textit{\small{$B_i$ cuts $A^K(\bd S)$ into two.}}}
         \State $( (\bd S), (\bd S'') ) \gets \textsc{Update\_Boundary}((\bd S), B_i)$
         \State $S\_SET \gets S\_SET \cup \{(\bd S'')\}$ 
         \Comment{\textit{\small{Add new boundary to the set.}}}
      \EndIf
    \EndFor

    \State $(\bd S_{\text{single}}) \gets \textsc{Single\_Boundary}\bigl(B_i,\,[B_1,\dots,B_{i-1}]\bigr)$
    \If{$(\bd S_{\text{single}})\neq \text{NULL}$}
      \State $S\_SET \gets S\_SET \cup \{\bd S_{\text{single}}\}$
    \EndIf
  \EndFor
\EndProcedure
\end{algorithmic}
\end{algorithm}

\vspace{0.5em}
\noindent
\textbf{Subroutines}:
\begin{itemize}
\item \textsc{Test\_Intersection\_Relation}($(\bd S),B$): Determines if $A^K(\bd S)$ is entirely inside $B$, entirely outside $B$, or intersects $B$ non-trivially.
\item \textsc{Update\_Boundary}($(\bd S),B$): Constructs two new boundary sets corresponding to $A^K(\bd S)\cap B$ and $A^K(\bd S)\setminus B$.
\item \textsc{Single\_Boundary}($B_i,[B_1,\dots,B_{i-1}]$): Returns the boundary set describing the region contained only by $B_i$ (and excluded by the others), if non-empty; otherwise returns $\text{NULL}$.
\end{itemize}

\paragraph{Overall Complexity.}
Summarizing the loop structure:
\begin{itemize}
    \item Each ball $B_i$ triggers up to $L$ calls to 
          $\textsc{Test\_Intersection\_Relation}$ and 
          $\textsc{Update\_Boundary}$ (the inner loop).
    \item We also call $\textsc{Single\_Boundary}$ once per ball (the outer loop).
\end{itemize}
Hence, the total subroutine cost is
\[
  O\bigl(n \times L \times \bigl(T_{\text{test}}(k) + T_{\text{update}}(k)\bigr)\bigr)
  \;+\;
  O\bigl(n \times T_{\text{single}}(n)\bigr),
\]
where
\begin{itemize}
  \item $L$ is the number of canonical sets,
  \item $k$ is the maximum canonical size,
  \item $T_{\text{test}}(k)$ and $T_{\text{update}}(k)$ are the running times of $\textsc{Test\_Intersection\_Relation}$ and $\textsc{Update\_Boundary}$, respectively, when the input size is bounded by $k$, and
  \item $T_{\text{single}}(n)$ is the running time of $\textsc{Single\_Boundary}$, which may scale with $n$.
\end{itemize}

\noindent
We defer the precise construction and analysis of these subroutines, including the relationship between $L$ and $k$, to Section~\ref{imMA}.

\subsection{Correctness of the Area-Updating Method (AUM)}

Below, we show that AUM (and thus Algorithm~\ref{iter_all}) indeed finds all non-empty meaningful areas.

\begin{theorem}
    For any finite collection of $n$ balls in $\mathbb{R}^d$, the area-updating method (AUM) finds all non-empty meaningful areas $\mathcal A^K(S)$. 
\end{theorem}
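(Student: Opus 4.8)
The plan is to argue by induction on the number of balls processed, showing that after AUM has incorporated balls $B_1,\dots,B_k$, the collection $S\_SET$ contains exactly the canonical representations of all non-empty intermediate areas $\mathcal{A}^K(S^k)$ for $S\subseteq[k]$. The base case $k=0$ (or $k=1$) is immediate: before any ball is added the whole space is one region $\mathcal{A}^K(\varnothing^0)=\mathbb{R}^d$, and after adding $B_1$ we obtain exactly the regions for $S=\varnothing$ (outside $B_1$) and $S=\{1\}$ (inside $B_1$), provided each is non-empty; the \textsc{Single\_Boundary} call handles the latter.

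For the inductive step, I would fix a non-empty meaningful area $\mathcal{A}^K(S^{k+1})$ with $S\subseteq[k+1]$ and show it is discovered after processing $B_{k+1}$. Write $S'=S\setminus\{k+1\}$. The key observation is that $\mathcal{A}^K(S^{k+1})$ is obtained from $\mathcal{A}^K(S'^{\,k})$ by intersecting with $B_{k+1}$ (if $k+1\in S$) or with the complement of $B_{k+1}$ (if $k+1\notin S$): that is, $\mathcal{A}^K(S^{k+1}) = \mathcal{A}^K(S'^{\,k}) \cap B_{k+1}$ or $\mathcal{A}^K(S'^{\,k})\setminus B_{k+1}$ respectively. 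Since $\mathcal{A}^K(S^{k+1})$ is non-empty, so is $\mathcal{A}^K(S'^{\,k})$, hence by the inductive hypothesis its canonical representation $A^K(\bd S')$ is already in $S\_SET$ when the outer loop reaches $i=k+1$. Now \textsc{Test\_Intersection\_Relation} is invoked on $(\bd S', B_{k+1})$. I would split into the three cases of the classification: if $\mathcal{A}^K(S'^{\,k})$ lies entirely inside $B_{k+1}$ then $S$ must be $S'\cup\{k+1\}$ and the region is unchanged but its index set is correctly $S$ (one should note here that the canonical representation already encodes containment in $B_{k+1}$, which requires a small argument about what the ``hard/soft boundary'' bookkeeping records); symmetrically for the fully-outside case; and if $B_{k+1}$ cuts $\mathcal{A}^K(S'^{\,k})$ non-trivially, \textsc{Update\_Boundary} produces both the inside piece $A^K(\bd S'')$ and keeps the outside piece, one of which equals $A^K(\bd S)$ for our target $S$, and the new piece is added to $S\_SET$. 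The remaining case is $S=\{k+1\}$ with $S'=\varnothing$ where the region contained \emph{only} by $B_{k+1}$ among $B_1,\dots,B_{k+1}$ may be a newly created component not arising from splitting any existing region; this is exactly what the \textsc{Single\_Boundary} call is designed to catch, and I would verify it returns a non-NULL boundary precisely when that region is non-empty. Conversely, no spurious regions are ever added, since \textsc{Update\_Boundary} and \textsc{Single\_Boundary} only produce boundary sets whose associated area is non-empty by construction.

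The main obstacle I anticipate is not the inductive skeleton but the interaction between the \emph{canonical representation} and the updating subroutines: one must be careful that $A^K(\bd S)$ genuinely determines $\mathcal{A}^K(S)$ (both directions of the stated correspondence), that the minimality/hard-boundary conditions are preserved under \textsc{Update\_Boundary}, and—most delicately—that a region's \emph{membership index set} $S$ is correctly maintained even in the ``Inside'' and ``Outside'' cases where the geometric region does not change but its relationship to the newly added ball does. In particular, when $\mathcal{A}^K(S'^{\,k})$ is entirely inside $B_{k+1}$, the algorithm executes \texttt{continue} and does not alter the stored boundary set; one has to confirm that the canonical data structure is nonetheless understood to represent $\mathcal{A}^K((S'\cup\{k+1\})^{k+1})$ rather than $\mathcal{A}^K(S'^{\,k+1})$, i.e.\ that the semantics of a stored $(\bd S)$ is ``the set of points in exactly these balls, among all balls processed so far.'' Pinning down this invariant precisely, and checking each subroutine respects it, is where the real work lies; once that invariant is in hand, the induction closes routinely and, since every point of $\mathbb{R}^d$ lies in $\mathcal{A}^K(S)$ for exactly one $S\subseteq[n]$, taking $k=n$ yields that AUM enumerates all non-empty meaningful areas.
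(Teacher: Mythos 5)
Your proposal is correct and takes essentially the same route as the paper: an induction on the number of processed balls, splitting the inductive step according to whether $k+1$ belongs to the target index set, using the inductive hypothesis that the corresponding region over the first $k$ balls was already discovered, and invoking the splitting step (respectively the \textsc{Single\_Boundary} call for $S=\{k+1\}$) to produce the new set. Your additional care about the canonical-representation semantics in the ``Inside''/``Outside'' cases is a genuine subtlety that the paper's own, terser contradiction-style argument leaves implicit, but it does not change the structure of the proof.
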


\begin{proof}
\textbf{(Induction on $n$.)}

\noindent
\textbf{Base case} ($n=1$): If there is only one ball $B_1$, AUM obviously finds $\mathcal A^K(\{1\})$ and $\mathcal A^K(\varnothing)$ (the latter is empty unless we count the “outside” region).

\noindent
\textbf{Inductive step}: Suppose AUM works for $n=k$. That is, it generates a family $\fami_k$ of index sets such that, whenever $\mathcal A^K(S^k)$ is non-empty, $S^k$ appears in $\fami_k$. Now add a new ball $B_{k+1}$. Suppose there exists $S^{k+1}$ such that $\mathcal A^K(S^{k+1})$ is non-empty but is not generated by AUM. We consider two cases:

\begin{enumerate}
    \item $k+1 \notin S^{k+1}$. Then $S^{k+1}$ is a subset of $[k]$. If $\mathcal A^K(S^{k+1})$ were non-empty, it should already have appeared at the $n=k$ stage. This contradicts the inductive hypothesis.

    \item $k+1 \in S^{k+1}$. Let $S^k = S^{k+1} \setminus \{k+1\}$. By the inductive hypothesis, $S^k$ is in $\fami_k$ if $\mathcal A^K(S^k)$ is non-empty. Since AUM checks each already-discovered $S^k$ against $B_{k+1}$, if $\mathcal A^K(S^{k+1})$ were non-empty, AUM would split $\mathcal A^K(S^k)$ accordingly, producing $S^{k+1}$. Hence it must be found by AUM, again a contradiction.
\end{enumerate}
Therefore, AUM must generate all $S^{k+1}$ such that $\mathcal A^K(S^{k+1})$ is non-empty, completing the induction.
\end{proof}

Since Algorithm \ref{iter_all} is contingent on the number of represented areas, we prove that for election instances on \(\mathbb{R}^d\) with \(m=1\), this number is bounded by \(O(n^d)\):

The following result is a classical bound in computational geometry; see e.g.\ \cite{Edelsbrunner1987, AgarwalSharir1998}.
In our context, this result bounds the number of distinct voter response regions that must be considered when evaluating the effect of adding a new candidate. This observation underpins the worst-case runtime analysis of our enumeration algorithm in Appendix~\ref{sec:ITR_ALL}.

\begin{theorem}\label{ball_int}
  Let \(n\) be the number of \(d\)-dimensional balls in \(\mathbb{R}^d\).
  Then the number of distinct regions formed by these balls, i.e.\ the number of non-empty sets
  \(
    \mathcal A^K(S) \;=\; \bigcap_{i \in S} B_i \;\;\setminus\;\;\bigcup_{j \in [n]\setminus S} B_j,
  \)
  is at most \(O(n^d)\).
\end{theorem}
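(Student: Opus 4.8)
The plan is to recognise the meaningful areas as the sign classes of an arrangement of $n$ quadric surfaces, and then invoke the classical combinatorial bound for such arrangements. First I would rewrite each ball as a sign condition: with centre $c_i$ and radius $r_i$, the sphere $\partial B_i$ is the zero set of the quadratic $f_i(p) = \lVert p - c_i \rVert_2^2 - r_i^2$, and $p \in B_i$ precisely when $f_i(p) < 0$. Hence the subset $S$ with $p \in \mathcal A^K(S)$ is exactly $\{\, i \in [n] : f_i(p) < 0 \,\}$, so $S$ is a function of the sign vector $(\sign f_1(p), \dots, \sign f_n(p)) \in \{-, 0, +\}^n$ alone; the open/closed convention merely merges the sign value $0$ with $+$ (a point on $\partial B_i$ counting as ``not inside''), which can only coarsen the partition of $\mathbb{R}^d$ into the sets $\mathcal A^K(S)$, never refine it. Consequently the number of non-empty meaningful areas is at most the number of distinct sign vectors realised over $\mathbb{R}^d$ by $f_1, \dots, f_n$, which in turn is at most the number of non-empty faces of all dimensions $0, 1, \dots, d$ of the arrangement of the $n$ spheres $\partial B_1, \dots, \partial B_n$, since every face is contained in a single sign class and hence inside a single $\mathcal A^K(S)$.

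It then remains only to bound the complexity of an arrangement of $n$ spheres in $\mathbb{R}^d$, and here I would cite the classical fact (see e.g. \cite{Edelsbrunner1987, AgarwalSharir1998}, or a Milnor--Thom/Warren-style bound on the number of realisable sign conditions of $n$ polynomials of bounded degree in $d$ variables) that such an arrangement has $O(n^d)$ faces in total, the hidden constant depending only on $d$ and on the degree, here $2$. To make the exponent $d$ rather than the naive $d+1$ transparent, I would also recall the standard lifting $p \mapsto \widehat p = (p, \lVert p \rVert_2^2) \in \mathbb{R}^{d+1}$, under which $f_i$ becomes the affine map $\widehat p_{d+1} - 2 c_i \cdot (\widehat p_1, \dots, \widehat p_d) + \lVert c_i \rVert_2^2 - r_i^2$; thus $\partial B_i$ lifts to a hyperplane $h_i \subset \mathbb{R}^{d+1}$ and the sign vector of $p$ records the position of $\widehat p$ relative to $h_1, \dots, h_n$, but every lifted point lies on the $d$-dimensional paraboloid $\Pi = \{(p, \lVert p \rVert_2^2) : p \in \mathbb{R}^d\}$, so only the faces of $\{h_1, \dots, h_n\}$ traced on the $d$-manifold $\Pi$ can occur, and there are $O(n^d)$ of these — one factor of $n$ below the $O(n^{d+1})$ complexity of the full $(d+1)$-dimensional arrangement. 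Combining with the previous paragraph yields the claimed $O(n^d)$ bound, and in particular the list $S\_SET$ built by Algorithm~\ref{iter_all} never exceeds $O(n^d)$ entries.

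The main obstacle is not a genuinely new technical difficulty but rather isolating the one classical ingredient (the complexity bound for arrangements of bounded-degree surfaces) and wiring it cleanly into the counting. The subtlety to state carefully in the write-up is the boundary convention: treating a point on $\partial B_i$ as not belonging to the open ball $B_i$ only identifies sign classes, so it preserves the upper bound. I would also flag the contrast with Lemma~\ref{lem:numregions}: there the relevant hyperplanes are \emph{central} (all passing through the opponent's position), which costs only $O(n^{d-1})$ regions, whereas for a general, non-concurrent family of balls one loses a factor of $n$ and lands at $O(n^d)$.
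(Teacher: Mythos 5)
Your proposal is correct, but it proves the bound by a genuinely different route than the paper. The paper gives a self-contained double induction: for $d=2$ it counts circle arrangements directly (each new circle meets the previous $k-1$ in at most $2(k-1)$ points, giving $O(n^2)$), and for the inductive step it observes that the regions newly created by adding $B_n$ are in correspondence with the cells that the other $n-1$ bounding spheres cut on the $(d-1)$-sphere $\psi_n$, yielding the recurrence $\mathcal A^K(d,n)\le \mathcal A^K(d,n-1)+\mathcal A^K(d-1,n-1)$ and hence $O(n^d)$; it then identifies each non-empty $\mathcal A^K(S)$ with (a union of) such cells. You instead encode membership in each ball as the sign of a quadratic $f_i$, note that the number of non-empty $\mathcal A^K(S)$ is at most the number of realisable sign classes (the open/closed convention only merges classes), and then invoke the classical $O(n^d)$ complexity bound for arrangements of $n$ bounded-degree surfaces --- made transparent via the standard lifting of spheres to hyperplanes in $\mathbb{R}^{d+1}$ with all points confined to the paraboloid. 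Both arguments are sound; yours is shorter and more general (it works verbatim for any constant-degree surfaces, and it is essentially the route the cited references \cite{Edelsbrunner1987, AgarwalSharir1998} take), whereas the paper's buys self-containedness, since it never needs the restricted-arrangement or Warren-type bound as a black box. One caveat to state explicitly if you write yours up: the lifting by itself is circular (the hyperplane arrangement traced on $\Pi$ \emph{is} the original sphere arrangement), so the $O(n^d)$ count of faces on the $d$-dimensional variety $\Pi$ must genuinely come from the cited classical result rather than from the lifting picture alone; with that citation in place your argument is complete. Your closing contrast with Lemma~\ref{lem:numregions} (central arrangements costing only $O(n^{d-1})$ regions) is also consistent with the paper.
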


\begin{proof}
\textbf{Base Case (\(d=2\)):}  
Consider \(n\) circles in the plane. It is a known fact (and can be shown by a simple planar argument) that each new circle can intersect previously placed circles in at most \(2(k-1)\) points, thus creating at most \(2(k-1)\) new regions when it is added as the \(k\)-th circle. Summing over all \(k\) from \(1\) to \(n\) yields a total of \(O(n^2)\) regions. Thus, for \(d=2\), the bound \(O(n^d) = O(n^2)\) holds.

\smallskip

\textbf{Inductive Step:}  
Suppose for some fixed \(d \ge 2\), every arrangement of \((n-1)\) \(d\)-dimensional balls in \(\mathbb{R}^d\) creates at most \(A^K(d,n-1)\) distinct regions, and that we already know \(\mathcal A^K(d-1, m) = O(m^{d-1})\) for any \(m\). We want to show \(\mathcal A^K(d,n) = O(n^d)\).

Consider adding the \(n\)-th ball \(B_n\) to an existing arrangement of \((n-1)\) balls in \(\mathbb{R}^d\). Let \(\psi_n\) be the \((d-1)\)-dimensional sphere bounding \(B_n\). A new region is created only when \(\psi_n\) \emph{partially} cuts through an existing region \(R\). Equivalently, this requires 
\(
  R \;\cap\; \psi_n 
  \;\neq\; \varnothing.
\)

However, \(R\) itself is defined by the intersection/exclusion of the other \((n-1)\) balls, so the intersection \(R \cap \psi_n\) is determined by an arrangement of \((n-1)\) \((d-1)\)-dimensional spheres (namely \(\{\psi_1 \cap \psi_n,\,\psi_2 \cap \psi_n,\,\dots,\psi_{n-1} \cap \psi_n\}\)) on \(\psi_n\). By the induction hypothesis on dimension, such an arrangement in \((d-1)\)-dimensions can produce at most \(\mathcal A^K(d-1, n-1) = O((n-1)^{d-1})\) connected regions on \(\psi_n\). Hence, ball \(B_n\) can cause at most \(O((n-1)^{d-1})\) new “cuts” in \(\mathbb{R}^d\). Symbolically, if we write \(\mathcal A^K(d,k)\) to denote the maximum number of regions formed by \(k\) \(d\)-dimensional balls, then
\(
   \mathcal A^K(d,n) 
   \;\le\; \mathcal A^K(d,n-1) +\mathcal A^K(d-1,n-1).
\)
Since \(\mathcal A^K(2,n) = O(n^2)\) (the base case), standard iterated arguments (or a straightforward induction on \(d\)) imply \(\mathcal A^K(d,n) = O(n^d)\).  

\smallskip

Because each meaningful area \(\mathcal A^K(S)\) is one of the connected regions formed by the \(n\) balls (differing only in how we label the “inside” and “outside” of each ball), the total number of non-empty \(\mathcal A^K(S)\) is at most \(\mathcal A^K(d,n)\). Therefore, we conclude 
\[
  \text{Number of non-empty } \mathcal A^K(S) \;\le\; \mathcal A^K(d,n) \;=\; O(n^d).
\]
\end{proof}


\subsection{Boundary Finding Algorithm}\label{imMA}

We introduce an algorithm to check whether a new ball \( B_t \), representing a candidate at point \( c \in \mathbb{R}^d \) with radius \( r_t \), partially intersects a region \( \mathcal A^K(S) \). Concretely, we seek two points: one in \( B_t \cap \mathcal A^K(S) \), and another in \( \mathcal A^K(S) \setminus B_t \). This test is essential for detecting when a candidate may lie near the boundary of a feasible winning region.

Our key contribution is a boundary-checking procedure based on solving two constrained quadratic programs (one minimisation, one maximisation), using geometric properties of metric balls. These programs determine whether the candidate ball intersects the feasible region non-trivially.

The algorithm improves on naive geometric enumeration by pruning infeasible configurations through convexity arguments. Although related techniques appear in computational geometry, our setting is adapted to preference structures induced by overlapping metric balls, requiring different filtering logic.

This can be reformulated as two optimisation problems. First, we minimise
\[
\sum_{j=1}^d (x_j - c_{t,j})^2
\]
subject to \(x \in B_i\) for all \(B_i \in \bd^+ S\) and \(x \notin B_j\) for all \(B_j \in \bd^- S\). If the optimal value \(m\) is less than \(r_t^2\), then \(B_t \cap \mathcal A^K(\bd S)\neq\emptyset\). Second, we maximise
\[
\sum_{j=1}^d (x_j - c_{t,j})^2
\]
under the same constraints. If the optimal value \(M\) exceeds \(r_t^2\), then \(\mathcal A^K(\bd S)\setminus B_t\neq\emptyset\). However, neither \(B_t \cap \mathcal A^K(\bd S)\) nor \(\mathcal A^K(\bd S) \setminus B_t\) need be convex, so one cannot always rely on efficient polynomial-time methods; additional modifications or heuristics may be necessary for tractability.

To implement the optimisation-based test efficiently, we define the relevant geometric constructs:

\begin{definition}
\ \\
\vspace{-1.2em} 

\begin{itemize}
    \item \( \varphi_i \): The boundary (sphere) of a \( d \)-dimensional ball \( B_i \).
    \item \( \varphi_{i,j} \): The intersection of \(\varphi_i\) and \(\varphi_j\), which forms a \((d-1)\)-dimensional sphere.
    \item \( H_{i,j} \): The hyperplane containing \(\varphi_{i,j}\). The half-space cut by \( H_{i,j} \) that contains \(\varphi_i \cap B_j\) is denoted \(H^+_{i,j}\), while its complement (including \(H_{i,j}\) itself) is denoted \(H^-_{i,j}\). 
        \begin{itemize}
            \item If \(\varphi_i \cap \varphi_j = \emptyset\), define \(H_{i,j} = \emptyset\).
            \item If \(\varphi_i \subset B_j\), define \(H^+_{i,j} = \mathbb{R}^d\) and \(H^-_{i,j} = \emptyset\).
            \item Otherwise, define \(H^+_{i,j} = \emptyset\) and \(H^-_{i,j} = \mathbb{R}^d\).
        \end{itemize}
    \item \( H^+(y) \): Given a half-space \(H^+\) defined by 
    \(\sum_{i=0}^d h_i x_i + c \;\;\ge 0,\)
    we define \( H^+(y) \;=\; \sum_{i=0}^d h_i\,y_i + c\).
\end{itemize}
\end{definition}

We now present Algorithm~\ref{Meyers}, which tests each input ball as a potential boundary via a geometric filtering step. The algorithm relies on a subroutine, \textsc{Single\_Boundary\_Detection}, to check whether a given ball splits the feasible region.

\begin{algorithm}[h]
   \caption{Finding Boundary of Common Intersection Area}
   \label{Meyers}
   \begin{algorithmic}[1]
     \Procedure{Boundary\_Finding}{$[B]^{in}_n$, $[B]^{out}_m$}
       \Comment{\textit{\small{$[B]^{in/out}$ are sets of \(\ell_2\)-balls in \(\mathbb{R}^d\). 
       Goal: check if 
       \(\bigcap [B]^{in} \,\setminus\, \bigcup [B]^{out} \;=\;\emptyset\).}}}
       
       \For{each ball \(B_i\) in \([B]^{in}_n \;\cup\; [B]^{out}_m\)}
         \Comment{\textit{\small{Test each ball as a potential boundary.}}}
         \If{\textsc{Single\_Boundary\_Detection}($B_i,\,[B]^{in}_n \setminus\{B_i\},\,[B]^{out}_m \setminus\{B_i\},\,d$) = True}
            \State \Return \textbf{true}
            \Comment{\textit{\small{Finding one suitable boundary suffices.}}}
         \EndIf
       \EndFor
       
       \State \Return \textbf{false}
       \Comment{\textit{\small{No valid boundary was found.}}}
     \EndProcedure
     \vspace{1em}

     \Procedure{Single\_Boundary\_Detection}{$B_t,\;[B]^{in}_n,\;[B]^{out}_m,\;d$}
       \State \(\displaystyle [H^+] \gets \bigl\{H^+_{t,i} : B_i \in [B]^{in}_n \bigr\}\)
         \Comment{\textit{\small Compute half-spaces \(H^+_{t,i}\).}}
       \State \(\displaystyle [H^-] \gets \bigl\{H^-_{t,i} : B_i \in [B]^{out}_m \bigr\}\)
         \Comment{\textit{\small For out-balls, take complementary half-spaces.}}
       
       \State \(\displaystyle \textit{object} \gets \sum_{j=1}^d (x_j - c_{t,j})^2\)
         \Comment{\textit{\small Quadratic form measuring distance from the centre \(c_t\).}}
       
       \State \( m = \textsc{Optimiser}(\mathrm{min},\;\textit{object},\;[H^+ \cup H^-])\)
       \State \( M = \textsc{Optimiser}(\mathrm{max},\;\textit{object},\;[H^+ \cup H^-])\)
         \Comment{\textit{\small Min and max distance-squared.}}
       
       \If{ $m$ and $M$ both exist \textbf{and} $m < r_t^2 < M$ }
         \State \Return \textbf{true}
         \Comment{\textit{\small The feasible set intersects the sphere of radius \(r_t\).}}
       \Else
         \State \Return \textbf{false}
       \EndIf
     \EndProcedure
   \end{algorithmic}
\end{algorithm}

\noindent
If \( \mathsf{QP_{min}} \) has a positive-definite Hessian, it can be solved efficiently using interior point methods; see~\cite{goldfarb1991primal, leulmi2024interior}. The best known complexity is \( O(d^\omega n L) \), where \( \omega \approx 2.373 \). For constant \( d \), this becomes linear in \( n \). 

For \( \mathsf{QP_{max}} \), the objective is not convex and the problem is NP-hard in general. However, by bounding the number of active constraints \( k \) in the feasible region, we achieve a runtime of \( O(n^2 k^d) \). This combinatorial scaling makes our method tractable in settings where the feasible region has low complexity.

In summary, Algorithm~\ref{Meyers} implements an efficient boundary-checking mechanism combining convex geometry, quadratic optimisation, and structure-aware pruning. This enables scalable testing of voter subsets in high dimensions and contributes a novel geometric filtering mechanism for preference-based decision models.

\section{Algorithm~\ref{MQP}: Feasibility Check for Quadratic Optimisation}

\begin{algorithm}[h]
\caption{Feasibility Check for Quadratic Optimisation}
\label{MQP}
\begin{algorithmic}[1]
\Require Goal function \( X^T Q X + P^T X \), threshold \( c \), constraints \( AX + C > 0 \) in \( d\)-dimensional space
\Ensure \textbf{true} if there exists an \(X\) such that \(\bigl(X^T Q X + P^T X\bigr) > c\) and \(AX + C > 0\); otherwise \textbf{false}

\Procedure{CheckFeasibility}{}
    \For{each \(i=1,\ldots,d\)}
       \State Solve:
       \[
         \max x_i \quad \text{subject to } A X + C > 0,
         \quad
         \min x_i \quad \text{subject to } A X + C > 0.
       \]
       \Comment{\textit{\small Each LP runs in \(O(n\,d^\omega)\), total \(2d\) runs \(\implies O(n\,d^{\omega+1})\).}}
       \If{any solution is unbounded}
          \State \Return \textbf{true} \Comment{\textit{\small Quadratic can be unbounded in that direction.}} 
       \EndIf
    \EndFor
    
    \State Let \(X_0\) be a feasible vertex found in the LP phase.
    \If{ \(X_0^T Q X_0 + P^T X_0 > c\)}
       \State \Return \textbf{true}
       \Comment{\textit{\small Already above threshold.}}
    \EndIf
    
    \State \(V \gets \{X_0\}, \quad \mathcal{Q} \gets \{X_0\}\)
      \Comment{\textit{\small Traverse vertices of the feasible region.}}
    
    \While{\(\mathcal{Q}\) is not empty}
        \State \( X_{\mathrm{current}} \gets \text{dequeue from }\mathcal{Q}\)
        \State \(\text{AdjVertices} \gets \textsc{FindAdjacentVertices}\bigl(X_{\mathrm{current}},\,A,\,C\bigr)\)
        
        \For{each \( X_{\mathrm{next}} \in \text{AdjVertices}\)}
            \If{ \(X_{\mathrm{next}} \notin V\)}
                \If{ \(X_{\mathrm{next}}^T Q X_{\mathrm{next}} + P^T X_{\mathrm{next}} > c\)}
                    \State \Return \textbf{true}
                \EndIf
                \State \(V \gets V \cup \{X_{\mathrm{next}}\}\)
                \State \(\mathcal{Q}\) \text{enqueue } \(X_{\mathrm{next}}\)
            \EndIf
        \EndFor
    \EndWhile
    
    \Return \textbf{false}
\EndProcedure

\vspace{1em}

\Procedure{FindAdjacentVertices}{$X_0,\;A,\;C$}
    \State \(\displaystyle \mathcal{A} \;\gets\; \{\,i \,\mid\, A_i X_0 + C_i = 0\,\}\)
    \Comment{\textit{\small Indices of active constraints at \(X_0\).}}
    \State \(\text{AdjVertices} \gets \emptyset\)
    
    \For{each subset \(S \subseteq \mathcal{A}\) of size \(d-1\)}
        \State Solve \(A_S\,v = 0\) for \(v\) 
        \Comment{\textit{\small Solving \(d-1\) linear equations: \(O(d^3)\).}}
        
        \For{each inactive constraint \(A_j X + C_j > 0\)}
            \State Solve \(A_j\,(X_0 + \lambda\,v) + C_j = 0\) for \(\lambda\)
            \Comment{\textit{\small One-dimensional root finding in \(O(d)\).}}
            
            \If{\(\lambda > 0\)}
                \State \(X_{\mathrm{next}} \gets X_0 + \lambda\,v\)
                \If{\(X_{\mathrm{next}}\) satisfies all \(A X + C > 0\)}
                    \State \(\text{AdjVertices} \gets \text{AdjVertices}\,\cup\,\{X_{\mathrm{next}}\}\)
                \EndIf
            \EndIf
        \EndFor
    \EndFor
    
    \Return \(\text{AdjVertices}\)
\EndProcedure
\end{algorithmic}
\end{algorithm}

\paragraph{Complexity Analysis.}

\begin{itemize}
\item \textbf{Step 1:} We solve \(2d\) linear programmes, each in \(O(n\,d^\omega)\) time, for a total \(O(n\,d^{\omega+1})\).  
\item \textbf{Step 2:} We do a graph-like traversal of the vertices. If \(K\) constraints (out of \(n\)) define the boundary, there are at most \(O(K^d)\) potential vertices. Checking adjacency costs up to \(O(n d^4\,k^{d})\) in the worst case. Evaluating \(X^T Q X + P^T X\) at each vertex adds \(O(d^2)\). 
Overall:
\[
   O\bigl(n\,d^{\omega+1}\bigr) \;+\; O\bigl(n d^4\,K^{d}\bigr).
\]
For large \(d\), the exponential term can dominate; the procedure is mainly practical for moderate \(d\) or \(K\).

\item If \(K<d\), the feasible region is unbounded in at least one direction. Hence \(\mathsf{QP_{max}}\) is effectively unbounded if feasible.

\item When there is only one candidate ($m=1$), each hyperplane intersects at one point (The position of the candidate). The feasible area, if it exists, will become cone-shaped. As a result, \(\mathsf{QP_{max}}\) would also be unbounded.

\end{itemize}

\subsection*{Implementation of Subroutines for Area Construction Algorithm}

Below we outline how to use the previously discussed procedures:

\begin{itemize}
\item \textsc{Test\_Intersection\_Relation}($(\bd S),B$):  
  \begin{enumerate}
  \item Run \(\textsc{Boundary\_Finding}\bigl(\bd^+ S \cup \{B\},\,\bd^-S\bigr)\). 
    \begin{itemize}
    \item If it returns \(\textbf{false}\), then \(\mathcal A^K(\bd S)\cap B=\emptyset\) (i.e.\ \textsc{Outside}).
    \end{itemize}
  \item Otherwise, run \(\textsc{Boundary\_Finding}\bigl(\bd^+ S,\,\bd^-S \cup \{B\}\bigr)\).
    \begin{itemize}
    \item If it returns \(\textbf{false}\), then \(\mathcal A^K(\bd S)\setminus B=\emptyset\) (i.e.\ \textsc{Inside}).
    \end{itemize}
  \item Otherwise, return \textsc{Intersect}.
  \end{enumerate}

\item \textsc{Update\_Boundary}($(\bd S),B$):
  \begin{enumerate}
    \item For each \(B_i \in \bd S\), call
    \[
      \textsc{Test\_Intersection\_Relation}\bigl((\bd S)\setminus\{B_i\}\cup\{B\},\,B_i\bigr)
    \]
    to test whether \(B_i\) remains a boundary for \(\mathcal A^K(\bd S)\cap B\).
    \item Similarly, test if \(B_i\) remains a boundary for \(\mathcal A^K(\bd S)\setminus B\).
    \item Collect these updates to form the boundary sets for the ``inside'' and ``outside'' parts.
  \end{enumerate}

\item \textsc{Single\_Boundary}($B_i,[B_1,\dots,B_{i-1}]$):
  \begin{enumerate}
    \item Run \(\textsc{Boundary\_Finding}\bigl(\{B_i\},\,\{B_1,\dots,B_{i-1}\}\bigr)\) to check if \(\mathcal A^K(\{B_i\})\) is non-empty.
    \item If non-empty, test each \(B_\ell \in \{B_1,\dots,B_{i-1}\}\) (via \textsc{Test\_Intersection\_Relation}) to see if it belongs to the boundary of \(\mathcal A^K(B_i)\).
  \end{enumerate}
\end{itemize}

\subsubsection*{Complexity Analysis of the Subroutines}
Below we assume:
\begin{itemize}
  \item \(\textsc{Boundary\_Finding}\) itself costs \(O(n^2\,K^d)\).
  \item \(K\) (in the canonical set \(\bd S\)) is at most some fixed bound, and
  \item \(n\) is the total number of balls.
\end{itemize}

\paragraph*{\textsc{Test\_Intersection\_Relation}($(\bd S),B$).}
The procedure calls \(\textsc{Boundary\_Finding}\) twice (once for ``$\cap B$'' and once for ``$\setminus B$''). Hence,
\[
  \textsc{Test\_Intersection\_Relation} = O\bigl(n^2\,K^d\bigr).
\]

\paragraph*{\textsc{Update\_Boundary}($(\bd S),B$).}
We iterate over each \(B_i\in (\bd S)\), where \(\lvert \bd S\rvert \le k\). For each ball \(B_i\), we do a call to \textsc{Test\_Intersection\_Relation} (or a small constant number of calls) to decide whether \(B_i\) remains in the boundary. Each such call is \(O(n^2\,K^d)\), so in total:
\[
  \textsc{Update\_Boundary}\bigl((\bd S),B\bigr)
  \;=\;
  O\bigl(k\,n^2\,K^d\bigr).
\]

\paragraph*{\textsc{Single\_Boundary}($B_i,[B_1,\dots,B_{i-1}]$).}
\begin{enumerate}
\item We first call \(\textsc{Boundary\_Finding}(\{B_i\},\,\{B_1,\dots,B_{i-1}\})\), costing \(O(n^2\,K^d)\).
\item If the region is non-empty, we then check each of the previous balls (up to \(n\) of them) with \(\textsc{Test\_Intersection\_Relation}\), each costing \(O(n^2\,K^d)\).

Hence the worst-case is
\[
   O\bigl(n^2\,K^d\bigr)
   \;+\;
   O\bigl(n \times n^2\,K^d\bigr)
   \;=\;
   O\bigl(n^3\,K^d\bigr).
\]
\end{enumerate}

\end{document}